\newcommand{\RR}{\mathbb{R}}
\newcommand{\CC}{\mathbb{C}}
\newcommand{\Tr}{\mbox{Tr}}
\renewcommand\bra[1]{{\langle{#1}|}}
\renewcommand\ket[1]{{|{#1}\rangle}}
\newtheorem{corollary}{Corollary}
\newtheorem{theorem}{Theorem}
\newtheorem{lemma}{Lemma}
\newtheorem{proposition}{Proposition}
\newtheorem{hypothesis}{Hypothesis}
\begin{document}

\title{
Estimating quantum Markov chains using coherent absorber post-processing and pattern counting estimators}

\author{Federico Girotti}
\affiliation{School of Mathematical Sciences, University of Nottingham, United Kingdom}
\affiliation{Centre for the Mathematics and Theoretical Physics of Quantum Non-Equilibrium Systems,
University of Nottingham, Nottingham, NG7 2RD, UK}
\affiliation{Department of Mathematics, Polytechnic University of Milan, Milan, Piazza L. da Vinci 32, 20133, Italy}

\author{Alfred Godley}
\affiliation{School of Mathematical Sciences, University of Nottingham, United Kingdom}
\affiliation{Centre for the Mathematics and Theoretical Physics of Quantum Non-Equilibrium Systems,
University of Nottingham, Nottingham, NG7 2RD, UK}

\author{M\u{a}d\u{a}lin Gu\c{t}\u{a}}
\affiliation{School of Mathematical Sciences, University of Nottingham, United Kingdom}
\affiliation{Centre for the Mathematics and Theoretical Physics of Quantum Non-Equilibrium Systems,
University of Nottingham, Nottingham, NG7 2RD, UK}

-----------------------------------------------------------------------------------------------------------
\begin{abstract}
We propose a two step strategy for 
estimating one-dimensional dynamical parameters of a quantum Markov chain, which involves quantum post-processing the output using a coherent quantum absorber and a ``pattern counting'' estimator computed as a simple additive functional of the  outcomes trajectory produced by sequential, identical measurements on the output units. We provide strong theoretical and numerical evidence that the estimator achieves the quantum Cram\'{e}r-Rao bound in the limit of large output size.

Our estimation method is underpinned by an asymptotic theory of translationally invariant modes (TIMs) built as averages of shifted tensor products of output operators, labelled by binary patterns. For large times, the TIMs form a bosonic algebra and the output state approaches a joint coherent state of the TIMs, whose amplitude depends linearly on the mismatch between system and absorber parameters. Moreover, in the asymptotic regime, the TIMs capture the full quantum Fisher information of the output state. While directly probing the TIMs' quadratures seems impractical, we show that the standard sequential measurement is an effective joint measurement of all the TIMs number operators; 
indeed, we show that counts of different binary patterns extracted from the measurement trajectory have the expected joint Poisson distribution. Together with the displaced-null methodology of \cite{GiGoGu} this provides a computationally efficient estimator which only depends on the total number of patterns. This opens the way for similar estimation strategies in continuous-time dynamics, expanding the results of \cite{DayouCounting}.

\end{abstract}


\maketitle
\section{Introduction}

Quantum statistical inference \cite{Helstrom1976,Holevo2011,Hayashi2005,Paris2008,TothReview,RafalReview,Tomo2,Albarelli2020,Sidhu_2020} provides the mathematical framework for enhanced metrology 
\cite{ Metrology1,Fujiwara2008,Giovannetti2011, Escher11,Metrology2,Girolami14,Smirne16,Seveso2017,Haase18,Metrology4,Rossi20,Sisi21}, imaging \cite{Tsang16,Tsang21,Lupo20,Fiderer21,Oh21},  waveform 
and noise 
estimation \cite{TWC11,Berry2015,Ng16,Norris16,Shi23,Sung19,Tsang23}, 
and quantum sensing applications \cite{Dowling,Degen2017,Pezze18,Marciniak2022,Zwick2023} including  time keeping \cite{Robinson24}, magnetometry \cite{Jones2009, Jan2021,Brask2015,ARPG17}, biomedical sensing \cite{Aslam2023}, 
thermometry \cite{Correa2015, Mehboudi2019}, gravitational wave detection \cite{GW1,GW2,GW3,GW4}.

The cornerstone of quantum estimation is the quantum Cram\'{e}r-Rao bound (QCRB)  \cite{Holevo2011,Helstrom1976,Belavkin76, QCR1} 
which places a fundamental restriction on the precision in estimating unknown  parameters of a quantum state. 
For one-dimensional parameters the bound is attainable in the limit of many copies, by measuring a specific observable called the symmetric logarithmic derivative, which is the quantum analogue of the classical score function. 
However, when dealing with complex models involving correlated states of many-body systems,
optimal measurements may be hard to compute and implement in practice.
Therefore, it is particularly important 
to devise \emph{realistic} measurement schemes which allow the estimation of unknown parameters with close to optimal precision, by means of \emph{computationally efficient} estimators.

In this paper we provide a general measurement and data processing protocol for estimating an arbitrary  dynamical parameter of a \emph{quantum Markov chain} (QMC), which satisfies the above criteria. A QMC is a discrete time model of an open quantum system, in which the system interacts successively with a sequence of identically prepared ``noise units'' representing the environment, cf. Figure \ref{fig:intro} a). The setup is similar to Haroche's photon-box one-atom maser \cite{Haroche} and to that used in quantum collision models \cite{Ciccarello22}, and provides a physical mechanism for generating versatile many-body states such  as matrix product states \cite{PerezGarciaWolfCirac,SchonSolanoVerstraeteCiracWolf} and finitely correlated states \cite{FannesNachtergaeleWerner,FannesNachtergaeleWerner2}. By discretising time, QMCs can  be used to model continuous-time dynamics of a Markovian open system coupled with Bosonic input-output channels \cite{AttalPautrat,VerstraeteCirac,Gough04}.

\begin{figure*}[!ht]
    \centering
   \includegraphics[width=\linewidth]{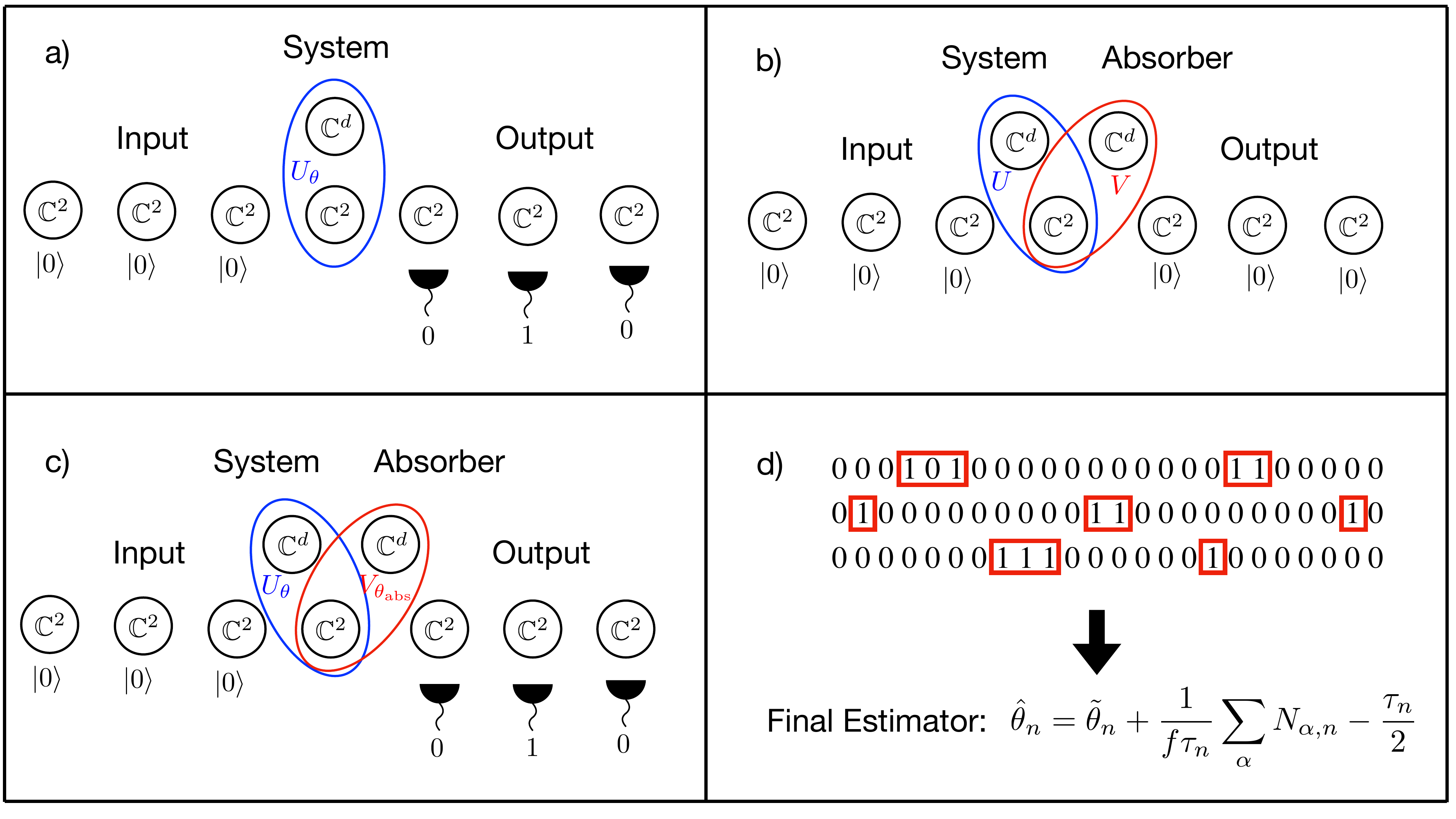}
    \caption{Basic elements of the pattern counting estimator. Panel a) A quantum Markov chain as a system interacting sequentially with the environment via a parameter dependent unitary $U_\theta$. The first stage estimator $\tilde{\theta}_n $ is obtained by performing a standard sequential measurement on the output and equating empirical and expected counts. Panel b) Post-processing the output using a coherent absorber. When system and absorber parameters match, the output is identical to the input (vacuum) Panel c) After the first estimation stage the absorber is fixed at a value $\theta_{\rm abs} = \tilde{\theta}_n -\delta_n$ where $\tilde{\theta}_n$ is the preliminary estimator and $\delta_n$ is the parameter shift required by the displaced-null measurement theory \cite{GiGoGu}. The output generated by the system and absorber dynamics with unitary $V_{\theta_{\rm abs}}U_\theta$ is measured sequentially in the standard basis. Panel d) Given a measurement trajectory, excitation patterns are identified as binary sequences starting and ending with a $1$ separated by long sequences of $0$s. The final estimator is a correction to the preliminary estimator which depends only on the total number of patterns $\sum_{\alpha} N_{\alpha, n}$, the QFI $f$ at $\tilde{\theta}_n$ and 
    the displacement parameter $\tau_n$.}
    \label{fig:intro}
\end{figure*}

For clarity, it is useful to distinguish between two mainstream approaches to parameter estimation in quantum open systems. In the setting of \cite{Huelga97,Benatti14, Smirne16,Sekatski2017,DDCS,ZhouZhangPreskillJuang,GoreckiZhouSisiJiangDemko}, the quantum system undergoes a noisy evolution depending on an unknown parameter, and the experimenter tries to extract information about the parameter by repeatedly applying instantaneous direct measurements and control operations while the system is evolving. In contrast, in this paper we adopt the setting commonly used in quantum optics and input-output theory 
\cite{GardinerZoller,WisemanMilburn,GJ09,CKS17} where the experimenter does not have direct access to the system but can measure the output field of an environment channel coupled to the system. This allows the experimenter to track the conditional state of the system by means of stochastic filtering equations \citep{Belavkin94,DCM,WM93,BvHJ07,Carmichael} and control it using feedback.
As these techniques require full knowledge of the system's dynamical parameters, it is important to devise tools for estimating such parameters from the stochastic trajectory of the measurement record. 

Since the early works \cite{Mab96,GW01}, many aspects of continuous-time estimation have been investigated, including adaptive estimation \cite{Berry02,Wiseman04} filtering methods \cite{Ralph11,CG09,Six15}, Heisenberg scaling \citep{GutaMacieszczakGarrahanLesanovsky,ARPG17,Genoni18,Ilias22}, sensing with error correction \citep{Plenio16}, Bayesian estimation \citep{GM13,Negretti13,KM16,Ralph17,Zhang19}, quantum smoothing \citep{Tsang09,T09,Tsang10,Guevara15}, 
estimation of linear systems \citep{GY16,Genoni17,Levitt17,LGN18}, central limit and large deviations theory for trajectories \cite{Garrahan10, GutavanHorssen15,Burgarth15,Garrahan19}, concentration bounds for time averaged observables \cite{BR21,GiGaGu23}, estimation with feedback control \cite{Fallani22}. However, while the the ultimate precision limit  can be expressed in terms of the output \emph{quantum Fisher information} (QFI) \cite{Guta2011,Molmer14,GutaCB15,Guta_2015,Guta_2017,Genoni17}, and can in principle be attained by measuring a certain output observable, standard  measurement protocols such as counting and homodyne typically do not achieve this limit. Therefore, attaining this bound with realistic measurements has been an important open problem in the field.

Two recent papers \cite{Godley2023,DayouCounting} have addressed this problem by introducing the idea of \emph{quantum post-processing} of the output state using a \emph{quantum coherent absorber} \cite{StannigelRAblZoller}. For a given QMC reference dynamics, the absorber takes the system's output as its own input, and is characterised by the property that it ``reverts'' the action of the system so that the absorber's output is a trivial product state (vacuum in continuous-time dynamics), cf. Figure \ref{fig:intro} b). In the statistical estimation framework, the absorber is set to a particular reference parameter of the QMC dynamics, which is kept fixed throughout the protocol. A small deviation of the true parameter from this value will lead to non-trivial output statistics which can be used to estimate the deviation as illustrated in Figure \ref{fig:intro} c). In \cite{Godley2023}, two of the present authors proved that the QCRB can be achieved by performing sequential, adaptive measurements on the output units, after the interaction with the coherent absorber. In addition, the adaptive measurement can be implemented efficiently in a Markovian fashion. The work \cite{DayouCounting} deals with the same problem but in the continuous-time setting, and proposes to perform a standard counting measurement (instead of an adaptive one) in conjunction with post-processing using the coherent absorber. In both papers the final estimator was computed from the measurement trajectory using the maximum likelihood method, which becomes computationally expensive for long trajectories. 

The strategy proposed here is similar to that of \cite{DayouCounting}, albeit in discrete rather than continuous time, but strengthens it in several important aspects. Firstly, we employ the technique of \emph{displaced-null measurements} \cite{GiGoGu} to 
provide a precise recipe for choosing the coherent absorber parameter. As we explain below, this is an important technical detail, as the intuitive choice of absorber parameter \emph{fails} to achieve the QCRB in the limit of long trajectories. Secondly, our strategy employs a two step adaptive procedure which allows us to compute the final estimator as a simple linear transformation of the total number of ``pattern counts'' which can be easily extracted from the measurement trajectory, cf. Figure \ref{fig:intro} d). This circumvents the computational cost associated with other estimators such as maximum likelihood, and the lack of theoretical results on their performance, specific to this context. Thirdly, we provide strong theoretical evidence that the final estimator achieves the QCRB in the limit of large times. This is based on a novel asymptotic representation of the output in terms of \emph{translationally invariant modes} which obey canonical commutation relations and whose state is shown to satisfy the \emph{quantum local asymptotic normality} property \cite{LAN1,LAN3,LAN5,LAN6}. This allows us to reduce the QMC estimation problem to a more familiar Gaussian estimation one, and cast the sequential output measurement as a counting measurement of the Gaussian modes.

We now give a brief summary of the two stages estimation strategy proposed here and the related mathematical results. We consider a QMC whose dynamics depends on a one-dimensional parameter 
$\theta$ which we aim to estimate by measuring the output state produced after $n$ time steps. In the first stage we run the QMC dynamics with the unknown parameter $\theta$ for 
$1\ll\tilde{n}\ll n$ time steps and measure the noise units in a fixed basis, cf. Figure \ref{fig:intro} a). From the total counts statistics we construct a rough estimator $\tilde{\theta}_n$ by matching the empirical frequency to its expected value. In general this estimator is neither optimal nor unbiased, but its mean square error has the standard $1/\tilde{n}$ scaling \cite{GiGaGu23}. Its role is to reduce the range of the 
unknown parameter to a shrinking region, and allow the second stage measurement to be optimal in this region, see \cite{GiGoGu} for details in the case of independent samples. In the second stage we run the system \emph{and} absorber QMC for the remaining $n-\tilde{n}$ time steps and measure the output in the standard basis. If the absorber parameter matched the true system parameter $\theta$, this measurement would produce a string of $0$s (corresponding to no counts in continuous-time) cf. Figure \ref{fig:intro} b). Therefore, it would seem natural to choose the absorber parameter to be $\tilde{\theta}_n$, our best guess at the unknown parameter $\theta$. However, this choice is unsuitable since for small deviations $\Delta_n = \theta-\tilde{\theta}_n$, the counting statistics depends quadratically on $\Delta_n$, which prevents the estimation of $\theta$ at standard $1/n$ rate. This non-identifiability issue is explained in detail in \cite{GiGoGu}, which also provides the solution to this problem. We deliberately set the absorber parameter at $\theta_{\rm abs}= \tilde{\theta}_n -\delta_n$, which is away from the best guess by a small ``displacement'' $\delta_n\downarrow 0$ chosen to be larger than the uncertainty $|\Delta_n|$. This allows us to unambiguously identify $\theta$ from counts statistics. Stage two of the estimation procedure is illustrated in Figure \ref{fig:intro} c).

We turn now to the question of estimating $\theta$ from the counts trajectory $\omega= (\omega_1,\omega_2,\dots, \omega_n)$ of the second stage measurement (setting $n-\tilde{n}$ to $n$ for simplicity). Since $\theta-\theta_{\rm abs}$ is vanishingly small (even with the extra displacement), $\omega$ will typically contain a small number of $1$s while most of the outcomes will be $0$, cf. Figure \ref{fig:intro} d). This allows us split the trajectory into long sequences of $0$s and in between them, binary ``excitation patterns'' starting and ending with a $1$. For each pattern

$\alpha$ (e.g. $1$, $11$, $101$ etc.) we count the number of occurrences $N_{\alpha,n}$. In Theorem \ref{thm:trajs} we show that in the limit of large $n$ the counts $N_{\alpha,n}$ become independent Poisson variables whose intensities are $\lambda_\alpha u^2$ where $u= \sqrt{n}(\theta-\theta_{\rm abs})$ is the ``local parameter'' and $\lambda_\alpha$ is a model dependent coefficient which can be computed explicitly. Moreover the total Fisher information of the Poisson variables is equal to the output QFI, which shows that the pattern counts statistics capture the full information of the output state. Using this asymptotic behaviour, we construct a simple estimator $\hat\theta_n$ (cf. equations \eqref{eq:theta.hat.final} and \eqref{eq:u.hat.final}) which is linear in the total pattern count, and we argue why it should achieve the QCRB in the limit of large $n$. To summarise, the two step procedure provides a computationally and statistically efficient estimation method which involves only standard basis measurements and a minimal amount of ``quantum post-processing'' implemented by the coherent absorber.

For a more in-depth understanding of why the  excitation pattern counts have asymptotically Poisson distributions, we refer to sections \ref{sec:TIM} and \ref{sec:CLTPoisson} where we develop a theory of \emph{translationally invariant modes} (TIMs) of the output. These modes turn out to capture all statistical information about the unknown parameter, and can be measured simultaneously and optimally by performing the sequential standard output measurement. For each excitation pattern $\alpha=(\alpha_1,\dots ,\alpha_k)\in \{0,1\}^k $ we define the creation operator $A^*_\alpha(n)$ on the output chain of length $n$. This consists of a running average 
$$
A^*_\alpha(n)=\frac{1}{\sqrt{n}}\sum_{i=1}^{n-k+1}\sigma^\alpha_i
$$
where $\sigma^\alpha_i$ is the tensor product of the type 
$\sigma^\alpha = \sigma^{\alpha_1}\otimes \dots \otimes \sigma^{\alpha_k} $ where $\sigma^0= \mathbf{1}$ and $\sigma^1 = \sigma^+ = |1\rangle \langle 0|$, with first tensor acting on position $i$ of the output chain. In Proposition \ref{prop:fock.states} and Corollary  \ref{prop:creation.action} we show that asymptotically with $n$, by applying the creation operators to the reference (vacuum) state $|0\rangle^{\otimes n}$, we obtain Fock-type states with different excitation pattern numbers. The creation and annihilation operators satisfy the Bosonic commutation relations with each excitation pattern being an independent mode. For large $n$, a Fock state is a superposition of basis states consisting of long sequences of $0$s interspersed with the corresponding patterns appearing in any possible order. One of our key results, Theorem \ref{thm:limdistribution} shows that when the gap between system and absorber parameters scales as $\theta - \theta_{\rm abs}=u/\sqrt{n}$, the quadratures of the excitation pattern modes satisfy the  Central Limit Theorem and the corresponding joint state is a product of coherent states whose amplitudes are linear in the local parameter $u$. The total QFI of this multimode coherent state is equal to the output QFI, showing that the TIMs contain all statistical information about the dynamics. We also prove separately, that the number operators of the TIMs have asymptotic Poisson distributions, as expected for a coherent state. Together with the result of Theorem \ref{thm:trajs}, this completes a circle of ideas, which played a crucial role in formulating our estimation strategy. In a nutshell, when looking at the output from the perspective of the TIMs, one deals with a simple Gaussian estimation problem. Using the displaced-null method, we can achieve the QCRB by measuring the number operators of the TIMs and such a measurement can be implemented by simple sequential counting measurements followed by the extraction of pattern counts from the measurement trajectory $\omega$.

The paper is organised as follows. In section \ref{sec:review.null-measurements} we give a brief review of quantum estimation theory and the displaced-null measurement technique developed in \cite{GiGoGu}. In section \ref{sec:QMC} we introduce the notion of QMC and the estimation problem, together with the idea of quantum post-processing using a coherent absorber. In section \ref{sec:TIM} we define the translationally invariant modes of the output and establish their Fock space properties. In section \ref{sec:CLTPoisson} we show that the restriction of the output state to the TIMs is a coherent state whose amplitude is linear in the local parameter and whose QFI is equal to the output QFI (cf. Theorem \ref{thm:limdistribution} and Corollary \ref{coro:fisher}). 
In section \ref{sec:limit.th.counting} we establish that the excitation pattern counts obtained from the sequential output measurement have asymptotically Poisson distribution (cf. Theorem \ref{thm:trajs}). In section \ref{sec:estimator} we formulate our measurement and 
estimation strategy and define the "pattern counts" estimator. Finally in section \ref{sec:numerics} we present results of a simulation study confirming the earlier theoretical results.

\section{Quantum estimation and the displaced null measurement technique}
\label{sec:review.null-measurements}

In this section we give a brief overview  of the quantum parameter estimation theory \cite{Hayashi2005, Paris2008,TothReview,RafalReview,Tomo2,Albarelli2020,Sidhu_2020} used in this paper, with an emphasis on asymptotic theory and the displaced-null measurement technique developed in our previous work \cite{GiGoGu}. In particular, we explain  why this method is asymptotically optimal, by employing the Gaussian approximation technique called local asymptotic normality 
\cite{LAN1,LAN2,LAN3,LAN4,LAN5,LAN6,Yamagata13,Fujiwara20,Fujiwara22}. Later on, this picture will guide our intuition when dealing with the Markov estimation problem. For our purposes it suffices to discuss the case of one-dimensional parameters, and we refer to \cite{GiGoGu} for the multi-dimensional setting.

Let $\rho_\theta\in M(\mathbb{C}^d)$ be a family of quantum states depending smoothly on a one-dimensional parameter 
$\theta$. Consider a measurement described by a positive operator valued measure 
$\{M_1,\dots, M_k\}$ and let $X$ be the measurement outcome with probability distribution $p_\theta(X=i) = {\rm Tr}(\rho_\theta M_i)$. The quantum Cram\'er-Rao bound (QCRB) \cite{Helstrom1976,Holevo2011,Belavkin76,YuenLax76,QCR1,QCR2} states that the variance of any unbiased estimator $\hat{\theta} =\hat{\theta}(X)$ is lower bounded as
$$
{\rm Var}(\hat{\theta}) =\mathbb{E}_\theta(\hat{\theta}-\theta)^2 \geq 
F_\theta^{-1}
$$
where $F_\theta$ is the quantum Fisher information (QFI) defined as 
$F_\theta = {\rm Tr} (\rho_\theta\mathcal{L}_\theta^2)$ with $\mathcal{L}_\theta$ the symmetric logarithmic derivative (SLD) operator which satisfies 
$$\frac{d\rho_\theta}{d\theta} =\frac{1}{2}\left(\mathcal{L}_\theta \rho_\theta + \rho_\theta \mathcal{L}_\theta\right).
$$
In general, the QCRB is not achievable when only a single copy of $\rho_\theta$ is available. However, the bound \emph{is} attainable in the asymptotic limit of large number of samples by the following two step adaptive procedure \cite{GillMassar}. Given $n$ copies of $\rho_\theta$ one can use a small proportion of the samples (e.g. $\tilde{n} = n^{1-\epsilon}$ for a small $\epsilon >0$) to compute a preliminary (non-optimal) estimator 
$\tilde{\theta}_n$ of $\theta$; reasonable estimators will concentrate around $\theta$ such that 
$|\tilde{\theta}_n-\theta|=O (n^{-1/2+\epsilon})$ with high probability, which will be assumed throughout. In the second step, one measures the SLD operator 
$\mathcal{L}_{\tilde{\theta}_n}$ on each of the remaining copies. If $X_1,\dots , X_{n^\prime}$ are the outcomes of these measurements (with $n^\prime = n-\tilde{n}$) then the estimator
\begin{equation}
\label{eq:two.step.SLD}
\hat{\theta}_n:= \tilde{\theta}_n+ \frac{1}{F_{\tilde{\theta}_n}n^\prime}\left(\sum_{i=1}^{n^\prime} X_i\right)
\end{equation}
is \emph{asymptotically optimal} in the sense that 
\begin{equation}
\label{eq:asymptotic.achievability.QCRB}
n\mathbb{E}_\theta(\hat{\theta}_n -\theta)^2 \to F_\theta^{-1}
\end{equation} 
in the limit of  large $n$ and in addition $\hat{\theta}_n$ is \emph{asymptotically normal}, i.e. 
$\sqrt{n}(\hat{\theta}_n -\theta)$ converges in distribution to the normal $N(0, F_\theta^{-1})$. 

However, for certain models including that considered in this paper,  measuring the SLD may not be feasible experimentally. Instead, we will  use a different method called \emph{displaced-null measurement} \cite{GiGoGu}, which aims to estimate the parameter of a \emph{pure state models} 
$\rho_\theta = |\psi_\theta\rangle \langle \psi_\theta|$ 
by measuring each copy in a basis that contains the vector $|\psi_{\tilde{\theta}}\rangle$ with $\tilde{\theta}$ close to the true parameter 
$\theta$. The Fisher information of such a measurement is known to converge to the QFI as $\tilde{\theta}$ approaches $\theta$ \cite{NullQFI1,NullQFI2,NullQFI3}. This suggests that the QCRB can be achieved asymptotically by using a two-step strategy similar to the SLD case: one first obtains a preliminary estimator $\tilde{\theta}_n$ and then measures each copy in a basis containing the vector $|\psi_{\tilde{\theta}_n}\rangle$. However, it turns out that this ``null measurement" strategy fails due to the fact that for small deviations from $\tilde{\theta}_n$, the outcome probabilities depend on $(\theta-\tilde{\theta}_n)^2$ and one cannot distinguish between left and right deviations from $\tilde{\theta}_n$, cf. \cite{GiGoGu} for the precise mathematical statement. This non-identifiability issue can be sidestepped by deliberately changing the reference parameter from $\tilde{\theta}_n$ to 
$\theta_0:=\tilde{\theta}_n - \delta_n$ where $\delta_n =n^{-1/2+3\epsilon}$, so that $\theta= \theta_0 +(u+\tau_n)/\sqrt{n}$ with 
$\tau_n=n^{3\epsilon}$. The choice of $\tau_n$ is not unique but we refer to \cite{GiGoGu} for the general requirements. Since 
$|\theta -\tilde{\theta}_n| =O(n^{-1/2+\epsilon}) \ll \delta_n$, it means that 
$\theta$ lies on the right side of $\theta_0$ and can be unambiguously identified from the outcomes of a measurement in a basis $\{|e_0\rangle, \dots , |e_{d-1}\rangle \}$ such that 
$|e_0\rangle \equiv 
|\psi_{\theta_0}\rangle$. In addition, as $\theta_0$ approaches $\theta$ in the limit of large $n$, the displaced-null measurement exhibits the optimality properties of the ``null measurements" without sharing their non-identifiability issues.

Let $X_1,\dots , X_{n\prime}\in \{0, 1,\dots , d-1\}$ 
be the independent outcomes of basis $\{|e_0\rangle, \dots , |e_{d-1}\rangle \}$ measurements performed on the remaining $n^\prime = n-\tilde{n}$ systems, and let $N_{j,n}$ denote the counts of the outcome $j=0,\dots d-1$. The displaced-null estimator based on the two-stage measurement strategy is defined as follows
$$
\hat{\theta}_n := \tilde{\theta}_n +\hat{u}_n/\sqrt{n} 
$$
with local parameter estimator
$$\hat{u}_n= 
\frac{2}{\tau_n f}\sum_{j=1}^{d-1}N_{j,n} - \frac{\tau_n}{2}
$$
where $f= 4 \|\dot{\psi}_{\tilde{\theta}_n}\|^2$ is the QFI at $\tilde{\theta}_n$. The expression of $\hat{u}_n$ is derived in section \ref{sec:estimator} by taking into account that the pattern counts are asymptotically normal (conditional to the preliminary estimator $\tilde{\theta}_n$) and finding the unbiased linear combination with the smallest variance, which turns out to be related to the total number of patterns as above. 
The estimator $\hat{\theta}_n$ is asymptotically optimal in the sense of equation \eqref{eq:two.step.SLD} and asymptotically normal.

In appendix \ref{sec.LAN&DNM} we give more insight into this method by analysing its  properties using the theory of local asymptotic normality \cite{LAN1,LAN2,LAN3,LAN4,LAN5,LAN6}. In a nutshell, for large $n$ the original model becomes equivalent to a Gaussian one consisting of a multi-mode coherent state whose amplitude depends linearly on the parameter, while the SLD and displaced-null strategies translated into measuring a quadrature and respectively the modes number operators of the shifted state. While this i.i.d. setup is different from the Markovian one studied in this paper, the overall asymptotic picture is similar and reader may find the i.i.d. case useful in guiding the intuition.

\section{Quantum Markov chains and post-processing  using coherent absorbers}
\label{sec:QMC}

We start this section by reviewing the problem of  estimating dynamical parameters of quantum Markov chains (QMC). We then introduce the notion of quantum coherent absorber, which will play a key role in designing an optimal sequential measurement strategy.

A quantum Markov chain consists of a system interacting successively with a chain of independent "noise units" (the input) modelling the environment. In this paper the system's space is taken to be $\mathcal{H}_s \cong\mathbb{C}^d$ while the "noise units" are two dimensional systems prepared in the state $|0\rangle$ where $\{|0\rangle, |1\rangle\}$ is the standard basis in $\mathcal{H}_n\cong\mathbb{C}^2$. We expect that the theory developed here works for general finite dimensional inputs, but we restrict here to this minimal setup which can be used to represent a discretised version of a continuous-time Markovian model with a single Bosonic field \cite{PautratAttal}.

At each time step  the system interacts with the input unit via a unitary
$U$ on $\mathcal{H}_s\otimes \mathcal{H}_n$. If the system is initially prepared in a state $|\varphi\rangle$, the joint state of system and noise units (output) after $n$ times steps is 
\begin{eqnarray}\label{eq:total.state}
|\Psi_n\rangle &=& U_n |\varphi\otimes 0^{\otimes n}\rangle 
\\
&=& 
U^{(n)}\cdot\dots \cdot U^{(2)}\cdot U^{(1)}|\varphi\otimes 0^{\otimes n}\rangle\in 
\mathcal{H}_s\otimes \left(\mathcal{H}_n\right)^{\otimes n}
\nonumber
\end{eqnarray}
where $U^{(i)}$ is the unitary acting on the system and the $i$-th noise unit. By expanding the state \eqref{eq:total.state} with respect to the standard product basis in the output we have
\begin{equation}\label{eq.Psi_n}
|\Psi_n\rangle = 
\sum_{i_1,\dots , i_n \in \{0,1\}} 
K_{i_n}\dots K_{i_1} |\varphi\rangle \otimes |i_{1}\rangle\otimes \dots\otimes |i_{n}\rangle
\end{equation}
where $K_i= \langle i|U|0\rangle$ are Kraus operators acting on $\mathcal{H}_s$.

From equation \eqref{eq:total.state} it follows that the reduced system state of the system at time $n$ is given by 
$$
\rho^{\rm sys}_n := 
{\rm Tr}_{\rm out}
(|\Psi_n\rangle \langle \Psi_n|) = 
\mathcal{T}_*^n(\rho^{\rm sys}_{\rm in}), \quad \rho^{\rm sys}_{\rm in} = |\varphi\rangle\langle \varphi|,
$$
where the partial trace is taken over the output noise units, and 
$
\mathcal{T}_*:
L_1(\mathcal{H}_s)\to L_1(\mathcal{H}_s)
$ 
is the Markov transition operator 
(Schr\"{o}dinger picture)
$$
\mathcal{T}_*: \rho \mapsto \sum_{i\in \{0,1\}} 
K_i \rho K_i^*
$$
whose dual (Heisenberg picture) will be denoted by $\mathcal{T}$. Here $L_1 (\mathcal{H}_s)$ denotes the trace-class operators on $\mathcal{H}_s$, which in this case is isomorphic to $M(\mathbb{C}^d)$.

On the other hand, the reduced state of the output is 
\begin{eqnarray}
\rho^{\rm out}_n &:=& 
{\rm Tr}_{\rm sys} (|\Psi_n\rangle\langle \Psi_n|)
\label{eq:output.state}\\
&=& 
\sum_{{\bf i},{\bf j}\in \{0,1\}^n} 
\langle \varphi |
K^*_{\bf j}K_{\bf i} 
|\varphi\rangle \cdot
|{\bf i}\rangle
\langle {\bf j}|
\nonumber
\end{eqnarray}
where $K_{\bf i}:= K_{i_n}\dots K_{i_1}$ for ${\bf i}= (i_1,\dots i_n)$.

\begin{hypothesis} 
Throughout the paper we will assume that the dynamics is \emph{primitive} in the sense that $\mathcal{T}_*$ has a unique stationary state $\rho^{\rm ss}>0$ so that $ \mathcal{T}_*(\rho^{\rm ss})= \rho^{\rm ss}$ and it is \emph{aperiodic}, i.e. the only eigenvalue of $\mathcal{T}_*$ with unit absolute value is $1$.
\end{hypothesis}

In order to make the presentation of our results more accessible, we will show how the theory we develop applies to a simple example; we will consider a two dimensional system whose reduced dynamics (precession around $z$ axis with bit-flip noise) is given by the following Lindblad evolution:
\begin{equation} \label{eq:ctex}
\frac{d\rho_t}{dt}=-i\omega[\sigma_z,\rho_t]+\gamma(\sigma_x\rho_t \sigma_x-\rho_t),\end{equation}
where $\omega, \gamma>0$ and $\sigma_x$, $\sigma_z$ is the usual notation for $z$ and $x$ Pauli matrices. Both the frequency $\omega$ or the coupling strength $\gamma$ are natural parameters to estimate. In the input-output formalism, the joint system and output state is given by a continuous matrix product state; considering a small enough time discretization $dt$, one can approximate it by a state of the form as in equation \eqref{eq.Psi_n} where the Kraus operators are given by 
$$K_0=\sqrt{1-\gamma_t} e^{-i\omega_t \sigma_z}, \quad K_1=e^{-i\omega_t \sigma_z}\sqrt{\gamma_t}\sigma_x,$$
$\gamma_t=\gamma dt$, $\omega_t=\omega dt$
Both the continuous and the discrete time evolutions have the maximally mixed state $\mathbf{1}/2$ as unique stationary state.

\emph{Estimation of dynamical parameters}

We investigate the following quantum estimation problem: assuming that the dynamics depends smoothly on an unknown parameter 
$\theta\in \mathbb{R}$, we would like to estimate $\theta$ by performing measurements on the output state 
$\rho_n^{\rm out}$ generated after a number $n$ of interaction steps. In particular, we are interested in designing measurement strategies which achieve the highest possible precision, at least in the limit of large times. 

Let $\theta\mapsto U_\theta$ be a smooth map describing how the dynamics depends on an unknown parameter $\theta$, which is assumed to belong to an open bounded interval $\Theta$ of $\mathbb{R}$. We use similar notations $|\Psi_{\theta,n} \rangle, K_{\theta,i}, \mathcal{T}_\theta$ to denote the dependence on $\theta$ of the system-output state, Kraus operators, transition operator, etc. Two sequences of quantum statistical models indexed by time are of interest here: the system-output state 
$\mathcal{S}\mathcal{O}_n:= \{|\Psi_{\theta,n}\rangle :\theta\in \Theta\}$ defined in equation \eqref{eq:total.state} and the output state 
$\mathcal{O}_n:= \{\rho^{\rm out}_{\theta,n} : \theta\in \Theta\}$ defined in equation \eqref{eq:output.state}. While the former is more informative than the latter and easier to analyse, we are particularly interested in estimation strategies which involve only measurements on the output, hence the importance of the model $\mathcal{O}_n$. The following Theorem \cite{Guta_2015} shows that for primitive dynamics the QFI of both models scale linearly with $n$ with the \emph{same} rate, so having access to the system does not change the asymptotic theory. To simplify the expression of the QFI rate \eqref{eq:QFI.rate} we assume the following "gauge condition"
\begin{equation}
\label{eq:gauge}
\sum_j {\rm Tr} 
(\rho_\theta^{\rm ss}\dot{K}_{\theta,j}^*
K_{\theta,j}) =0.
\end{equation}
The condition \eqref{eq:gauge} means 
that $\sum_j 
\dot{K}_{\theta,j}^*
K_{\theta,j}$ belongs to the subspace $\{X: {\rm Tr}(\rho_\theta^{\rm ss}X) =0\} \subseteq B(\mathcal{H}_s)$ of bounded operators, on which 
the resolvent $\mathcal{R}_\theta:= ({\rm Id} - \mathcal{T}_\theta)^{-1}$ is well defined as the Moore-Penrose inverse. The condition can be satisfied by choosing the complex phase of the Kraus operators appropriately, or equivalently the phase of the standard basis in the noise unit space $\mathcal{H}_n$.
\begin{theorem}
\label{Th.QFI}
Consider a primitive discrete time Markov chain whose unitary $U_\theta$ depends smoothly on $\theta\in \Theta\subset \mathbb{R}$, and assume that condition \eqref{eq:gauge} holds true. The QFI $F^{\rm s+o}_n(\theta)$ of the system and output  state $|\Psi_{\theta,n}\rangle $ and the QFI $F^{\rm out}_n(\theta)$ of the output state $\rho^{\rm out}_{\theta, n}$ scale linearly with $n$ with the same rate: 
\begin{eqnarray}
\label{eq:QFI.rate}
&&\lim_{n\to\infty } \frac{1}{n} F^{\rm s+o}_\theta(n) =\lim_{n\to\infty } \frac{1}{n} F^{\rm out}_\theta(n) = f_\theta \\ 
&&=4\sum_{i=1}^k  {\rm Tr}\left[\rho^{\rm ss}_\theta\dot{K}_{\theta, i}^*\dot{K}_{\theta,i}\right] 
\nonumber\\ &&
+8\sum_{i=1}^k {\rm Tr} \left[{\rm Im} ( K_{\theta, i} \rho^{\rm ss}_\theta \dot{K}_{\theta,i}^* ) \cdot \mathcal{R}_\theta( {\rm Im} \sum_j\dot{K}_{\theta,j}^*
K_{\theta,j})\right]
\nonumber
\end{eqnarray}
where 
$\mathcal{R}_\theta$ is the Moore-Penrose inverse of ${\rm Id}- \mathcal{T}_\theta$.
\end{theorem}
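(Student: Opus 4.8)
The plan is to compute the QFI of the pure state $|\Psi_{\theta,n}\rangle$ first, since for pure states the QFI has the closed form $F_\theta = 4(\langle \dot\Psi_{\theta,n}|\dot\Psi_{\theta,n}\rangle - |\langle \Psi_{\theta,n}|\dot\Psi_{\theta,n}\rangle|^2)$. Differentiating the expression \eqref{eq:total.state}, $|\dot\Psi_{\theta,n}\rangle = \sum_{i=1}^n U^{(n)}\cdots \dot U^{(i)}\cdots U^{(1)}|\varphi\otimes 0^{\otimes n}\rangle$, so that $\langle \dot\Psi_{\theta,n}|\dot\Psi_{\theta,n}\rangle$ becomes a double sum over time steps $i,i'$ of matrix elements which, after tracing out everything to the right of $\min(i,i')$ and to the left of $\max(i,i')$, reduce to expressions involving $\mathcal{T}_\theta^m$ applied to operators built from $K_{\theta,j}$ and $\dot K_{\theta,j}$. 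The diagonal terms $i=i'$ give $n$ copies of $\sum_j \mathrm{Tr}[\rho^{\rm sys}_m \dot K_{\theta,j}^* \dot K_{\theta,j}]$ which, by primitivity, converges to $\sum_j \mathrm{Tr}[\rho^{\rm ss}_\theta \dot K_{\theta,j}^* \dot K_{\theta,j}]$; summing gives the leading linear term $4n\sum_j \mathrm{Tr}[\rho^{\rm ss}_\theta \dot K_{\theta,j}^* \dot K_{\theta,j}]$. The off-diagonal terms $i\neq i'$ produce, after resummation of the geometric-type series $\sum_{m\geq 0}\mathcal{T}_\theta^m$, the resolvent $\mathcal{R}_\theta=(\mathrm{Id}-\mathcal{T}_\theta)^{-1}$ acting on $\mathrm{Im}\sum_j \dot K_{\theta,j}^* K_{\theta,j}$, yielding the cross term. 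The role of the gauge condition \eqref{eq:gauge} is precisely to ensure that $\sum_j \dot K_{\theta,j}^* K_{\theta,j}$ lies in the range on which $\mathcal{R}_\theta$ is the well-defined Moore–Penrose inverse, and simultaneously to make the would-be divergent contribution from $|\langle \Psi_{\theta,n}|\dot\Psi_{\theta,n}\rangle|^2$ subleading (it contributes only $O(1)$ rather than $O(n^2)$).

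The main technical bookkeeping is to track which contributions are $O(n)$ and which are $O(1)$ or $o(n)$: the spectral gap of $\mathcal{T}_\theta$ (guaranteed by aperiodicity and primitivity) ensures $\|\mathcal{T}_\theta^m - |\rho^{\rm ss}_\theta\rangle\rangle\langle\langle \mathbf{1}|\| \leq C\mu^m$ for some $\mu<1$, so all the correction sums converge absolutely and contribute only lower order. After collecting terms, divide by $n$ and take $n\to\infty$ to obtain $f_\theta$ as in \eqref{eq:QFI.rate}; this establishes the formula for $F^{\rm s+o}_\theta(n)$. I expect the careful separation of the imaginary part (the cross term only sees $\mathrm{Im}$ because the real, symmetric part is absorbed into a pure phase / the $|\langle\Psi|\dot\Psi\rangle|^2$ subtraction) to be the most delicate accounting step.

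For the output model $\mathcal{O}_n$, the inequality $F^{\rm out}_\theta(n)\leq F^{\rm s+o}_\theta(n)$ is immediate from monotonicity of the QFI under the partial trace (a quantum channel). The reverse asymptotic inequality is the substantive part: one must show that discarding the $d$-dimensional system costs only $O(1)$ in QFI, i.e. $F^{\rm s+o}_\theta(n) - F^{\rm out}_\theta(n) = O(1)$. The cleanest route, following \cite{Guta_2015}, is to note that the system-output state can be purified/extended from the output state by appending the bounded-dimensional system, and to bound the QFI difference using, for instance, the chain-rule / extended-convexity type bounds for QFI, or an explicit SLD estimate showing the system's contribution to the logarithmic derivative is norm-bounded uniformly in $n$. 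Dividing by $n$ and letting $n\to\infty$ kills this $O(1)$ gap, giving equality of the rates.

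The step I expect to be the main obstacle is this last one — controlling $F^{\rm s+o}_\theta(n) - F^{\rm out}_\theta(n)$ uniformly in $n$ — since it requires more than monotonicity and genuinely uses that the system is finite-dimensional while the output grows; everything else is a (lengthy but routine) resolvent resummation under the spectral-gap estimate. Since this equality of rates is exactly the content of the cited Theorem from \cite{Guta_2015}, I would either reproduce that argument or invoke it directly, and devote the bulk of the present proof to the self-contained computation of the rate formula \eqref{eq:QFI.rate} for the system-output model.
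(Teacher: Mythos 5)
You should first be aware that the paper does not prove this statement at all: Theorem \ref{Th.QFI} is imported verbatim from \cite{Guta_2015} (``The following Theorem \cite{Guta_2015} shows that\dots''), so there is no internal proof to compare your proposal against. Judged on its own terms, your computation of the rate for the pure system--output model is the standard and correct route: writing $|\dot\Psi_{\theta,n}\rangle$ as a sum over the time step at which $U^{(i)}$ is differentiated, the diagonal terms reduce (after cancelling the unitaries to the right and using $\langle 0|\dot U^*\dot U|0\rangle=\sum_j\dot K_j^*\dot K_j$ together with $\mathcal{T}_*^{i-1}(\rho_{\rm in})\to\rho^{\rm ss}_\theta$) to the first term of \eqref{eq:QFI.rate}, the off-diagonal terms resum via the spectral gap into the resolvent term, and the gauge condition \eqref{eq:gauge} makes $\langle\Psi_{\theta,n}|\dot\Psi_{\theta,n}\rangle$ (which is automatically purely imaginary since $\sum_j K_j^*\dot K_j$ is anti-Hermitian) an $O(1)$ quantity, so the subtracted term is subleading. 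Your identification of the imaginary-part bookkeeping as the delicate step is fair.

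The genuine gap is exactly where you suspect it: the claim $F^{\rm s+o}_\theta(n)-F^{\rm out}_\theta(n)=O(1)$. Monotonicity under partial trace gives only one inequality, and the ``chain-rule / extended-convexity'' bounds you gesture at do not exist in a form that closes the other direction --- QFI is not superadditive over a correlated bipartition, and the finite dimensionality of the system does not by itself bound the gap, because the system--output correlations could in principle carry an extensive share of the information. The actual argument in \cite{Guta_2015} does not proceed by bounding the difference abstractly; it establishes the output rate directly (via a local-asymptotic-normality / explicit Markov covariance analysis of the output model), which is a substantially different and harder piece of work than the pure-state resummation. Since the paper itself only cites this result, deferring to \cite{Guta_2015} as you propose is acceptable, but you should not present the $O(1)$-gap route as if it were a minor technicality one could fill in with a generic inequality.
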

In the following, we will always assume that $f_\theta>0$ for every $\theta \in \Theta$. In general, the classical Fisher information associated to simple repeated measurements (measuring the same observable on each output unit) does not achieve the QFI rate $f_\theta$. However, the class of available measurements can be enlarged by unitarily "post-processing" the output before performing a standard measurement, so that  effectively one measures the original output in a rotated basis. While this shifts the difficulty from measurement to "quantum computation", it turns out that the post-processing can be implemented with minimal computational cost by employing  the concept of a coherent absorber introduced in \cite{StannigelRAblZoller}. Indeed \cite{Godley2023} demonstrated that the QFI rate \emph{is} achievable by combining post-processing by a coherent absorber with a simple adaptive sequential measurement scheme. Furthermore, \cite{DayouCounting}, argued that one can also achieve the QFI by performing  simple counting measurements in the output, without the need for adaptive measurements. Our goal is to revisit this scheme and to provide a new, computationally and statistically effective estimation strategy.

\subsection{Quantum postprocessing with a coherent absorber}
\label{sec:CA}

The working of the coherent absorber 
is illustrated in Figure \ref{fig:intro} b).  Consider a QMC with a fixed and known  unitary $U$. After interacting with the system, each output noise unit interacts with a separate $d$ dimensional system $\mathcal{H}_a \cong \mathbb{C}^d$ (the coherent absorber), via a unitary $V$. The system and absorber can now be regarded as a single doubled-up system which interacts with the input via the unitary $ W:= V U$ on $\mathcal{H}_s\otimes \mathcal{H}_a \otimes \mathcal{H}_n$, where $U$ acts on first and third tensors and $V$ on second and third. The defining feature is that the system plus absorber have a pure stationary state. One can arrange this by requiring 
$$
VU: 
|\chi^{\rm ss}\rangle \otimes |0\rangle \mapsto 
|\chi^{\rm ss}\rangle \otimes |0\rangle
$$
where $|\chi^{\rm ss}\rangle\in \mathcal{H}_s\otimes \mathcal{H}_a$ is a purification of the system stationary state, i.e. $\rho^{\rm ss} = {\rm  Tr}_{\rm abs} ( |\chi^{\rm ss}\rangle\langle \chi^{\rm ss}|)$. This 
implies that in the stationary regime the output is decoupled from system and absorber, and is in the "vacuum" state $|0\rangle^{\otimes n}$. 
We briefly recall a few expressions related to the construction of $V$ that will be useful later on, and refer to Lemma 4.1 in \cite{Godley2023} for more details; for clarity, in the following we will use the labels $S,A,N$ to indicate system, absorber and noise unit.
Let us consider a spectral representation of $\rho^{\rm ss}$ and the corresponding purification:
$$\rho^{\rm ss}=\sum_{i=1}^{d}\lambda_i \ket{i_S} \bra{i_S}, \quad \ket{\chi^{\rm ss}}=\sum_{i=1}^{d}\sqrt{\lambda_i} \ket{i_S} \otimes \ket{i_A}.$$
For simplicity we assume that the eigenvalues $\lambda_i$ are strictly positive and are ordered in decreasing order;  
one can check that the following vectors are orthonormal:
\[
\ket{v_i}=\sum_{k=0}^{1}\sum_{j=1}^{d} \sqrt{\frac{\lambda_j}{\lambda_i}}\bra{i_S}K_k \ket{j_S} \ket{j_A} \otimes \ket{k_N}, \quad i=1,\dots, d.
\]
For any choice of $v_{d+1},\dots, v_{2d}$ such that $\{v_1,\dots, v_{2d}\}$ is an orthonormal basis for $\CC^d \otimes \CC^2$ (the Hilbert space corresponding to the absorber and the ancilla), a suitable choice for $V$ is given by
\[
V=
\mathbf{1}_S \otimes \left (
\sum_{i=1}^{d} |i_A \otimes 0_N\rangle \otimes \bra{v_i}+ 
\sum_{i=d+1}^{2d} 
|i_A \otimes 1_N\rangle \otimes \bra{v_i} 
\right ) .
\]
Note that $V$ is not uniquely defined: there is freedom in the spectral resolution of $\rho^{\rm ss}$ if there are degenerate eigenvalues and in picking $v_{d+1},\dots , v_{2d}$. 
The Kraus operators corresponding to the reduced dynamics $W= VU$ of the system and the absorber together are given by the following expression:
\begin{eqnarray*}
\tilde{K}_k: &=&
\langle k| W|0\rangle =
\sum_{l=0}^{1} \bra{k_N} V  \ket{l_N} \bra{l_N} U \ket{0_N}\\
&=& \sum_{l=0}^{1}  K_l \otimes V_{kl}
\end{eqnarray*}
where $\mathbf{1}_{S} \otimes V_{kl}:=\bra{k_N} V  \ket{l_N}$ and $K_l\otimes \mathbf{1}_{A}:=\bra{l_N} U \ket{0_N}$ are the Kraus operators of $V$ and 
$U$, respectively. 
The following Lemma prescribes the structure of the blocks $V_{kl}$. 
We first define the ``recovery'' channel \cite{Petz86,Petz88,Barnum02} with Kraus operators $K^\prime_i= \sqrt{\rho} K_i^*\sqrt{\rho^{-1}}$ and note that they satisfy the normalisation condition and the recovery channel has $\rho$ as invariant state.

\begin{lemma} \label{lemm:amkr}
The absorber operators $V_{kl}$ are of the following form. The blocks $V_{0l}$ are determined as 
$
V_{0l} = K_l^{\prime T} 
$
where the transpose is taken with respect to the eigenbasis of $\rho$. Assuming that 
$\mathbf{1} - |V_{00}|^2$ and $\mathbf{1} - |V_{01}|^2$ are strictly positive, then the $V_{1l}$ blocks are determined up to an overall arbitrary unitary $u$
$$
V_{10} = u |V_{10}| , V_{11}= uw |V_{11}|
$$
where   
$
|V_{10}| = \sqrt{\mathbf{1} - 
|V_{00}|^2}
$, 
$|V_{11}| = \sqrt{\mathbf{1} - |V_{01}|^2 }
$ are fixed, as well as the unitary 
$w= -|V_{10}|^{-1} V_{00}^* V_{01} 
|V|_{11}^{-1}
$.

\end{lemma}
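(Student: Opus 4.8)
\emph{Proof proposal.}

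The plan is to establish the two halves of the statement separately: the $V_{0l}$ blocks by a direct computation from the explicit formula for $V$, and the $V_{1l}$ blocks by invoking unitarity of $V$ together with a polar decomposition argument. Throughout, $|X|:=(X^*X)^{1/2}$.

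\textbf{The $V_{0l}$ blocks.} First I would simply read off $V_{0l} = \langle 0_N|V|l_N\rangle$ from the given expression for $V$: since $\langle 0_N|V = \sum_{i=1}^{d}|i_A\rangle\langle v_i|$, one gets $V_{0l} = \sum_{i=1}^{d}|i_A\rangle\,\langle v_i|l_N\rangle$. Substituting the formula for $|v_i\rangle$ and using $\overline{\langle i_S|K_l|j_S\rangle} = \langle j_S|K_l^*|i_S\rangle$, this becomes $V_{0l} = \sum_{i,j}\sqrt{\lambda_j/\lambda_i}\,\langle j_S|K_l^*|i_S\rangle\,|i_A\rangle\langle j_A|$. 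On the other side, writing $K_l' = \sqrt{\rho}K_l^*\sqrt{\rho^{-1}}$ in the $\rho$-eigenbasis gives $\langle i|K_l'|j\rangle = \sqrt{\lambda_i/\lambda_j}\,\langle i|K_l^*|j\rangle$, hence $\langle i|K_l'^{\,T}|j\rangle = \sqrt{\lambda_j/\lambda_i}\,\langle j|K_l^*|i\rangle$; comparing matrix elements, with the absorber eigenbasis identified with the $\rho$-eigenbasis, yields $V_{0l} = K_l'^{\,T}$. The normalisation $\sum_l K_l'^{\,*}K_l' = \mathbf 1$ and invariance $\sum_l K_l'\rho K_l'^{\,*} = \rho$ follow from stationarity of $\rho$ and trace preservation of $\mathcal{T}_*$, as recorded above.

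\textbf{The $V_{1l}$ blocks.} Here I would write $V$ as a $2\times 2$ operator matrix indexed by the noise states, with $(k,l)$-entry $V_{kl}$, and exploit $V^*V = \mathbf 1$. The diagonal blocks give $V_{10}^*V_{10} = \mathbf 1 - |V_{00}|^2$ and $V_{11}^*V_{11} = \mathbf 1 - |V_{01}|^2$ (automatically positive semidefinite); under the hypothesis that these are strictly positive they are invertible, so $|V_{10}| = \sqrt{\mathbf 1 - |V_{00}|^2}$ and $|V_{11}| = \sqrt{\mathbf 1 - |V_{01}|^2}$ are fixed and invertible, and the polar decompositions $V_{10} = u|V_{10}|$, $V_{11} = u_2|V_{11}|$ have genuinely \emph{unitary} phases $u := V_{10}|V_{10}|^{-1}$, $u_2 := V_{11}|V_{11}|^{-1}$ (in finite dimension, invertibility of $V_{1l}^*V_{1l}$ forces invertibility of $V_{1l}$). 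Finally the $(0,1)$ off-diagonal block of $V^*V = \mathbf 1$, namely $V_{00}^*V_{01} + V_{10}^*V_{11} = 0$, rearranges to $|V_{10}|\,(u^*u_2)\,|V_{11}| = -V_{00}^*V_{01}$, i.e. $u^*u_2 = -|V_{10}|^{-1}V_{00}^*V_{01}|V_{11}|^{-1} =: w$, which is unitary as a product of unitaries; hence $V_{11} = u_2|V_{11}| = uw|V_{11}|$, which is the stated form. To confirm that $u$ is a genuine free parameter I would check the converse: for any unitary $u$, the operator built from $V_{0l} = K_l'^{\,T}$ and $V_{10} = u|V_{10}|$, $V_{11} = uw|V_{11}|$ satisfies the three block identities of $V^*V = \mathbf 1$ (the off-diagonal one using the defining relation for $w$), hence is unitary; this matches the non-uniqueness remark preceding the Lemma.

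\textbf{Main obstacle.} The argument is essentially bookkeeping in finite-dimensional linear algebra, so I do not expect a genuine obstacle. The two points requiring care are: (i) keeping track of the complex conjugation and transpose with respect to the $\rho$-eigenbasis in the first part, so that the identification $V_{0l} = K_l'^{\,T}$ comes out with the correct orientation; and (ii) making sure, in the second part, that the strict-positivity hypothesis on $\mathbf 1 - |V_{00}|^2$ and $\mathbf 1 - |V_{01}|^2$ is precisely what upgrades the polar-decomposition phases from partial isometries to unitaries, which is what makes $w$ well defined and unitary and renders the $V_{1l}$ blocks unique up to the single unitary $u$.
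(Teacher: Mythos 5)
Your proposal is correct and follows essentially the same route as the paper: the $V_{0l}$ blocks are read off directly from the explicit formula for $V$ and matched against $K_l^{\prime T}$ by a matrix-element computation in the $\rho$-eigenbasis, and the $V_{1l}$ blocks are obtained from the diagonal and off-diagonal block relations of $V^*V=\mathbf{1}$ via polar decomposition, with $w=u^*u_2$ fixed by $V_{00}^*V_{01}+V_{10}^*V_{11}=0$. Your extra observations — that strict positivity of $\mathbf{1}-|V_{0l}|^2$ is what makes the polar phases genuinely unitary, and the converse check that any unitary $u$ yields a valid $V$ — are correct refinements the paper leaves implicit.
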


{\it Proof.} 
From the definition of $|v_i\rangle$ we have
\begin{align*}
 V_{0k}  &=
 \sum_{i,j=1}^{d} \sqrt{\frac{\lambda_j}{\lambda_i}} \bra{j_S} K^*_k \ket{i_S} \ket{i_A} \bra{j_A} = K_k^{\prime T}.
\end{align*}
which proves the first statement.
From the fact that $V$ is unitary we obtain 
\begin{eqnarray*}
V_{00}^*V_{00} +V_{10}^*V_{10} &=& \mathbf{1}
\\
V_{01}^*V_{01} +V_{11}^*V_{11} &=& \mathbf{1}
\end{eqnarray*}
from which we get 
$$
|V_{10}| = \sqrt{\mathbf{1} - |V_{00}|^2}, \quad 
|V_{11}| = \sqrt{\mathbf{1} - |V_{01}|^2}
$$
which means that the absolute values of $V_{10}, V_{11}$ are fixed. 
Let $V_{10}= U_0 |V_{10}|$ and $V_{11}= U_1 |V_{11}|$ be their polar decompositions. Then from
$$
V_{00}^*V_{01} + V_{10}^*V_{11} =0
$$
we get $V_{10}^*V_{11}= -V_{00}^*V_{01}$ and 
$$
U_0^* U_1= |V_{10}|^{-1} V_{10}^*V_{11} |V_{11}|^{-1} = 
-|V_{10}|^{-1} V_{00}^*V_{01} |V_{11}|^{-1}
$$
which is a fixed unitary $w$. This proves the claim.

\qed

Later on we will require that the system and absorber transition operator $\tilde{\mathcal{T}}(\cdot) = \sum_k \tilde{K}^*_k \cdot \tilde{K}_k$ is primitive, in addition to the 
system's transition operator $\mathcal{T}$ satisfying the same property. At the moment we are not able to establish what is the connection between the two properties. However we performed extensive numerical simulations with randomly chosen QMC dynamics and corresponding absorbers which indicate that ``generically'' with respect to the original dynamics, for every primitive $\mathcal{T}$ there exists a corresponding absorber such that 
$\tilde{\mathcal{T}}$ is primitive. In fact, a stronger statement seems to hold, which is that the spectral gap of $\tilde{\mathcal{T}}$ is smaller or equal to that of $\mathcal{T}$ and one can always choose the absorber such that the two are equal. For more details on the absorber theory we refer to the recent paper \cite{Tsang24}.

Returning to the parameter estimation setting where $U= U_\theta$ depends on the unknown parameter $\theta$, we note that one cannot implement a coherent absorber which precisely matches 
the system dynamics. Instead, one can implement the absorber for an approximate value $\theta_0$ of $\theta$ and try to estimate the offset 
$\theta-\theta_0$ by measuring the output. This setting is closely related to that of displaced null measurements discussed in section \ref{sec:review.null-measurements}. Indeed, the joint state of system, absorber and output is pure for all $\theta$ and at $\theta=\theta_0$ it is of the product form 
$|\chi^{\rm ss}_{\theta_0} \rangle \otimes|0\rangle^{\otimes n}$, assuming that system and absorber are initially in the stationary state. Therefore, repeated standard basis measurements on the output units constitute a null measurement (in conjunction with a final appropriate measurement on system and absorber).

Once again, let us consider the discretisation of the simple continuous time example in equation \eqref{eq:ctex}; using Lemma \ref{lemm:amkr}, one obtains the following Kraus operators for the system and the absorber together (the first factor corresponds to the system and the second one to the absorber):
\begin{align*}
    \widetilde{K}_0=e^{-i\omega_t \sigma_z} \otimes e^{i\omega_t \sigma_z}(\gamma_t\mathbf{1}+(1-\gamma_t)\sigma_x \otimes \sigma_x), \\ \widetilde{K}_1=\sqrt{\gamma_t(1-\gamma_t)}e^{-i\omega_t \sigma_z}\otimes W (\mathbf{1}-\sigma_x \otimes \sigma_x),
\end{align*}
where $W$ is an arbitrary unitary operator.
One can check that the purification $\ket{\chi^{\rm ss}}=\frac{1}{\sqrt{2}}(\ket{00}+\ket{11})$ of the unique invariant state for the system satisfies $\widetilde{K}_0\ket{\chi^{\rm ss}}=\ket{\chi^{\rm ss}}$ and $\widetilde{K}_1\ket{\chi^{\rm ss}}=0$ (which implies that it is an invariant state for the dynamics of system and absorber together); 
In the continuous time counterpart, this would correspond to the following Hamiltonian $\widetilde{H}$ and jump operator $\widetilde{L}$ for the system and absorber dynamics:
$$\widetilde{H}=\omega(\sigma_z \otimes \mathbf{1}-\mathbf{1}\otimes \sigma_z), \quad \widetilde{L}=\sqrt{\gamma}\mathbf{1} \otimes W(\mathbf{1}-\sigma_x\otimes \sigma_x).
$$
By choosing $W= -\sigma_x$ we recover the absorber parameters prescribed in \cite{StannigelRAblZoller} and the system-absorber dynamics can be described as cascaded input-output dynamics.

The exact procedure for determining $\theta_0$ will be described in section \ref{sec:estimator} and follows the important displacement prescription outlined in section \ref{sec:review.null-measurements}. For the moment it suffices to say that $\theta_0$ will be informed by the outcome of a preliminary estimation stage involving simple (non-optimal) measurements on the output (without post-processing), and it will converge to $\theta$ in the limit of large $n$.  
While for $\theta_0=\theta$ the output state is the "vacuum", for $\theta\neq \theta_0$ the output could be seen as carrying a certain amount of "excitations" which increases with the parameter mismatch $|\theta-\theta_0|$. 

In section \ref{sec:TIM} we show how these "excitations" can be given a precise meaning by fashioning the output Hilbert space into a Fock space carrying modes labelled by certain "excitation patterns". In section 
\ref{sec:CLTPoisson} we show that from this perspective, the output state converges to a joint coherent state of the excitation pattern modes whose displacement depends linearly on $\theta-\theta_0$. This will allow us to devise a simple "pattern counting" algorithm for estimating $\theta$ in section \ref{sec:estimator}.

\section{Translationally invariant modes in the output}
\label{sec:TIM}

In this section we introduce the concept of translationally invariant modes (TIMs) of a spin chain. We show that in the limit of large chain size, certain translationally invariant states acquire the characteristic Fock space structure, and that the corresponding creation and annihilation operators satisfy the bosonic commutation relations. This construction will then be used in analysing the stationary Markov output state in section \ref{sec:CLTPoisson}.

Let $\left(\mathbb{C}^2\right)^{\otimes n}$ be a spin chain of length $n$ and let $|\Omega_n\rangle: = |0\rangle^{\otimes n}$ be the reference ``vacuum'' state. For every pair of integers $(k,l)$ with $1\leq l\leq k$ we define an \emph{excitation pattern} of length 
$k$ and number of excitations $l$ to be an ordered sequence
$\alpha := (\alpha_1, \dots \alpha_k) \in \{0,1\}^k$ such that $\alpha_1=\alpha_k =1$ and $\sum_{i=1}^k \alpha_i =l$. 
For instance, for $k=1,2$ the only patterns are $1$ and respectively $11$ while for $k=3$ the possible patterns are 
$111$ and $101$.

For each pattern $\alpha$ of length $k$ we define `creation and annihilation' operators
$$
A^*_{\alpha}(n) = \frac{1}{\sqrt{n}} \sum_{i=1}^{n-k+1} \sigma^{\alpha}_i, 
\quad A_{\alpha}(n)=\frac{1}{\sqrt{n}}\sum_{i=1}^{n-k+1}\sigma_i^{\alpha*}
$$
where $\sigma_i^{\alpha}=\prod_{j=0}^{k-1}\sigma_{i+j}^{\alpha_{j+1}}$, with $\sigma^0 := \mathbf{1}$, 
$\sigma^1:= \sigma^+ := |1\rangle\langle 0|$ and the index $i$ denotes the position in the chain.

In particular for $\alpha=1$ we have
\[A^*_1(n)=\frac{1}{\sqrt{n}}\sum_{i=1}^{n}\sigma_i^+, \quad A_1(n)=\frac{1}{\sqrt{n}}\sum_{i=1}^{n}\sigma_i^-.\]

We further define the ``canonical coordinates'' and ``number operator'' of the ``mode'' $\alpha$ as
\begin{eqnarray}
Q_{\alpha}(n)&=&\frac{A_{\alpha}(n)+A^*_{\alpha}(n)}{\sqrt{2}},\label{eq:quad.Q}\\
P_{\alpha}(n)& =& \frac{A_{\alpha}(n)-A^*_{\alpha}(n)}{\sqrt{2}i},\label{eq:quad.P}\\ 
N_{\alpha}(n)&=&A^*_{\alpha}(n)A_{\alpha}(n).
\label{eq:number.op}
\end{eqnarray}
We now introduce ``Fock states'' obtained by applying 
creation operators to the vacuum. 

Let $\mathcal{P}$ denote the ordered set of all patterns, where the order is the "natural" one inherited after identifying patterns with integer numbers using the binary representation. Let ${\bf n}:\mathcal{P}\to \mathbb{N}$ be pattern counts 
${\bf n} = (n_\alpha)_{\alpha\in \mathcal{P}}$ such 
that all but a finite number of counts are zero, and let 
${\bf n}!:= \prod_{\alpha\in\mathcal{P}}n_\alpha!$ and $|{\bf n}|:=\sum_{\alpha}n_{\alpha}$ the total number of patterns.

We define the approximate Fock state associated to the set of 
counts ${\bf n}$ as
\begin{equation}
\label{eq:Fock.state}
|{\bf n} 
; n\rangle:= 
\frac{1}{\sqrt{\bf n}!} 
\prod_{\alpha\in\mathcal{P}} 
A^*_{\alpha}(n)^{n_\alpha}
|\Omega_n\rangle , \quad n \geq 1,
\end{equation}
where the product is ordered according to the order on $\mathcal{P}$. For any fixed $n$ these vectors are not normalised or orthogonal to each other, and indeed they are not linearly independent since the Hilbert space is finite dimensional; however, Proposition \ref{prop:fock.states} shows 
that the expected Fock structure emerges in the limit of large 
$n$. Let us first illustrate this with a simple example. 
The state containing 2 patterns $\alpha =1$ is given by 
\begin{eqnarray*}
|n_1= 2;n\rangle &:=& 
\frac{1}{\sqrt{2}} \left(A_{1}^*(n) \right)^2 |\Omega_n \rangle \\
&=&\frac{1}{\sqrt{2}n} \sum_{i\neq j=1}^n 
|0\dots 0
10 \dots 01 0\dots 0\rangle
\end{eqnarray*}
where $i\neq j$ indicate the positions of the excitations, while the state containing a single $11$ pattern is 
\begin{eqnarray*}
|n_{11}=1;n\rangle &:=& 
A_{11}^*(n) |\Omega_n \rangle \\
&=&\frac{1}{\sqrt{n}} \sum_{i=1}^{n-1} 
|0\dots 0
11 0\dots 0\rangle
\end{eqnarray*}
Now it is easy to check that as $n\to \infty$
\begin{eqnarray*}
\langle n_1= 2;n| n_1= 2;n \rangle &= &
\frac{1}{2n^2}\frac{4n(n-1)}{2}\to 1,\\  
\langle n_{11}=1;n| n_{11}=1;n \rangle &= & \frac{n-1}{n} \to 1,\\
\langle n_1= 2;n| n_{11}=1;n \rangle &=& \frac{2(n-1)}{2\sqrt{n}n}= O\left(\frac{1}{\sqrt{n}} \right).
\end{eqnarray*}
This is generalised in the following Proposition which establishes the familiar structure of the bosonic Fock space in the limit of large $n$.

\begin{proposition}
\label{prop:fock.states}
Let $|{\bf n};n \rangle$ be the "Fock states" defined in equation \eqref{eq:Fock.state}. In the limit of large $n$ the "Fock states" become normalised and are orthogonal to each other
$$
\lim_{n\to\infty}\langle {\bf n};n| {\bf m} ;n\rangle =
\delta_{{\bf n}, {\bf m}} .
$$
Moreover, the order of the creation operators in \eqref{eq:Fock.state} becomes irrelevant in the limit of large $n$.
\end{proposition}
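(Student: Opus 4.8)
The plan is to reduce everything to a combinatorial count of how many terms in the expansion of $\langle {\bf n};n|{\bf m};n\rangle$ survive and with what weight, then isolate the leading order in $n$. First I would expand each Fock state in the standard product basis: applying $A^*_\alpha(n)^{n_\alpha}$ to $|\Omega_n\rangle$ produces a sum over tuples of placements of the patterns along the chain, each placement contributing a basis vector $|{\bf x}\rangle$ that has $1$'s exactly on the support of the placed patterns (and $0$'s elsewhere). The key structural observation is that $\sigma^\alpha_i$ annihilates the vacuum unless the $0$-positions of $\alpha$ land on sites that remain $0$, so a term is nonzero iff all placed patterns have pairwise disjoint supports; moreover, since each $\sigma^1=\sigma^+=|1\rangle\langle0|$ raises a site while $\sigma^0=\mathbf 1$, overlapping a $1$ of one pattern with a $1$ of another kills the term (two raisings on the same site). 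Hence $|{\bf n};n\rangle$ is, up to $1/\sqrt{n}^{\,|{\bf n}|}\cdot 1/\sqrt{{\bf n}!}$, a sum over \emph{ordered non-overlapping placements} of the multiset of patterns indexed by ${\bf n}$.

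Next I would compute the inner product $\langle{\bf n};n|{\bf m};n\rangle$ by pairing basis vectors. Two placements give the same basis vector $|{\bf x}\rangle$ iff the induced set of $1$'s on the chain coincides; because distinct patterns start and end with $1$ and are separated by $0$'s, a generic basis vector occurring in $|{\bf m};n\rangle$ determines uniquely which patterns were used and where — the decomposition of a $0/1$-string into maximal ``blocks'' bounded by $0$'s is unique, and each block is one pattern. The only ambiguity is the \emph{order} in which equal patterns were placed, which accounts for the ${\bf n}!$ normalisation; and coincidences where two different pattern-placements happen to abut and merge into a longer valid string — but these are lower-order because they require specified adjacency of two of the $O(n)$-many free positions, costing a factor $1/n$. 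Therefore, up to $O(1/\sqrt n)$ corrections, a nonzero overlap forces ${\bf n}={\bf m}$, and for ${\bf n}={\bf m}$ the number of ordered non-overlapping placements is ${\bf n}!\cdot\binom{n - (\text{total length used}) + |{\bf n}|}{\text{combinatorial}} \sim {\bf n}!\, n^{|{\bf n}|}/|{\bf n}|!\cdot(\text{multinomial})$; carefully, the count of placements of $|{\bf n}|$ patterns with disjoint supports into a chain of length $n$ is $n^{|{\bf n}|}(1+O(1/n))$ divided by the appropriate symmetry, which exactly cancels the $1/\sqrt n^{\,|{\bf n}|}$ and $1/\sqrt{{\bf n}!}$ prefactors to give $1$ in the limit. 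The example in the text ($n_1=2$ and $n_{11}=1$) is the template: $4n(n-1)/2 \sim 2n^2$ against the $1/(2n^2)$ prefactor gives $1$; $2(n-1)/(2\sqrt n\, n)=O(n^{-1/2})$ gives the cross term.

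Finally, the claim that the order of creation operators is irrelevant follows from the same estimate applied to commutators: $[A^*_\alpha(n),A^*_\beta(n)]$ acting on any fixed Fock state produces only terms where a $1$-site of an $\alpha$-placement coincides with a site of a $\beta$-placement, i.e.\ placements pinned along at least one coordinate, hence of norm $O(1/\sqrt n)$ — so reordering changes $|{\bf n};n\rangle$ by a vanishing vector. I expect the main obstacle to be bookkeeping the ``accidental merge'' terms rigorously: one must check that the number of ways two (or more) validly placed patterns can be adjacent so as to read as a single longer pattern, or more generally any non-generic coincidence among the placed supports, contributes at most $O(n^{|{\bf n}|-1})$ placements, so that after the $n^{-|{\bf n}|/2}$ scaling and Cauchy--Schwarz these are $O(1/\sqrt n)$. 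This is a purely combinatorial inclusion--exclusion over which subsets of the placed patterns are ``constrained'' (forced to touch), and the clean way to organise it is to note that each imposed adjacency or overlap removes at least one free summation index, each of which ranges over $\sim n$ values, while the pattern lengths and the number of patterns are fixed.
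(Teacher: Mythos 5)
Your proposal is correct and follows essentially the same route as the paper's proof: expand in the product basis, isolate the leading contribution from well-separated, uniquely readable placements (which forces ${\bf n}={\bf m}$ and yields the ${\bf n}!\cdot n^{|{\bf n}|}$ count cancelling the normalisation), and bound all overlapping or adjacent configurations by noting that each coincidence constraint removes a free summation index worth a factor of $n$. One small imprecision worth noting: a term survives iff the placed patterns' \emph{$1$-positions} are pairwise disjoint (the $\sigma^0=\mathbf{1}$ sites impose no constraint), not their full supports, and in fact the operators $A^*_\alpha(n)$ commute exactly --- $\mathbf{1}$ and $\sigma^+$ generate a commutative algebra on each site --- so the order-irrelevance holds identically rather than merely asymptotically.
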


The proof of Proposition \ref{prop:fock.states} can be found in Appendix \ref{app:TIM}. From Proposition \ref{prop:fock.states} we obtain the following Corollary which shows that the action of the creation operators on the "Fock states" converges to that of a bosonic creation operator in the limit of large $n$.

\begin{corollary}
\label{prop:creation.action}
Let $\beta$ be a pattern and let $\bm{\delta}^{(\beta)}$ be the counts set with $\bm{\delta}^{(\beta)}_\alpha = \delta_{\alpha, \beta}$. Let $|{\bf n}; n\rangle$ be a "Fock state" as defined in equation \eqref{eq:Fock.state}.
In the limit of large $n$ the action of creation and annihilation operators $A^*_\beta(n)$ and $A_\beta(n)$ satisfy
\begin{eqnarray}
A^*_\beta(n)|{\bf n} ;n\rangle &=&
\sqrt{n_\beta +1} |{\bf n} + \bm{\delta}^\beta; n\rangle+o(1)\\
A_\beta(n)|{\bf n} ;n\rangle &=&
\sqrt{n_\beta } |{\bf n} - \bm{\delta}^\beta; n\rangle+o(1)
\end{eqnarray}
\end{corollary}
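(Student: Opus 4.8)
The plan is to derive the result from Proposition \ref{prop:fock.states} by computing inner products against the full family of Fock states. Since Proposition \ref{prop:fock.states} tells us that $\{|{\bf m};n\rangle\}_{\bf m}$ is an asymptotically orthonormal family, and since the vectors $A^*_\beta(n)|{\bf n};n\rangle$ and $\sqrt{n_\beta+1}\,|{\bf n}+\bm{\delta}^\beta;n\rangle$ both have norms converging to the same finite limit, it suffices to show that their inner products with every fixed Fock state $|{\bf m};n\rangle$ agree up to $o(1)$, and that the residual norm (the part of $A^*_\beta(n)|{\bf n};n\rangle$ orthogonal to the span of all Fock states) vanishes. Concretely, first I would observe that, up to reordering the product of creation operators — which by the last sentence of Proposition \ref{prop:fock.states} costs only $o(1)$ in norm — one has the exact identity
\begin{equation*}
A^*_\beta(n)\,|{\bf n};n\rangle = \frac{1}{\sqrt{{\bf n}!}}\,A^*_\beta(n)\prod_{\alpha}A^*_\alpha(n)^{n_\alpha}|\Omega_n\rangle = \sqrt{n_\beta+1}\cdot\frac{1}{\sqrt{({\bf n}+\bm{\delta}^\beta)!}}\prod_{\alpha}A^*_\alpha(n)^{({\bf n}+\bm{\delta}^\beta)_\alpha}|\Omega_n\rangle,
\end{equation*}
because $({\bf n}+\bm{\delta}^\beta)! = (n_\beta+1)\,{\bf n}!$. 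So $A^*_\beta(n)|{\bf n};n\rangle$ equals $\sqrt{n_\beta+1}\,|{\bf n}+\bm{\delta}^\beta;n\rangle$ exactly once the product is correctly ordered, and the only error incurred is the $o(1)$ from the reordering statement of the Proposition. This handles the creation operator essentially for free.

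For the annihilation operator there is no such algebraic shortcut, so I would argue by duality. Using $\langle {\bf m};n|A_\beta(n)|{\bf n};n\rangle = \overline{\langle {\bf n};n|A^*_\beta(n)|{\bf m};n\rangle}$ together with the creation-operator identity just established, we get
\begin{equation*}
\langle {\bf m};n|A_\beta(n)|{\bf n};n\rangle = \sqrt{m_\beta+1}\,\overline{\langle {\bf n};n| {\bf m}+\bm{\delta}^\beta;n\rangle} = \sqrt{m_\beta+1}\,\delta_{{\bf n},\,{\bf m}+\bm{\delta}^\beta} + o(1)
\end{equation*}
by Proposition \ref{prop:fock.states}. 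The Kronecker delta is nonzero precisely when ${\bf m} = {\bf n}-\bm{\delta}^\beta$ (which requires $n_\beta\geq 1$), in which case $\sqrt{m_\beta+1} = \sqrt{n_\beta}$. Hence the components of $A_\beta(n)|{\bf n};n\rangle$ along the asymptotically orthonormal basis are, to leading order, exactly those of $\sqrt{n_\beta}\,|{\bf n}-\bm{\delta}^\beta;n\rangle$ (and $0$ if $n_\beta=0$, since $\sigma^-$ annihilates a chain position holding a $0$, consistent with $A_\beta(n)|\Omega_n\rangle = 0$).

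The one genuine gap in this argument — and the step I expect to be the main obstacle — is controlling the component of $A_\beta(n)|{\bf n};n\rangle$ that lies \emph{outside} the span of the finitely many relevant Fock states; matching inner products with a fixed family does not by itself bound the full vector in a finite-dimensional space where the Fock states are not a basis. To close this I would bound $\|A_\beta(n)|{\bf n};n\rangle\|^2$ directly: expanding $A_\beta(n) = n^{-1/2}\sum_i \sigma_i^{\alpha*}$ and using that $|{\bf n};n\rangle$ is a normalised (in the limit) superposition of product states each carrying $|{\bf n}|$ excitations clustered into patterns, one sees that $\sigma_i^{\alpha*}$ acts nontrivially only on the $O(1)$ positions where a copy of the pattern $\beta$ sits, so the sum over $i$ effectively has $O(n_\beta)$ nonzero terms and $\|A_\beta(n)|{\bf n};n\rangle\|^2 \to n_\beta$. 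Since this matches $\|\sqrt{n_\beta}\,|{\bf n}-\bm{\delta}^\beta;n\rangle\|^2 \to n_\beta$, the norm of the difference, which we have shown has vanishing inner product with every basis vector and whose two summands have matching norms, must itself vanish — more carefully, $\|A_\beta(n)|{\bf n};n\rangle - \sqrt{n_\beta}|{\bf n}-\bm{\delta}^\beta;n\rangle\|^2 = \|A_\beta(n)|{\bf n};n\rangle\|^2 + n_\beta\,\||{\bf n}-\bm{\delta}^\beta;n\rangle\|^2 - 2\sqrt{n_\beta}\,\mathrm{Re}\langle {\bf n}-\bm{\delta}^\beta;n|A_\beta(n)|{\bf n};n\rangle \to n_\beta + n_\beta - 2n_\beta = 0$, which is exactly the claimed $o(1)$ estimate. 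The creation-operator case is handled identically (and more easily, since there the norm computation is subsumed by the exact algebraic identity above). This combinatorial norm estimate, already the technical heart of Proposition \ref{prop:fock.states}, is the part one must execute with care; everything else is bookkeeping.
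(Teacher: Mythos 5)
Your route is genuinely different from the paper's and is essentially sound. For the creation operator, the paper re-expands $|{\bf n};n\rangle$ in the computational basis (reusing the representation from the proof of Proposition \ref{prop:fock.states}), applies $A^*_\beta(n)$ to the non-overlapping main term, and matches the resulting coefficients against those of $|{\bf n}+\bm{\delta}^\beta;n\rangle$ while checking that overlapping placements and the remainder stay $o(1)$. Your observation that $A^*_\beta(n)\prod_\alpha A^*_\alpha(n)^{n_\alpha}|\Omega_n\rangle$ \emph{is} the defining product for the counts ${\bf n}+\bm{\delta}^\beta$ up to reordering, so that the whole first identity reduces to $({\bf n}+\bm{\delta}^\beta)!=(n_\beta+1)\,{\bf n}!$ plus the $o(1)$ reordering estimate already contained in Proposition \ref{prop:fock.states}, is cleaner and gets that case essentially for free. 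For the annihilation operator, your scheme of matching all inner products against the asymptotically orthonormal Fock family and then matching norms is a legitimate way to control the component lying outside the span of the relevant Fock states, and you correctly identify that this is where the real work sits.

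The one step that would not survive being made precise as written is the justification of $\|A_\beta(n)|{\bf n};n\rangle\|^2\to n_\beta$. Counting ``$O(n_\beta)$ nonzero terms in the sum over $i$'' is a per-basis-vector statement and, taken literally, gives $\|A_\beta(n)|\omega\rangle\|^2=O(n_\beta/n)\to 0$ because of the $n^{-1/2}$ prefactor in $A_\beta(n)$. The limit is nonzero only because roughly $n$ distinct basis vectors in the superposition $|{\bf n};n\rangle$ are sent by pattern removal onto the \emph{same} image vector, so their amplitudes add coherently and this $n$-fold multiplicity beats the $n^{-1/2}$; one must also check that the ``partial removal'' contributions, where $\sigma_i^{\beta*}$ deletes excitations sitting inside a longer pattern rather than a standalone copy of $\beta$, have only $O(1)$ preimages per image vector and hence contribute $O(n^{-1})$ to the squared norm. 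Quantifying this coherent overlap is exactly the coefficient computation the paper performs directly, so your plan defers rather than bypasses it. Once that norm estimate is carried out honestly, the remainder of your argument (duality against the creation-operator result, then expanding the squared distance) closes the proof correctly.
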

The proof of Corollary \ref{prop:creation.action} can be found in Appendix \ref{app:TIM}.

\section{Limit distribution of quadratures and number operators.}
\label{sec:CLTPoisson}

In this section we analyse the structure of the output state obtained by post-processing the output of a QMC with a coherent absorber, as described in section \ref{sec:CA}. Motivated by the fact that the statistical uncertainty scales as 
$1/\sqrt{n}$
we choose the absorber parameter 
$\theta_0$ to be fixed and known, and write the system parameter as
$\theta= \theta_0+u/\sqrt{n}$,
where $u$ is an unknown local parameter. For a more in depth motivation we refer to local asymptotic normality theory in appendix \ref{sec.LAN&DNM}.

Since the output state becomes stationary for long times, it is natural to focus on the state of the translationally invariant modes introduced in section \ref{sec:TIM}. In Theorem \ref{thm:limdistribution}
we show that asymptotically with $n$, the restricted state of these modes is a joint coherent state whose amplitude depends linearly on $u$, i.e. a Gaussian shift model. Moreover, Corollary \ref{coro:fisher} shows that the QFI of this model is equal to the QFI rate $f_{\theta_0}$ of the output state characterised in Theorem \ref{Th.QFI}. This means that the TIMs capture the entire QFI of the output. Together with Theorem \ref{thm:trajs} of section \ref{sec:limit.th.counting}, these  results will be the theoretical underpinning the  estimator proposed in section \ref{sec:estimator}.

We consider the system and absorber together as an open system with space $\mathcal{H}_{sa}\cong  \mathbb{C}^D$ with $D= d^2$, interacting with the noise units. The corresponding unitary is $W_\theta= V_{\theta_0}U_\theta$ where the absorber parameter is a fixed value $\theta_0$ (which later will be determined based on a preliminary estimation procedure) and $\theta$ is the unknown parameter to be estimated. We distinguish between the system Kraus operators $K_{\theta,i} =
\langle i|U_\theta |0\rangle\in B(\mathcal{H}_s)$ and the system and absorber Kraus operators 
$\tilde{K}_{\theta,i} = \langle i| W_\theta|0\rangle \in B(\mathcal{H}_{sa})$, and similarly between the system transition operator $\mathcal{T}_\theta$ and the system and absorber transition operator $\tilde{\mathcal{T}}_\theta (X) =\sum_{i=0}^1 \tilde{K}_{\theta,i}^*X\tilde{K}_{\theta,i} $. 
 
We will be interested in the probabilistic and statistical properties of the output state $\tilde{\rho}^{\rm out}_{\theta}$ of the system and absorber dynamics, for parameters $\theta$ in the neighbourhood of a given $\theta_0$. For clarity we state the precise mathematical properties we assume throughout.

\begin{hypothesis}\label{hyp:2}
The following properties of the system-absorber QMC are assumed to be true:
\begin{enumerate}
\item $\tilde{\mathcal{T}}_\theta$ has a unique invariant state $\tilde{\rho}^{\rm ss}_\theta>0$ and is aperiodic for $\theta$ in a neighborhood of $\theta_0$;
\item The Kraus operators $\tilde{K}_{\theta,i}$ and the stationary state $\tilde{\rho}^{\rm ss}_{\theta}$ are analytic functions of $\theta$ around $\theta_0$;
\item At $\theta_0$ the stationary state is pure 
$\tilde{\rho}^{\rm ss}_{\theta_0}=
\ket{\chi^{\rm ss}_{\theta_0}}\bra{\chi^{\rm ss}_{\theta_0}}$ and 
$\tilde{K}_{\theta_0,i} |\chi^{\rm ss}_{\theta_0}\rangle=(1-i)|\chi^{\rm ss}_{\theta_0}\rangle$ for $i=0,1$.
\end{enumerate}
\end{hypothesis}

To formulate the result we use the 
local parametrisation $\theta= \theta_0 + u/\sqrt{n}$
where $u$ is to be seen as a local parameter to be estimated from the output of length $n$. To simplify the notation, we denote the derivatives at $\theta_0$ as 
$$\dot{K}_i := \left. \frac{dK_{\theta,i}}{d\theta} \right|_{\theta_0},\quad \dot{\tilde{K}}_i := \left. \frac{d\tilde{K}_{\theta,i}}{d\theta} \right|_{\theta_0},
$$ and drop the subscript $\theta_0$ in $K_{\theta_0,i} =: K_i,$ $ \tilde{K}_{\theta_0,i} =: \tilde{K}_i$, etc.; we also use the local parameter instead of $\theta$ e.g. 
$K_{\theta_0+u/\sqrt{n},i} =: K_{u,i}$. 

Note that properties $1.$ and $2.$ in Hypothesis \ref{hyp:2} imply that $\dot{\tilde \rho}^{\rm ss}=\tilde{\mathcal{R}}_*\dot{\tilde{\mathcal{T}}}_*(\tilde{\rho}^{\rm ss})$ where $\tilde{\mathcal{R}}$ is the Moore-Penrose inverse of $ {\rm Id} - \tilde{\mathcal{T}}$. In addition, $\tilde{K}_0\ket{\chi^{\rm ss}}= \tilde{K}^*_0\ket{\chi^{\rm ss}} = \ket{\chi^{\rm ss}}$.
Indeed by construction $\tilde{K}_0|\chi^{\rm ss}\rangle =|\chi^{\rm ss}\rangle$ and 
$\tilde{K}_1|\chi^{\rm ss}\rangle =0$, and by applying  
$\tilde{K}_0^*\tilde{K}_0+ \tilde{K}_1^*\tilde{K}_1=\mathbf{1}$ to $|\chi^{\rm ss}\rangle$ we get $\tilde{K}_0^*|\chi^{\rm ss}\rangle = |\chi^{\rm ss}\rangle$.

Since for large $n$ the dynamics reaches stationarity, we consider the output state corresponding to the system starting in the stationary state $\tilde{\rho}^{\rm ss}_u$
\begin{equation}\label{eq:output.stat.local}
\tilde{\rho}^{\rm out}_{u,n}=
\sum_{{\bf i}, {\bf j}\in \{0,1\}^n}
{\rm Tr} \left[ 
\tilde{\rho}^{\rm ss}_u 
\tilde{K}^*_{u,{\bf j}}
\tilde{K}_{u,{\bf i}} \right]\ket{\bf i}\bra{\bf j}
\end{equation}
To formulate our results below we need to introduce several superoperators acting on the system and absorber space. For  $x\in B(\mathcal{H}_{sa})$, we define 
\begin{equation}\label{eq:aop}
{\cal A}_i(x)=\begin{cases} \tilde{\mathcal{T}}(x) & i=0 \\
\tilde{K}^*_1x\tilde{K}_0 & i=1 \\
\tilde{K}^*_0x\tilde{K}_1 & i=-1, \\\end{cases}\end{equation}
and
$$
\dot{{\cal A}}_1(x)=\dot{\tilde{K}}_1^*x\tilde{K}_0 + \tilde{K}_1^*x\dot{\tilde{K}}_0
$$

Furthermore, for every pattern  $\alpha$ of length $l$, we denote
\[{\cal A}^\alpha={\cal A}_{\alpha_1} \cdots {\cal A}_{\alpha_l} \text{ and } \tilde{{\cal A}}^{\alpha}=\dot{{\cal A}}_1{\cal A}_{\alpha_2} \cdots {\cal A}_{\alpha_l}.
\]
With a slight abuse of notation we will denote the expectation with respect to a density matrix $\rho$ as 
$\rho(X) = {\rm Tr}(\rho X)$.

\begin{theorem} 
\label{thm:limdistribution}
Let $\theta_0$ be a fixed parameter and assume that the dynamics satisfies the assumptions in Hypothesis \eqref{hyp:2}. Let $\theta=\theta_0 +u/\sqrt{n}$ be the system parameter with fixed local parameter $u$. Let $\alpha$ be a fixed pattern and let $z=\beta+i\gamma\in \mathbb{C}$ with $|z|=1$. Then the following convergence in distribution hold in the limit of large $n$ with respect to the output state $\tilde{\rho}^{\rm out}_{u,n} $ as defined in equation \eqref{eq:output.stat.local}.
\begin{itemize}
    \item[i)]
    \textbf{The quadratures \eqref{eq:quad.Q} and \eqref{eq:quad.P} of the TIM mode $\alpha$ satisfy the joint Central Limit Theorem}
\[\beta Q_\alpha(n)+\gamma P_\alpha(n) \xrightarrow{{\cal L}} N(u \mu_{\alpha,z}, 1/2),\]
where $N(\mu, V)$ denotes the normal distribution with mean $\mu$ and variance $V$ and  $\mu_{\alpha,z}= \sqrt{2}\Re (\bar{z} \mu_\alpha)$ with
\begin{equation}
\label{eq:mu_alpha}
\mu_{\alpha}= \tilde{\rho}^{\rm ss}((\dot{\tilde{\mathcal{T}}}\tilde{\mathcal{R}}{\cal A}^{\alpha}+\tilde{{\cal A}}^{\alpha})(\mathbf{1})).
\end{equation}
\item[ii)] \textbf{The number operator \eqref{eq:number.op} of the TIM mode $\alpha$ satisfies the Poisson limit:}
\[
N_\alpha(n) \xrightarrow{{\cal L}} {\rm Poisson}(u^2\lambda_\alpha),
\]
where
$\lambda_\alpha:=|\mu_\alpha|^2$.

\end{itemize}
\end{theorem}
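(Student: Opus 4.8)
The plan is to compute the characteristic function (for part i) and the moment generating function (for part ii) of the relevant observables with respect to the output state $\tilde\rho^{\rm out}_{u,n}$, and show they converge to those of the claimed limiting distributions. The key technical device is that expectations of products of the operators $\sigma^\alpha_i$ in the output state can be written in terms of iterated applications of the transition-type superoperators ${\cal A}_i$ defined in \eqref{eq:aop}: when one computes $\tilde\rho^{\rm out}_{u,n}(\sigma^{\alpha*}_i \cdots)$ one inserts a Kraus string and reorganises it so that a $1$ in the pattern at position $j$ produces a factor ${\cal A}_{\pm 1}$, a $0$ produces ${\cal A}_0=\tilde{\mathcal T}$, and the long runs of identity sites between distinct pattern occurrences produce high powers of $\tilde{\mathcal T}$. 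By the primitivity and aperiodicity assumptions in Hypothesis \ref{hyp:2}, $\tilde{\mathcal T}^m \to \tilde\rho^{\rm ss}(\cdot)\mathbf 1$ geometrically fast, so in the limit these long runs collapse to the stationary expectation, and the resolvent $\tilde{\mathcal R}$ appears from resummation of the geometric-type tails (exactly as in the QFI formula of Theorem \ref{Th.QFI}). The local parametrisation $\theta = \theta_0 + u/\sqrt n$ enters through expanding $\tilde K_{u,i} = \tilde K_i + (u/\sqrt n)\dot{\tilde K}_i + O(n^{-1})$ and $\tilde\rho^{\rm ss}_u = \tilde\rho^{\rm ss} + (u/\sqrt n)\dot{\tilde\rho}^{\rm ss} + O(n^{-1})$; since $\tilde K_{1}|\chi^{\rm ss}\rangle = 0$, a nonzero contribution from a $1$-site requires either a derivative hitting that site (giving $\dot{\cal A}_1$, hence the operator $\tilde{\cal A}^\alpha$) or the excitation being "created" elsewhere, which is what produces the two terms in \eqref{eq:mu_alpha}.

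For part i), I would consider $S_n := \beta Q_\alpha(n) + \gamma P_\alpha(n)$, which is a self-adjoint real-linear combination of $A_\alpha(n)$ and $A^*_\alpha(n)$, i.e. $S_n = \frac{1}{\sqrt n}\sum_i (\bar c\,\sigma^{\alpha*}_i + c\,\sigma^\alpha_i)/\sqrt{\,}$ for an appropriate $c$ depending on $z$. The cleanest route is the method of moments: compute $\lim_n \tilde\rho^{\rm out}_{u,n}(S_n^m)$ for all $m$ and match to the moments of $N(u\mu_{\alpha,z},1/2)$. At order $m$ one gets a sum over $m$-tuples of positions; because each $\sigma^1 = |1\rangle\langle 0|$ is nilpotent and the vacuum-like reference dominates, the surviving terms are those where the positions pair up into "contractions" (two overlapping/adjacent occurrences annihilating each other, contributing to the variance $1/2$) plus unpaired occurrences each contributing a factor $u\mu_{\alpha,z}$ in the $n\to\infty$ limit. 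This combinatorics — isolating which position configurations survive, showing crossing/nested non-paired configurations vanish as $n\to\infty$, and that the leading term has exactly the Gaussian moment structure — is essentially the proof of Proposition \ref{prop:fock.states} and Corollary \ref{prop:creation.action} repackaged with the extra $u$-dependent displacement; I would cite those and reduce the argument to bookkeeping. Convergence of all moments to those of a normal (which is moment-determinate) then gives convergence in distribution.

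For part ii), since $N_\alpha(n) = A^*_\alpha(n)A_\alpha(n)$ takes values in $\{0,1,2,\dots\}$, I would compute the limit of $\tilde\rho^{\rm out}_{u,n}(s^{N_\alpha(n)})$ (or the factorial moments $\tilde\rho^{\rm out}_{u,n}(N_\alpha(n)(N_\alpha(n)-1)\cdots(N_\alpha(n)-r+1))$) and show it converges to $e^{u^2\lambda_\alpha(s-1)}$ (resp. to $(u^2\lambda_\alpha)^r$), which characterises ${\rm Poisson}(u^2\lambda_\alpha)$. The natural strategy is to combine part i) with the structure already set up: the $r$-th factorial moment of $N_\alpha(n)$ equals $\tilde\rho^{\rm out}_{u,n}\big((A^*_\alpha(n))^r (A_\alpha(n))^r\big)$ up to asymptotically negligible commutator corrections (using the bosonic commutation relations from Corollary \ref{prop:creation.action}, which hold in the limit), and by the coherent-state picture — the restricted state is becoming a coherent state with amplitude $u\mu_\alpha$, whose number distribution is Poisson with intensity $|u\mu_\alpha|^2 = u^2\lambda_\alpha$ — this converges to $(u^2\lambda_\alpha)^r$. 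Making "the state restricted to the mode $\alpha$ converges to a coherent state" precise requires controlling joint moments of $A_\alpha(n), A^*_\alpha(n)$ of all orders, which is again the moment computation above.

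The main obstacle I anticipate is the combinatorial control of the higher moments: one must show rigorously that in $\tilde\rho^{\rm out}_{u,n}(S_n^m)$ (and its annihilation/creation-ordered analogues) only the "fully contracted plus isolated-displacement" position configurations contribute in the limit, and that all others — in particular configurations where pattern occurrences partially overlap in complicated ways, or where several occurrences cluster within $O(1)$ of each other — are $O(n^{-1/2})$ or smaller. This is where the length-$k$ structure of a general pattern $\alpha$ (as opposed to the single-site $\alpha = 1$ case) makes the index bookkeeping genuinely delicate: occurrences of $\sigma^\alpha_i$ and $\sigma^{\alpha*}_j$ can overlap for $|i-j| < k$, and one needs to check these boundary terms either vanish or reassemble correctly into the resolvent expression \eqref{eq:mu_alpha}. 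I would handle this by the same truncation-and-geometric-decay estimates underpinning Proposition \ref{prop:fock.states}, treating the $k>1$ overlap corrections as lower-order, and defer the full details to the appendix.
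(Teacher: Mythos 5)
Your overall strategy coincides with the paper's: both proofs proceed by the method of moments, rewriting expectations of products of $\sigma^{\alpha}_i$'s as transfer-operator correlation functions built from the maps ${\cal A}_i$, exploiting the spectral gap of $\tilde{\mathcal T}$ to collapse long runs of $\tilde{\mathcal T}^{x}$ and produce the resolvent, and Taylor-expanding in $u$ so that each factor $u/\sqrt n$ from a derivative must be balanced against the $n$'s coming from position sums. Your account of part i), including the origin of the two terms in \eqref{eq:mu_alpha} (derivative landing on the pattern block versus on the preceding $\tilde{\mathcal T}^x$), matches the paper's argument in outline.

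There is, however, a genuine gap in part ii). First, your reduction of the factorial moments to normal-ordered products "up to negligible commutator corrections, using Corollary \ref{prop:creation.action}" is not justified: that corollary (and Proposition \ref{prop:fock.states}) concerns vectors obtained by applying creation operators to the \emph{vacuum} $|\Omega_n\rangle$, whereas here all expectations are taken in the Markov output state $\tilde{\rho}^{\rm out}_{u,n}$; controlling $N_\alpha(n)^r$ versus $(A^*_\alpha)^r(A_\alpha)^r$ in that state requires joint moment bounds of exactly the kind you are trying to avoid. Second, the appeal to "the coherent-state picture, whose number distribution is Poisson" is circular: that the restriction of the state to the mode $\alpha$ is asymptotically coherent is precisely what is being established, and the only route to it is the moment computation itself. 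The decisive step you omit is the combinatorial identification of the coefficient of $(u^2\lambda_\alpha)^{m/2}$ in the limit of $\langle N_\alpha(n)^r\rangle_u$: one must classify the $2r$ sigma-blocks into perfectly overlapping pairs $\sigma^{\alpha*}_z\sigma^{\alpha}_z$ (which survive without derivatives) and singletons (each requiring exactly one derivative), and count the surviving configurations; the paper shows this count is the Stirling number $S(r,m/2)$, which is exactly what makes the limit Poisson rather than some other integer-valued law. Finally, neither part of your sketch addresses the uniform control of the Taylor remainder — bounding derivatives of order up to $2r+1$ of $\tilde{\mathcal T}^{x}_u$ uniformly over $x\le n$ and over $|\eta|\le |u|/\sqrt n$ — which occupies a substantial portion of the paper's proof and cannot be dismissed as bookkeeping.
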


The proof of Theorem \ref{thm:limdistribution} can be found in Appendix \ref{app:thmlimdistr}. 

We now provide a simpler expression for the parameters $\mu_\alpha$ in terms of Kraus operators and their first derivatives. As a by-product we show that the sum of all (limiting) QFIs of the Gaussian modes 
$(Q_\alpha(n), P_\alpha(n))$ is the QFI rate of the output state. This means that the TIMs capture all the QFI of the output state.

\begin{lemma} \label{lem:alternative}
    Let $\alpha$ be any excitation pattern and let $\mu_\alpha$ 
    be the constant defined in \eqref{eq:mu_alpha}.
    Then $\mu_\alpha$ can also be expressed as
\begin{eqnarray}
    \mu_\alpha &=& 
    \langle \tilde{K}_{\alpha_{|\alpha|}} \cdots \tilde{K}_{\alpha_1}(\mathbf{1}-\tilde{K}_0)^{-1}\dot{\tilde{K}}_0 \chi^{\rm ss} | \chi^{\rm ss}\rangle
    \nonumber\\
    &+&\langle \tilde{K}_{\alpha_{|\alpha|}} \cdots \tilde{K}_{\alpha_2}\dot{\tilde{K}}_1 \chi^{\rm ss}|\chi^{\rm ss}\rangle.
    \label{eq:mu_alpha2}
    \end{eqnarray}    
    Moreover with
    $\lambda_\alpha=|\mu_\alpha|^2$, one has
    \begin{equation}
    \label{eq:sum.lambda.alpha}
    \sum_{\alpha} \lambda_\alpha=\|(\tilde{K}_1(\mathbf{1}-\tilde{K}_0)^{-1}\dot{\tilde{K}}_0+\dot{\tilde{K}}_1)\chi^{\rm ss}\|^2.
    \end{equation}
\end{lemma}

The proof of Lemma \ref{lem:alternative} can be found in Appendix \ref{app:thmlimdistr}.

\begin{corollary} \label{coro:fisher}
Asymptotically with $n$, the total QFI of the TIMs is equal to the QFI rate of the output state \eqref{eq:QFI.rate}, that is
\[
4\sum_{\alpha}\lambda_\alpha=4 \|(\tilde{K}_1(\mathbf{1}-\tilde{K}_0)^{-1}\dot{\tilde{K}}_0+\dot{\tilde{K}}_1)\chi^{\rm ss}\|^2 =f_{\theta_0} .
\]
\end{corollary}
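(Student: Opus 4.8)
The first equality is nothing but equation \eqref{eq:sum.lambda.alpha} of Lemma \ref{lem:alternative}, so the content to be established is the identity $4\|(\tilde{K}_1(\mathbf{1}-\tilde{K}_0)^{-1}\dot{\tilde{K}}_0+\dot{\tilde{K}}_1)\chi^{\rm ss}\|^2=f_{\theta_0}$. My plan is to evaluate $f_{\theta_0}$ from the system--output side of the system--absorber chain $W_\theta=V_{\theta_0}U_\theta$, rather than from formula \eqref{eq:QFI.rate}: by Theorem \ref{Th.QFI} the output and the system--output QFI rates of this chain both equal $f_{\theta_0}$, and the latter is tractable because at $\theta_0$ the joint vector is pure. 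Concretely, start system and absorber in $|\chi^{\rm ss}\rangle$ and set $|\tilde{\Psi}_{\theta,n}\rangle=W^{(n)}_\theta\cdots W^{(1)}_\theta\bigl(|\chi^{\rm ss}\rangle\otimes|0\rangle^{\otimes n}\bigr)$; this is a pure vector for every $\theta$, equal to $|\chi^{\rm ss}\rangle\otimes|0\rangle^{\otimes n}$ at $\theta_0$, and for a smooth pure family the QFI is $4\bigl(\|\partial_\theta\tilde{\Psi}\|^2-|\langle\tilde{\Psi}|\partial_\theta\tilde{\Psi}\rangle|^2\bigr)$. So it suffices to show that at $\theta_0$ the phase term is negligible and $\tfrac1n\|\partial_\theta\tilde{\Psi}_{\theta,n}\|^2\to\|(\tilde{K}_1(\mathbf{1}-\tilde{K}_0)^{-1}\dot{\tilde{K}}_0+\dot{\tilde{K}}_1)\chi^{\rm ss}\|^2$.

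First I would record the gauge identity $\langle\chi^{\rm ss}|\dot{\tilde{K}}_0|\chi^{\rm ss}\rangle=0$. Writing $\dot{\tilde{K}}_0=\sum_l\dot{K}_l\otimes V_{0l}$ with $V_{0l}=K_l^{\prime T}$, and using the purification relation $\langle\chi^{\rm ss}|A\otimes B|\chi^{\rm ss}\rangle={\rm Tr}\bigl(\sqrt{\rho^{\rm ss}}\,A\,\sqrt{\rho^{\rm ss}}\,B^{T}\bigr)$, this quantity equals $\overline{\sum_l{\rm Tr}(\rho^{\rm ss}\dot{K}_l^{*}K_l)}$, which vanishes by the gauge condition \eqref{eq:gauge} for the original chain. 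Combined with $\tilde{K}_0\chi^{\rm ss}=\tilde{K}_0^{*}\chi^{\rm ss}=\chi^{\rm ss}$ and $\tilde{K}_1\chi^{\rm ss}=0$ (Hypothesis \ref{hyp:2}), the gauge identity makes the phase term vanish exactly: at $\theta_0$, $\langle\tilde{\Psi}|\partial_\theta\tilde{\Psi}\rangle=n\,\langle\chi^{\rm ss}|\dot{\tilde{K}}_0|\chi^{\rm ss}\rangle=0$. Then I would expand $\partial_\theta|_{\theta_0}|\tilde{\Psi}_{\theta,n}\rangle=\sum_{m=1}^{n}|D_m\rangle$, with $|D_m\rangle$ carrying the $\theta$-derivative on the $m$-th Kraus factor. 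Because $\tilde{K}_1\chi^{\rm ss}=0$ and $\tilde{K}_0\chi^{\rm ss}=\chi^{\rm ss}$, in each $|D_m\rangle$ all output sites to the left of $m$ must be $0$; telescoping the sites to the right of $m$ with $\sum_i\tilde{K}_i^{*}\tilde{K}_i=\mathbf{1}$ gives $\|D_m\|^2=\|\dot{\tilde{K}}_0\chi^{\rm ss}\|^2+\|\dot{\tilde{K}}_1\chi^{\rm ss}\|^2$ for every $m$, and, for $m<m'$ and $k=m'-m$, $\langle D_m|D_{m'}\rangle=\langle v_k|v_0\rangle+\langle\tilde{K}_1 v_{k-1}|\dot{\tilde{K}}_1\chi^{\rm ss}\rangle$ with $v_j:=\tilde{K}_0^{j}\dot{\tilde{K}}_0\chi^{\rm ss}$.

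Since $v_0\perp\chi^{\rm ss}$ and $\tilde{K}_0^{j}$ decays geometrically on $\{\chi^{\rm ss}\}^{\perp}$ (the boundedness of $(\mathbf{1}-\tilde{K}_0)^{-1}$ on that subspace is already presupposed by Theorem \ref{thm:limdistribution} and Lemma \ref{lem:alternative}), the average $\tfrac1n\sum_{m<m'}\langle D_m|D_{m'}\rangle$ converges to $\sum_{k\ge1}\bigl(\langle v_k|v_0\rangle+\langle\tilde{K}_1 v_{k-1}|\dot{\tilde{K}}_1\chi^{\rm ss}\rangle\bigr)$; writing $S:=\sum_{j\ge0}v_j=(\mathbf{1}-\tilde{K}_0)^{-1}\dot{\tilde{K}}_0\chi^{\rm ss}$, the second sum equals $\langle\tilde{K}_1 S|\dot{\tilde{K}}_1\chi^{\rm ss}\rangle$. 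Hence $\tfrac1n\|\partial_\theta\tilde{\Psi}\|^2\to\|\dot{\tilde{K}}_1\chi^{\rm ss}\|^2+2\,\Re\langle\tilde{K}_1 S|\dot{\tilde{K}}_1\chi^{\rm ss}\rangle+\bigl(\|v_0\|^2+2\,\Re\sum_{k\ge1}\langle v_k|v_0\rangle\bigr)$, which I would compare with $\|\tilde{K}_1 S+\dot{\tilde{K}}_1\chi^{\rm ss}\|^2=\|\tilde{K}_1 S\|^2+\|\dot{\tilde{K}}_1\chi^{\rm ss}\|^2+2\,\Re\langle\tilde{K}_1 S|\dot{\tilde{K}}_1\chi^{\rm ss}\rangle$. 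Two of the three terms match, and the remaining bracket equals $\|\tilde{K}_1 S\|^2$ by the identity $\|\tilde{K}_1 S\|^2=\langle S|(\mathbf{1}-\tilde{K}_0^{*}\tilde{K}_0)|S\rangle=\|S\|^2-\|\tilde{K}_0 S\|^2=\|S\|^2-\|S-v_0\|^2=\|v_0\|^2+2\,\Re\sum_{k\ge1}\langle v_k|v_0\rangle$, using $\tilde{K}_0^{*}\tilde{K}_0+\tilde{K}_1^{*}\tilde{K}_1=\mathbf{1}$ and $\tilde{K}_0 S=S-v_0$. Therefore $f_{\theta_0}=4\|\tilde{K}_1 S+\dot{\tilde{K}}_1\chi^{\rm ss}\|^2=4\|(\tilde{K}_1(\mathbf{1}-\tilde{K}_0)^{-1}\dot{\tilde{K}}_0+\dot{\tilde{K}}_1)\chi^{\rm ss}\|^2=4\sum_\alpha\lambda_\alpha$.

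The main obstacle is not this (short) algebra but two structural facts underlying it. First, one must justify reading the QFI rate off the $|\chi^{\rm ss}\rangle$-initialised pure family: the rate is independent of the initial state, but at the \emph{singular} parameter $\theta_0$ the semigroup $\tilde{\mathcal{T}}_{\theta_0}$ is not primitive (its stationary state is the pure dark state), so the argument of \cite{Guta_2015} needs a limiting/perturbative version using the analyticity in Hypothesis \ref{hyp:2}; concretely one compares the $\tilde{\rho}^{\rm ss}_\theta$-initialised and $|\chi^{\rm ss}\rangle$-initialised chains, whose states differ by a spectrally suppressed initial slip that does not affect the $O(n)$ growth of the QFI. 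Second, one needs the geometric decay of $\tilde{K}_0^{j}$ on $\{\chi^{\rm ss}\}^{\perp}$ at $\theta_0$ itself, which is exactly what makes the Moore--Penrose inverse $(\mathbf{1}-\tilde{K}_0)^{-1}$ and all the resummations legitimate; this is the same spectral property of the contraction $\tilde{K}_0$ that already enters Theorem \ref{thm:limdistribution} and Lemma \ref{lem:alternative}, and can be taken over from there.
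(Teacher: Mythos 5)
Your proof is correct, but it takes a genuinely different route from the paper's. The paper's proof starts from the \emph{second}-derivative form $\sum_\alpha\lambda_\alpha=\lambda_{\rm tot}=-\Re\langle \chi^{\rm ss},(2\dot{\tilde{K}}_0(\mathbf{1}-\tilde{K}_0)^{-1}\dot{\tilde{K}}_0+\Ddot{\tilde{K}}_0)\chi^{\rm ss}\rangle$ established in Theorem \ref{thm:trajs}, rewrites $\tilde{K}_0,\dot{\tilde{K}}_0,\Ddot{\tilde{K}}_0$ explicitly through the purification/absorber structure \eqref{eq:tilde.K.0}, and matches the two resulting traces term by term against the two summands of formula \eqref{eq:QFI.rate}; it is a self-contained algebraic verification that never touches the joint pure state. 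You instead take the first equality from Lemma \ref{lem:alternative}, compute $\tfrac1n F^{\rm s+a+o}_n$ of the $|\chi^{\rm ss}\rangle$-initialised pure family directly via the decomposition $\partial_\theta\tilde{\Psi}=\sum_m D_m$ and a geometric resummation, and identify the answer with $f_{\theta_0}$ through the invariance of the QFI under the fixed absorber unitaries plus Theorem \ref{Th.QFI} for the original chain. I checked your algebra: the gauge identity $\langle\chi^{\rm ss}|\dot{\tilde{K}}_0\chi^{\rm ss}\rangle=0$, the values of $\|D_m\|^2$ and $\langle D_m|D_{m'}\rangle$, and the final bookkeeping $\|S\|^2-\|S-v_0\|^2=\|v_0\|^2+2\Re\sum_{k\ge1}\langle v_k|v_0\rangle=\|\tilde{K}_1S\|^2$ are all right, and your route has the virtue of explaining \emph{why} the TIMs exhaust the output QFI (the same resolvent series computes both) while needing only first derivatives of the Kraus operators, whereas the paper's needs $\Ddot{\tilde{K}}_0$ and the $\lambda_{\rm tot}$ identity from Theorem \ref{thm:trajs}.

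The one place where your argument is thinner than the paper's is the reduction of $f_{\theta_0}$ to the quantity you actually compute. You do not in fact need Theorem \ref{Th.QFI} for the system--absorber chain at the degenerate point $\theta_0$ (so the ``non-primitivity at $\theta_0$'' worry is a red herring): what you need is that the QFI at each $n$ of the combined pure state equals that of the original system--output state with the system prepared in the \emph{purified mixed} state $\rho^{\rm ss}$, and that replacing this initial condition by the pure $|\varphi\rangle$ of Theorem \ref{Th.QFI} changes the QFI only by $O(1)$. This ``initial slip'' statement is standard and certainly true here, but it is only asserted, not proved, in your write-up; the paper's term-by-term matching against \eqref{eq:QFI.rate} avoids the issue entirely. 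If you want to close it within your framework, note that rate$({\rm o})\le$ rate$({\rm s{+}a{+}o})$ always, and Theorem \ref{thm:limdistribution} already exhibits output observables carrying Fisher information $4\sum_\alpha\lambda_\alpha=$ rate$({\rm s{+}a{+}o})$ per unit time, which sandwiches the output rate; one is then left only with identifying rate$({\rm s{+}o})$ of the original chain across the two initial conditions.
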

The proof of Corollary \ref{coro:fisher} can be found in Appendix \ref{app:fisher}. 

The upshot of this section is that (asymptotically in $n$) the statistical information of the output state is concentrated in the TIMs and the state's restriction to the TIM Bosonic algebra is a coherent state. Formally, in order to optimally estimate the parameter, one would only need to measure the appropriate quadrature of the Gaussian shift model, as explained in section \ref{sec:review.null-measurements}. However, it is not obvious how to perform such a measurement, and the theoretical insight does not seem to help on the practical side. Surprisingly, it turns out that the standard sequential counting measurement is an effective joint measurement of all the TIMs' number operators! This will be the main result of the next section, which in conjunction with the displaced null strategy discussed in section \ref{sec.LAN&DNM}, provides the ingredients of a counting-based estimation strategy.

\section{Limit theorem for counting trajectories}
\label{sec:limit.th.counting}

In this section we continue to investigate the probabilistic properties of the output state and consider the distribution of the stochastic process obtained measuring the output units sequentially in the canonical basis $\{\ket{0}, \ket{1}\}$. We consider the system and absorber dynamics with 
fixed absorber parameter $\theta_0$ and system parameter 
$\theta= \theta_0+u/\sqrt{n}$ for a fixed local parameter $u$. The  output state is given by equation \eqref{eq:output.stat.local}. The probability of observing a sequence $\omega =(\omega_1,\dots , \omega_n)\in \{0,1\}^{n}$ as the outcome of the first $n$ measurements is given by
\begin{equation}
\label
{eq:outcome.distribution}
\nu_{u,n}(\omega):=\rho^{\rm ss}_{u}\left ({\cal B}_{u,\omega_1} \cdots {\cal B}_{u,\omega_n}(\mathbf{1})\right )\end{equation}
where 
$$
{\cal B}_{u,j}(x)=\begin{cases} \tilde{K}^*_{u,1} x\tilde{K}_{u,1} & j=1 \\
\tilde{K}^*_{u,0}x\tilde{K}_{u,0} & j=0 \\\end{cases}.
$$
In order to state the main result of this section, we need to introduce a collection of events: first of all we define
$$B_{\bf 0}(n)=\{(0,\dots,0)\} \subset \{0,1\}^n.
$$
Let $\gamma$ be an arbitrary but fixed real number satisfying $0<\gamma<1$. Let ${\bf m} = \{m_\alpha\}_{\alpha\in \mathcal{P}}$ be a set of pattern excitation counts where all occupation numbers are zero except $(m_{\alpha^{(1)}}, \dots,m_{\alpha^{(k)}})$; we define $B_{\bf m}(n)$ as the set of all binary sequences of length $n$ containing $m_{\alpha^{(1)}}$ copies of $\alpha^{(1)}$s, up to $m_{\alpha^{(k)}}$ copies of $\alpha^{(k)}$s in any order, and such that between two consecutive patterns there are at least $n^\gamma$ $0$s. We remark that $B_{\bf m}(n)$ is the empty set for every $n$ strictly smaller than 
$\sum_{i=1}^{k}m_{\alpha^{(i)}} |\alpha^{(i)}|+(k-1)n^\gamma$ ($|\alpha|$ is the length of the pattern $\alpha$).

For instance, let us consider the set of pattern excitation counts $\mathbf{m}=(2_{(11)},1_{(101)})$, given by $2$ excitations of the type $(11)$ and $1$ of type $(101)$. The set $B_{\mathbf{m}}(n)$ is empty up to $n \geq 7+2\cdot n^\gamma$, then it contains all the strings of length $n$ of the following form:
\begin{align*}
    0\cdots 0\mathbf{101}\underbrace{0\cdots 0}_{n_1}\mathbf{11}\underbrace{0\cdots 0}_{n_2}\mathbf{11}0\cdots 0,\\
    0\cdots 0\mathbf{11}\underbrace{0\cdots 0}_{n_1}\mathbf{101}\underbrace{0\cdots 0}_{n_2}\mathbf{11}0\cdots 0,\\
    0\cdots 0\mathbf{11}\underbrace{0\cdots 0}_{n_1}\mathbf{11}\underbrace{0\cdots 0}_{n_2}\mathbf{101}0\cdots 0,
\end{align*}
where $0\cdots 0$ stays for a sequence of all $0$s and $n_2,n_3\geq n^\gamma$.

The following Theorem shows that the distribution of the 
pattern counts ${\bf m}$ converges to a product of Poisson distributions with the same intensities as those of the number operators of the TIM in Theorem \ref{thm:limdistribution}. This means that performing a standard output measurement and extracting the pattern counts provides an effective joint measurement of the number operators of the TIMs. This finding is essential in constructing an optimal estimator in section \ref{sec:estimator}.

\begin{theorem} \label{thm:trajs}
For every positive constant $C>0$ and finite collection of excitation patterns counts ${\bf m}=(m_{\alpha^{(1)}},\dots ,m_{\alpha^{(k)}})$,
the following limit is equal to zero:
\begin{equation} \label{eq:countlim}
\lim_{n \rightarrow \infty}\sup_{|u|<C}\left |\nu_{u,n}\left (B_{\bf m}(n)\right )-e^{-\lambda_{tot}u^2}\prod_{i=1}^{k} \frac{
\left(\lambda_{\alpha^{(i)}}u^2\right)^{m_{\alpha^{(i)}} }
}{m_{\alpha^{(i)}}!}\right |.
\end{equation}
where 
    \begin{eqnarray} 
    \lambda_{tot}& :=&-\frac{1}{2}\bra{\chi^{\rm ss}}(2\dot{{\cal B}}_{0*}{\cal R}_{0*} \dot{{\cal B}}_{0*}+\Ddot{{\cal B}}_{0*})(\ket{\chi^{\rm ss}}\bra{\chi^{\rm ss}})\ket{\chi^{\rm ss}}
    \nonumber\\
    &=&-\Re(\langle \chi^{\rm ss}, (2\dot{\tilde{K}}_0(\mathbf{1}-\tilde{K}_0)^{-1}\dot{\tilde{K}}_0+\Ddot{\tilde{K}}_0) \chi^{\rm ss} \rangle ).
    \label{eq:lambdatot}
    \end{eqnarray}
Moreover,
\[
\sum_{\alpha}\lambda_\alpha = \lambda_{tot}.
\]
\end{theorem}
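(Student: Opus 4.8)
The plan is to evaluate $\nu_{u,n}(B_{\bf m}(n))$ directly from the transfer--operator formula \eqref{eq:outcome.distribution}. Writing $A_\omega:=\tilde K_{u,\omega_n}\cdots\tilde K_{u,\omega_1}$ we have $\nu_{u,n}(\omega)={\rm Tr}\!\big(A_\omega\,\tilde\rho^{\rm ss}_u\,A_\omega^*\big)$, and since $\tilde\rho^{\rm ss}_u=|\chi^{\rm ss}\rangle\langle\chi^{\rm ss}|+O(n^{-1/2})$ this equals $\|A_\omega\chi^{\rm ss}\|^2$ up to a relative error that will turn out to be negligible. Decompose the event $B_{\bf m}(n)$ as a disjoint union of ``configurations'', each given by an ordering $\beta^{(1)},\dots,\beta^{(|{\bf m}|)}$ of the $|{\bf m}|:=\sum_i m_{\alpha^{(i)}}$ pattern--blocks (one of $|{\bf m}|!/\prod_i m_{\alpha^{(i)}}!$ arrangements of the multiset) together with a gap vector $(g_0,\dots,g_{|{\bf m}|})$ of numbers of $0$'s, with $g_1,\dots,g_{|{\bf m}|-1}\ge n^\gamma$ and $\sum_j g_j=n-L$, $L:=\sum_i m_{\alpha^{(i)}}|\alpha^{(i)}|$. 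Then $\nu_{u,n}(B_{\bf m}(n))=\sum_{\text{orderings}}\sum_{\text{gaps}}\nu_{u,n}(\omega)$, and I will evaluate each summand asymptotically, show it factorises into one factor per $0$--run and one per pattern--block, and resum.

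Two analytic inputs are needed. \emph{(a) Runs of zeros.} The ``no--click'' operator $\tilde K_{u,0}$ is a contraction whose leading eigenvalue $\kappa(u/\sqrt n)$ is simple for $n$ large, with $\kappa(0)=1$, right eigenvector $v_u=\chi^{\rm ss}+\tfrac{u}{\sqrt n}(\mathbf 1-\tilde K_0)^{-1}\dot{\tilde K}_0\chi^{\rm ss}+O(n^{-1})$ and left eigenvector $\ell_u=\chi^{\rm ss}+O(n^{-1/2})$; the first--order eigenvalue shift vanishes because the gauge condition forces $\langle\chi^{\rm ss},\dot{\tilde K}_0\chi^{\rm ss}\rangle=0$. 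One checks that $|\ell_u\rangle\langle\ell_u|$ is a positive eigen--operator of the completely positive map ${\cal B}_{u,0}$ with eigenvalue $|\kappa(u/\sqrt n)|^2$, so $\Lambda(u/\sqrt n):=|\kappa(u/\sqrt n)|^2$ is the Perron eigenvalue of ${\cal B}_{u,0}$; second--order perturbation theory (first order vanishing) then gives $\Lambda(u/\sqrt n)=1-\lambda_{tot}u^2/n+O(n^{-3/2})$ with $\lambda_{tot}$ precisely the quantity in \eqref{eq:lambdatot} (the $\ddot{\tilde K}_0$ term being the diagonal second--order contribution, the $\dot{\tilde K}_0(\mathbf 1-\tilde K_0)^{-1}\dot{\tilde K}_0$ term the reduced--resolvent cross term). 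By Hypothesis~\ref{hyp:2}, $\tilde K_0$ has a spectral gap --- a peripheral eigenvector orthogonal to $\chi^{\rm ss}$ would yield a second invariant state of the primitive map $\tilde{\mathcal T}$ --- which is stable under the $O(n^{-1/2})$ perturbation uniformly for $|u|<C$; hence a run of $g\ge n^\gamma$ zeros sends any vector $\phi$ to $\kappa(u/\sqrt n)^{g}\langle\ell_u,\phi\rangle v_u$ up to $O(r^{\,n^\gamma})$, $r<1$, uniformly in $|u|<C$. \emph{(b) Pattern--blocks.} For a pattern $\beta=(\beta_1,\dots,\beta_k)$ expand $\tilde K_{u,\beta_k}\cdots\tilde K_{u,\beta_1}$ to first order in $u/\sqrt n$; since $\beta_1=1$ and $\tilde K_1\chi^{\rm ss}=0$, when acting on $v_u$ the zeroth--order term and every first--order term carrying $\tilde K_{\beta_1}=\tilde K_1$ on the right drop out, leaving $\tfrac{u}{\sqrt n}w^\beta+O(n^{-1})$ with $w^\beta=\tilde K_{\beta_k}\cdots\tilde K_{\beta_1}(\mathbf 1-\tilde K_0)^{-1}\dot{\tilde K}_0\chi^{\rm ss}+\tilde K_{\beta_k}\cdots\tilde K_{\beta_2}\dot{\tilde K}_1\chi^{\rm ss}$; comparison with \eqref{eq:mu_alpha2} gives $\langle\chi^{\rm ss},w^\beta\rangle=\overline{\mu_\beta}$.

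Now assemble a configuration whose $0$--runs all have length $\ge n^\gamma$: after the opening run the state is proportional to $v_u$; each pattern--block sends $v_u$ to $\tfrac{u}{\sqrt n}w^\beta$, which the next run turns into $\kappa(u/\sqrt n)^{g}\langle\ell_u,\tfrac{u}{\sqrt n}w^\beta\rangle v_u=\tfrac{u}{\sqrt n}\overline{\mu_\beta}\,\kappa(u/\sqrt n)^{g}v_u\,(1+o(1))$, contributing a scalar $u\overline{\mu_\beta}/\sqrt n$; the final vector paired with $\chi^{\rm ss}$ yields no further factor. Hence, with $|\mu_\beta|^2=\lambda_\beta$ (Theorem~\ref{thm:limdistribution}) and grouping by pattern type,
\[
\nu_{u,n}(\omega)=\Lambda(u/\sqrt n)^{\,n-L}\,\frac{1}{n^{|{\bf m}|}}\prod_{i}\big(\lambda_{\alpha^{(i)}}u^2\big)^{m_{\alpha^{(i)}}}\,(1+o(1)),
\]
uniformly in $|u|<C$. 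Configurations with $g_0<n^\gamma$ or $g_{|{\bf m}|}<n^\gamma$ carry a different constant but are still $O(n^{-|{\bf m}|})$ each, and there are only $O(n^{\gamma}n^{|{\bf m}|-1})$ of them, so their total contribution is $O(n^{\gamma-1})\to0$ and may be discarded. The number of remaining configurations is $\frac{|{\bf m}|!}{\prod_i m_{\alpha^{(i)}}!}\cdot\frac{(n-L-(|{\bf m}|-1)n^\gamma)^{|{\bf m}|}}{|{\bf m}|!}(1+o(1))=\frac{n^{|{\bf m}|}}{\prod_i m_{\alpha^{(i)}}!}(1+o(1))$, and since $\Lambda(u/\sqrt n)^{n-L}=(1-\lambda_{tot}u^2/n+O(n^{-3/2}))^{n-L}\to e^{-\lambda_{tot}u^2}$ uniformly for $|u|<C$, multiplying reproduces exactly the right--hand side of \eqref{eq:countlim}. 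The main obstacle is precisely inputs (a)--(b): converting primitivity of $\tilde{\mathcal T}$ into a uniform spectral gap for the no--click operator, and --- more delicately --- tracking the $O(u/\sqrt n)$ deformation $v_u$ of the vacuum through a pattern--block so that the surviving amplitude is \emph{exactly} $\mu_\beta$; dropping either the $(\mathbf 1-\tilde K_0)^{-1}\dot{\tilde K}_0\chi^{\rm ss}$ piece (coming from the perturbed $0$--run eigenvector) or the $\dot{\tilde K}_1$ piece (coming from the block) would spoil the constant. The configuration sum and the uniformity in $u$ are then routine but bookkeeping--heavy. Finally, the identity $\sum_\alpha\lambda_\alpha=\lambda_{tot}$ is purely algebraic: substitute \eqref{eq:sum.lambda.alpha} and \eqref{eq:lambdatot} and use $\tilde K_0\chi^{\rm ss}=\tilde K_0^*\chi^{\rm ss}=\chi^{\rm ss}$, $\tilde K_1\chi^{\rm ss}=0$ together with the first and second $\theta$--derivatives at $\theta_0$ of $\tilde K_{\theta,0}^*\tilde K_{\theta,0}+\tilde K_{\theta,1}^*\tilde K_{\theta,1}=\mathbf 1$; the second derivative sandwiched in $\chi^{\rm ss}$ gives $\Re\langle\chi^{\rm ss},\ddot{\tilde K}_0\chi^{\rm ss}\rangle=-\|\dot{\tilde K}_0\chi^{\rm ss}\|^2-\|\dot{\tilde K}_1\chi^{\rm ss}\|^2$, the first derivative applied to $\chi^{\rm ss}$ gives $\dot{\tilde K}_0^*\chi^{\rm ss}=-\tilde K_0^*\dot{\tilde K}_0\chi^{\rm ss}-\tilde K_1^*\dot{\tilde K}_1\chi^{\rm ss}$, and writing $\|\tilde K_1\xi\|^2=\langle\xi,(\mathbf 1-\tilde K_0^*\tilde K_0)\xi\rangle$ for $\xi=(\mathbf 1-\tilde K_0)^{-1}\dot{\tilde K}_0\chi^{\rm ss}$ makes \eqref{eq:sum.lambda.alpha} and \eqref{eq:lambdatot} agree term by term (equivalently, summing the right--hand side of \eqref{eq:countlim} over all ${\bf m}$ gives $e^{(\sum_\alpha\lambda_\alpha-\lambda_{tot})u^2}$, while $\sum_{\bf m}\nu_{u,n}(B_{\bf m}(n))\to1$ by conservation of probability and negligibility of non--well--separated trajectories, forcing the exponent to vanish for all $u$).
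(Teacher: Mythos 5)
Your proposal is correct, but it follows a genuinely different route from the paper's proof. The paper evaluates the probability \eqref{eq.proba.b} by Taylor--expanding the superoperator correlation function in $u$ and showing, by a derivative--placement argument, that the only surviving contribution up to order $2k$ comes from putting exactly two derivatives on each block ${\cal B}_{u,0}^{x_i}{\cal B}_{u,\alpha^{(i)}}$; the per--block constant $\lambda_{\alpha^{(i)}}$ is then recovered as the sum of three superoperator expectations, and the factor $e^{-\lambda_{tot}u^2}$ together with the remainder are controlled by passing to the tilted maps $\tilde{{\cal B}}_{0,u}=e^{\lambda_{tot}u^2/n}{\cal B}_{0,u}$, whose Perron eigenvalue is $1+O(u^3/n^{3/2})$. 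You instead work at the level of the single Kraus operator $\tilde K_{u,0}$ acting on vectors: second--order Rayleigh--Schr\"odinger perturbation of its leading eigenvalue produces $e^{-\lambda_{tot}u^2}$ directly (the first--order term vanishing by the gauge condition, exactly as in the paper), and the first--order deformation of its eigenvector, combined with the first--order expansion of a pattern block and $\tilde K_1\chi^{\rm ss}=0$, produces precisely the amplitude $\mu_\beta$ of Lemma \ref{lem:alternative} --- so you reuse \eqref{eq:mu_alpha2} rather than rederiving $|\mu_\alpha|^2$ as the paper does. This factorised, one--scalar--per--segment computation avoids the derivative--placement combinatorics entirely and is, for bounded $|u|<C$, the more transparent argument; what it buys less cheaply is the growing--parameter regime $|u|\leq n^{\epsilon^\prime}$ of Proposition \ref{prop:grweps}, for which the paper's tilted--operator remainder estimates are specifically engineered and where your per--configuration error terms would need much more careful tracking. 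Two points to keep honest in a full write--up: (i) the replacement of ${\rm Tr}(\tilde\rho^{\rm ss}_u A_\omega^* A_\omega)$ by $\|A_\omega\chi^{\rm ss}\|^2$ costs only $O(n^{-1/2})$ in total because $\sum_{\omega\in B_{\bf m}(n)}A_\omega^*A_\omega\leq\mathbf 1$, which is worth stating explicitly; (ii) the $O(r^{n^\gamma})$ leakage from a run of zeros is \emph{not} annihilated by the following pattern block (that block kills only the $\chi^{\rm ss}$ component), so these errors propagate additively and must be beaten against the $O(n^{|{\bf m}|})$ configurations --- they are, being superpolynomially small, but the argument should appear. Your closing observation that $\sum_\alpha\lambda_\alpha=\lambda_{tot}$ can be read off from $\sum_{\bf m}\nu_{u,n}(B_{\bf m}(n))\to 1$ is a nice consistency check not in the paper, though as a proof it presupposes the tightness underlying Corollary \ref{coro:lconv}; your direct algebraic verification matches the computation the paper performs at the start of Appendix \ref{app:thmtrjs}.
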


The Proof of Theorem \ref{thm:trajs} can be found in Appendix \ref{app:thmtrjs}.

The previous result has some relevant consequences. Let us define the ``pattern extraction'' function 
which associates to each trajectory $\omega\in \{0,1\}^n$ a set of pattern counts $ \{ N_{\alpha,n}(\omega): \alpha\in \mathcal{P}\}\in \mathbb{N}^\mathcal{P}$, which is uniquely determined by the condition that $\omega$ is a maximal union of contiguous patterns separated by sequences of $0$s of length at least $n^\gamma$ with a fixed $0<\gamma<1$; moreover, let us consider the stochastic process given by the infinite collection of independent random variables $\{N_\alpha: \alpha \in \mathcal{P}\}$ where $N_{\alpha}$ is a Poisson random variable with parameter $\lambda_\alpha u^2$.
\begin{corollary} \label{coro:lconv}
  For every $u \in \mathbb{R}$ the law of the stochastic process $\{ N_{\alpha,n}: \alpha\in \mathcal{P}\}$ under the measure $\nu_{u,n}$ converges to the one of $\{N_{\alpha}:\alpha \in \mathcal{P}\}$. Moreover, for every $\alpha \in \mathcal{P}$, $p\geq 1$ one has
  $$\lim_{n\rightarrow +\infty}\mathbb{E}[N_{\alpha,n}^p]=\mathbb{E}[N_{\alpha}^p].$$
\end{corollary}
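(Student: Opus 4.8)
\emph{Strategy.} The plan is to derive both statements from Theorem~\ref{thm:trajs}, which gives the asymptotics of the ``cylinder'' probabilities $\nu_{u,n}(B_{\mathbf{m}}(n))$, together with a single genuinely analytic ingredient: a bound on the moments of $S_n:=\sum_{i=1}^n\omega_i$, the number of $1$'s in the trajectory, that is uniform in $n$ and in $|u|\le C$. Since $N_{\alpha,n}\le S_n$ for every $\alpha$ and $\sum_\alpha N_{\alpha,n}\le S_n$, such a bound controls every marginal of the process. Throughout, $\{N_{\alpha,n}\}_{\alpha\in\mathcal{P}}$ is viewed as a random element of the Polish space $\mathbb{N}^{\mathcal{P}}$ with the product topology, and ``convergence in law'' means weak convergence of the laws; the limit $\{N_\alpha\}$ is a bona fide $\mathbb{N}^{\mathcal{P}}$-valued element because $\sum_\alpha\lambda_\alpha=\lambda_{tot}<\infty$ (Theorem~\ref{thm:trajs}), so its finite-dimensional distributions are those of a genuine probability measure $P$ on $\mathbb{N}^{\mathcal{P}}$. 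The first claim will follow from convergence of all finite-dimensional distributions together with tightness; the second from the first together with uniform integrability.

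\emph{Finite-dimensional distributions and weak convergence.} Fix a finite $F\subset\mathcal{P}$ and target counts $(m_\alpha)_{\alpha\in F}$. The pattern-extraction map partitions $\{0,1\}^n$ into the events $\{B_{\mathbf{m}}(n)\}_{\mathbf{m}}$, so $\{N_{\cdot,n}|_F=(m_\alpha)\}=\bigsqcup_{\mathbf{m}:\,\mathbf{m}|_F=(m_\alpha)}B_{\mathbf{m}}(n)$; retaining only finitely many $\mathbf{m}$ and letting $n\to\infty$ with Theorem~\ref{thm:trajs} gives
\[
\liminf_{n}\,\nu_{u,n}\!\big(N_{\cdot,n}|_F=(m_\alpha)\big)\ \ge\ \prod_{\alpha\in F}e^{-\lambda_\alpha u^2}\frac{(\lambda_\alpha u^2)^{m_\alpha}}{m_\alpha!}=:P_F\big((m_\alpha)\big),
\]
where the passage to the infinite sum over the remaining modes again uses $\sum_\alpha\lambda_\alpha<\infty$. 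For each $n$, $(m_\alpha)\mapsto\nu_{u,n}(N_{\cdot,n}|_F=(m_\alpha))$ and $(m_\alpha)\mapsto P_F((m_\alpha))$ are probability mass functions on $\mathbb{N}^{F}$; since $\liminf\ge P_F$ pointwise, they must converge pointwise to $P_F$ (a Scheffé-type argument: sum the $\liminf$ inequality over a finite window, bound the complement by the total mass $1$, then exhaust $\mathbb{N}^F$). This is the f.d.d.\ convergence. To upgrade it to weak convergence on $\mathbb{N}^{\mathcal{P}}$, note that pointwise convergence of each one-dimensional marginal mass function to one summing to $1$ makes $\{\mathrm{law}(N_{\alpha,n})\}_n$ tight on $\mathbb{N}$; so, fixing a bijection $k:\mathcal{P}\to\mathbb{N}$ and, for $\eta>0$, integers $M_\alpha$ with $\sup_n\nu_{u,n}(N_{\alpha,n}>M_\alpha)<\eta\,2^{-k(\alpha)-1}$, the compact set $\prod_\alpha\{0,\dots,M_\alpha\}$ carries $\nu_{u,n}$-mass $\ge1-\eta$ for all $n$; hence $\{\mathrm{law}(N_{\cdot,n})\}_n$ is tight, and by Prokhorov every subsequential weak limit has the finite-dimensional distributions of $P$, hence equals $P$.

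\emph{The uniform moment bound, and moments.} Given $N_{\alpha,n}\xrightarrow{{\cal L}}N_\alpha$ and $\mathbb{E}[N_\alpha^p]<\infty$ ($N_\alpha$ being ${\rm Poisson}(\lambda_\alpha u^2)$), moment convergence follows once $\{N_{\alpha,n}^p\}_n$ is uniformly integrable, for which it suffices, via $N_{\alpha,n}^p\le S_n^p$ with $q=p+1$, to prove $\sup_n\sup_{|u|\le C}\mathbb{E}_{u,n}[S_n^{\,q}]<\infty$ for every $q\ge1$. From \eqref{eq:outcome.distribution}, summing the unconstrained positions (which insert $\tilde{\mathcal{T}}_u:=\tilde{\mathcal{T}}_{\theta_0+u/\sqrt n}={\cal B}_{u,0}+{\cal B}_{u,1}$),
\[
\nu_{u,n}(\omega_{j_1}=\dots=\omega_{j_r}=1)=\Tr\big(\tilde{\rho}^{\rm ss}_u\,\tilde{\mathcal{T}}_u^{\,j_1-1}{\cal B}_{u,1}\tilde{\mathcal{T}}_u^{\,g_1}{\cal B}_{u,1}\cdots{\cal B}_{u,1}\tilde{\mathcal{T}}_u^{\,n-j_r}(\mathbf{1})\big)
\]
for $j_1<\dots<j_r$ with free-block lengths $g_s:=j_{s+1}-j_s-1$. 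Equivalently, by the Schr\"odinger-picture chain rule (the free blocks leave $\tilde{\rho}^{\rm ss}_u$ exactly invariant, and each post-$1$ conditional state reequilibrates to $\tilde{\rho}^{\rm ss}_u$ at a geometric rate $r<1$ uniform near $\theta_0$ — primitivity and aperiodicity of $\tilde{\mathcal{T}}_u$ in a neighbourhood, Hypothesis~\ref{hyp:2}(1)), one gets the exact factorisation
\[
\nu_{u,n}(\omega_{j_1}=\dots=\omega_{j_r}=1)=p_u\prod_{s=1}^{r-1}\big(p_u+\eta_s\big),\qquad 0\le p_u+\eta_s\le p_u+C'r^{g_s},
\]
where $p_u:=\Tr(\tilde{\rho}^{\rm ss}_u\tilde{K}_{u,1}^*\tilde{K}_{u,1})$ and $C'$ is uniform because $\|\tilde{K}_{u,1}\|=O(1)$. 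The decisive point is $p_u=O(u^2/n)$ uniformly: the function $\theta\mapsto\Tr(\tilde{\rho}^{\rm ss}_\theta\tilde{K}_{\theta,1}^*\tilde{K}_{\theta,1})$ is analytic (Hypothesis~\ref{hyp:2}), non-negative, and vanishes at $\theta_0$ — where $\tilde{K}_{\theta_0,1}\ket{\chi^{\rm ss}}=0$ and $\tilde{\rho}^{\rm ss}_{\theta_0}=\ket{\chi^{\rm ss}}\bra{\chi^{\rm ss}}$ — hence has a double zero there. Summing over $1\le j_1<\dots<j_r\le n$ and expanding $\prod_s(p_u+C'r^{g_s})$, the ``$r^{g_s}$'' factors are summed using $\sum_{g\ge0}r^g<\infty$ while each ``$p_u$'' factor costs a further free sum of size $\lesssim n$, so every resulting term is $\lesssim(np_u)^k$ for some $1\le k\le r$, hence $O_r(1)$ since $np_u=O(u^2)$. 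Finally, expanding $S_n^{\,q}=\sum_{i_1,\dots,i_q}\omega_{i_1}\cdots\omega_{i_q}$ and grouping terms by their set of distinct indices writes $\mathbb{E}_{u,n}[S_n^{\,q}]$ as a finite combination of such sums, so $\mathbb{E}_{u,n}[S_n^{\,q}]=O_q(1)$ uniformly, giving the required uniform integrability. The soft ingredients above (Scheff\'e, tightness on $\mathbb{N}^{\mathcal{P}}$, uniform integrability) are routine; the main obstacle is this last estimate — organising the telescoping so that $r$, $C'$ and the bound $O_q(1)$ are genuinely uniform both in $n$ and in $u\in[-C,C]$, leveraging the double zero of $p_u$ and the neighbourhood spectral gap of $\tilde{\mathcal{T}}_u$ from Hypothesis~\ref{hyp:2}.
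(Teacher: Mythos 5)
Your proposal is correct, and it splits naturally into a half that matches the paper and a half that is genuinely different. For the weak convergence, your route (partition $\{0,1\}^n$ into the $B_{\mathbf m}(n)$, use Theorem~\ref{thm:trajs} plus a Scheff\'e/Fatou argument to get the finite-dimensional distributions, then tightness on $\mathbb N^{\mathcal P}$ and Prokhorov) is essentially the paper's argument, which instead picks a finite ``good'' set $A_\epsilon$ of pattern counts carrying mass $1-\epsilon$ and tests against bounded functions; the two are interchangeable. For the moments, both proofs rest on the same domination $N_{\alpha,n}\le N_{\rm tot}(n)$ and uniform integrability, but the engines differ: the paper computes the moment generating function $\mathbb E[e^{zN_{\rm tot}(n)}]=\tilde\rho^{\rm ss}(u/\sqrt n)(\tilde{\mathcal T}_{u,z,n}^{\,n}(\mathbf 1))$ and uses analytic perturbation theory of the deformed transfer operator, the key point being that the first-order coefficient $\lambda_z^{(1)}$ of the leading eigenvalue vanishes; you instead bound the correlation functions $\nu_{u,n}(\omega_{j_1}=\dots=\omega_{j_r}=1)$ directly by a chain-rule factorisation, using the uniform spectral gap and the fact that the one-step click probability $p_u$ has a double zero at $u=0$ (a nonnegative analytic function vanishing at a point), so that $np_u=O(u^2)$. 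These are two faces of the same phenomenon — your double zero of $p_u$ is exactly why $\lambda_z^{(1)}=0$ in the paper — but your version is more elementary and avoids having to establish convergence in law of $N_{\rm tot}(n)$ itself, which the paper needs for its truncation step. Two small presentational caveats: the word ``exact factorisation'' should really be read as the chain rule for conditional click probabilities, with $\eta_s$ depending on the whole history but bounded by $C'r^{g_s}$ uniformly over histories (which is all you use); and the uniformity of the gap and of $C'$ over $|u|\le C$ does need Hypothesis~\ref{hyp:2}(1)--(2) in a neighbourhood of $\theta_0$, as you note. Neither affects correctness.
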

The Proof of Corollary \ref{coro:lconv} can be found in Appendix \ref{sec:lconvapp}.

However, in the following we will be interested in local parameters with growing size, i.e. $|u| \leq n^{\epsilon^\prime}$ for some $0 <\epsilon^\prime <1/2$. In this case we can show the following result.
\begin{proposition} \label{prop:grweps}
 For $0 <\epsilon^\prime <1/6$ and for every finite collection of excitation patterns counts ${\bf m}$ the following holds true:
\begin{equation} \label{eq:grweps}
\lim_{n \rightarrow +\infty}\sup_{|u| \leq n^{\epsilon^\prime} } \left | \frac{\nu_{u,n}\left (B_{\mathbf{m}}(n)\right )}{ e^{-\lambda_{tot}u^2}\prod_{i=1}^{k} \frac{\lambda_{\alpha^{(i)}}^{m_i}u^{2m_i}}{m_i!}}-1 \right |=0.
\end{equation}  
\end{proposition}
The Proof of Proposition \ref{prop:grweps} can be found in Appendix \ref{app:thmtrjs} as well. Upgrading this result to a  weak convergence one similar to Corollary \ref{coro:lconv} remains the subject of future research.
\section{Pattern counting estimator}
\label{sec:estimator}

In this section we describe our adaptive estimation scheme which exploits the asymptotic results presented in sections \ref{sec:CLTPoisson} and \ref{sec:limit.th.counting}. The scheme involves four key ingredients: 
\begin{itemize}
    \item[i)] 
    perform a simple output measurement (no absorber) to compute a preliminary estimator;
    
    \item[ii)] set the absorber parameter by using the displaced-null measurement technique developed in \cite{GiGoGu} and run the system-absorber dynamics for the remainder of the time;
    \item[iii)] perform a sequential counting measurement in the output and extract counts for the TIMs from the outcomes trajectory;
    \item[iv)]
    construct a simple estimator expressed in terms of total counts of patterns for different TIM modes.
    
\end{itemize}

The first step of the adaptive protocol is to use $\tilde{n} := n^{1-\epsilon}\ll n$ output units to produce a rough preliminary estimator 
$\tilde{\theta}_n$. This can be done by performing a repeated standard basis measurement on the output (without using an absorber). 
Typically, the estimator
$\tilde{\theta}_n$ will have variance scaling with the standard rate $\tilde{n}^{-1}= n^{-1+2\epsilon}$, and $\tilde{n}^{1/2}(\tilde{\theta}_n-\theta)$ will satisfy the central limit theorem and a concentration bound ensuring that $|\tilde{\theta}_n-\theta| =O(n^{-1/2+\epsilon})$ with high probability. 
For instance, one can define 
$\tilde{\theta}_n$ by equating the empirical counting rate with the theoretical rate $n_\tau:= {\rm Tr}(\rho^{\rm ss}_{\tau} K_{\tau, 1}^*K_{\tau,1})$, see \cite{GutaCB15,GiGaGu23}. 

In the second step we set the absorber at a parameter value $\theta_0$ and run the system-absorber quantum Markov chain for the reminder of the time $n^\prime = n-\tilde{n}$. 
The naive choice for $\theta_0$ is our best guess $\tilde{\theta}_n$ about $\theta$, based on the first stage measurement. However, with this choice, the counting measurement suffers from the non-identifiability issue described in section \ref{sec:review.null-measurements}. This can be resolved by further displacing the absorber parameter by an amount 
$\delta_n:= \tau_n/\sqrt{n} $ where $\tau_n= n^{3\epsilon}$ so that 
$\theta_{\rm abs}:= 
\tilde{\theta}_n - \delta_n$. As usual we write 
$\theta = \tilde{\theta}_n+u_n/\sqrt{n}$ where $u_n$ is a local parameter satisfying $|u_n|\leq n^{\epsilon}$, so that 
$\theta= 
\theta_{\rm abs}
+(u_n+\tau_n)/\sqrt{n}$.

In the third step we perform \emph{standard basis measurements} in the output of the modified dynamics which includes the absorber. For simplicity, in the discussion below we ignore the fact that we have $n^\prime = n-\tilde{n}$ rather than $n$ output units, which does not affect the error scaling of the estimator. Let $\omega= (\omega_1,\dots , \omega_n)$ be the measurement outcome with distribution \eqref{eq:outcome.distribution}, where the local parameter $u_n$ is replaced by 
$u_n+\tau_n$ to take into account the displacement.

We now describe the construction of the estimator from the outcomes of this last stage measurement. The estimator will be built using the ``pattern extraction'' function $ \{ N_{\alpha,n}(\omega): \alpha\in \mathcal{P}\}\in \mathbb{N}^\mathcal{P}$ that we defined in the previous section; we recall that it
associates to each trajectory $\omega\in \{0,1\}^n$ a set of pattern counts, which is uniquely determined by the condition that $\omega$ is a maximal union of contiguous patterns separated by sequences of $0$s of length at least $n^\gamma$ with a fixed $0<\gamma<1$. This means that the algorithm will not detect any pattern which contains a sequence of zeros of length larger than $n^\gamma$, since this would be seen as being made up of several identified patterns.

For illustration, if we pick $\gamma \leq 0.41$, the sequence in Figure \ref{fig:intro} d) would result in the following pattern counting: $3$ patterns of the form $(1)$, $2$ patterns of the form $(11)$ and one pattern of the types $(101)$ and $(111)$ (the patterns are colored in red in the picture). Indeed, the sequence has $72$ bits, therefore we would perform a `cut' after 6 consecutive $0$s since $72^\gamma \approx5.7$.

Since $\theta-\theta_{\rm abs}$ is of order $n^{-1/2+ 3\epsilon}$, the expected number of patterns is of order $n^{6\epsilon}$  and the gaps between them are expected to be of size $n^{1-6\epsilon}$. Therefore, we choose the buffer parameter such that $0<\gamma< 1-6\epsilon$, which is always possible as long $\epsilon<1/6$. 

We now introduce the final estimator using an intuitive argument based on extrapolating the results of Theorem \ref{thm:trajs} from fixed to slowly growing local parameters $u_n+\tau_n$. This means that $N_{\alpha,n}(\omega)$ is approximately distributed as ${\rm Poisson}((u_n+\tau_n)^2 |\mu_\alpha|^2)$, for large $n$; since $\tau_n= n^{3\epsilon}$ is larger than $u_n= O(n^\epsilon)$, the intensity of the Poisson distribution diverges with $n$, and the distribution can be approximated further by the normal  
$N((u_n+\tau_n)^2|\mu_\alpha|^2, (u_n+\tau_n)^2|\mu_\alpha|^2 ) $ with the same mean and variance. Using 
\begin{eqnarray*}
\frac{1}{\tau_n}(u_n+\tau_n)^2 &=&
2u_n + \tau_n + o(1) \\  
\frac{1}{\tau^2_n}(u_n+\tau_n)^2 &=& 1+ o(1)
\end{eqnarray*}
we obtain that 
$$
Y_{\alpha,n} := \frac{1}{|\mu_\alpha|}\left(\frac{N_{\alpha,n}}{\tau_n }-\tau_n |\mu_\alpha|^2 \right) 
$$
has approximate distribution $N(2u_n|\mu_\alpha| , 1)$. A simple computation shows that the optimal estimator of $u_n$ based on the (approximately) normal variables $Y_{\alpha,n}$ is the linear combination
\begin{equation}
\label{eq:u.hat.final}
\hat{u}_n:= Y_n: = \frac{2}{ f_{\theta_{\rm abs}}}\sum_\alpha |\mu_\alpha| Y_{\alpha,n} =\frac{2}{f_{\theta_{\rm abs}}\tau_n}\sum_\alpha N_{\alpha,n} - \frac{\tau_n}{2}
\end{equation}
where $f_{\theta_{\rm abs}}= 4\sum_\alpha |\mu_\alpha|^2$ is the quantum Fisher information rate of the output by Corollary \ref{coro:fisher}. Note that $Y_n$ depends only on the \emph{total number of patterns} of the trajectory $\omega$, not to be confused with the total number of $1$s. 

Since $Y_{\alpha,n}$ is approximately normal with distribution $N(2u_n|\mu_\alpha| , 1)$, we obtain that $\hat{u}_n$ has approximate distribution $N(u_n, f_{\theta_{\rm abs}}^{-1})$. 
The final estimator of $\theta$ is 
\begin{equation}
\label{eq:theta.hat.final}
\hat{\theta}_n:=\tilde{\theta}_n+\frac{\hat{u}_n}{\sqrt{n}}.
\end{equation}
and it attains the QCRB in the sense that 
$\sqrt{n}(\hat{\theta}_n-\theta)$ converges in distribution to $N(0,f_  \theta^{-1})$.

At the moment we only have a rigorous proof of this statement assuming a stronger version of Proposition \ref{prop:grweps}, which we were not able to obtain; however, we point out that Theorem 2 in \cite{GiGoGu} establishes a similar optimality result in the case of multi-parameter estimation with independent, identical copies.

Before proceeding, we need to introduce some more notations. Let us consider the following collection of random variables:
$$\overline{N}_{n,\theta,\tilde{\theta}}\sim\text{Poisson}(\lambda_{\rm tot}(\tilde{\theta})(\sqrt{n}(\theta-\tilde{\theta})+\tau_n)^2),$$
where $n \in \mathbb{N}$, and $\tilde{\theta}$, $\theta \in \Theta$. We recall that $\tau_n=n^{3\epsilon}$ is the displacement size. Let us define
$$\overline{Y}_{n,\theta,\tilde{\theta}}:=\frac{1}{2\tau_n \lambda_{\rm tot}(\tilde{\theta})}
\overline{N}_{n,\theta,\tilde{\theta}}
- \frac{\tau_n}{2}.$$
\begin{theorem} \label{thm:opt}
Let fix $\theta \in \Theta$ at which $f_\theta$ is continuous and let $\tilde{\theta}_n$ be a preliminary estimator which uses $\tilde{n}:=n^{1-\epsilon}$ samples with $\epsilon$ small enough, such that it satisfies the concentration bound
\begin{equation}\label{eq:concentration}
\mathbb{P}_\theta(| \tilde{\theta}_n-\theta |>n^{-1/2 +\epsilon})
\leq C e^{-n^{\epsilon} r}
\end{equation}
for some constants $C,r>0$. Let 
\[
\hat{\theta}_n:=\tilde{\theta}_n+\frac{Y_n}{\sqrt{n}}  ,
\]
be the final estimator as defined in \eqref{eq:u.hat.final} and \eqref{eq:theta.hat.final}.

If for every $a \in \mathbb{R}$ one has
\begin{equation} \label{eq:hypo}
\lim_{n \rightarrow +\infty}\sup_{\tilde{\theta}:|\tilde{\theta}-\theta|<n^{-1/2+\epsilon}} |\mathbb{E}_\theta[e^{iaY_n}|\tilde{\theta}_n=\tilde{\theta}]-\mathbb{E}[e^{ia\overline{Y}_{n,\theta, \tilde{\theta}}}]|=0,
\end{equation}
then $\hat{\theta}$ is asymptotically optimal and asymptotically normal, i.e. the following convergence in law holds for large $n$
\[
\sqrt{n}(\hat{\theta}_n-\theta) \xrightarrow[]{{\cal L}_\theta} {\cal N}\left (0, \frac{1}{f_\theta} \right ).
\]
\end{theorem}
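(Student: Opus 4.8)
The plan is to reduce the statement to a conditional central limit theorem for $Y_n$ given the preliminary estimator, then integrate over the law of $\tilde\theta_n$. First I would condition on the event $A_n := \{|\tilde\theta_n - \theta| < n^{-1/2+\epsilon}\}$, which by the concentration bound \eqref{eq:concentration} has probability $1 - o(1)$, in fact with exponentially small complement; on $A_n^c$ the contribution to any bounded characteristic function is negligible, so it suffices to work on $A_n$. On this event write $\theta = \theta_{\rm abs} + (u_n + \tau_n)/\sqrt n$ with $\theta_{\rm abs} = \tilde\theta_n - \delta_n$ and $u_n = \sqrt n(\theta - \tilde\theta_n)$, so $|u_n| < n^\epsilon$ while $\tau_n = n^{3\epsilon}$ dominates it.

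**Key steps.** The heart of the argument is hypothesis \eqref{eq:hypo}: conditionally on $\tilde\theta_n = \tilde\theta$, the characteristic function of $Y_n$ converges uniformly (over $\tilde\theta$ in the relevant shrinking neighbourhood) to that of $\overline Y_{n,\theta,\tilde\theta}$, the centred-and-scaled Poisson variable with intensity $\lambda_{\rm tot}(\tilde\theta)(\sqrt n(\theta-\tilde\theta)+\tau_n)^2$. Granting this, I would analyse $\overline Y_{n,\theta,\tilde\theta}$ directly: its intensity is $\Theta(\tau_n^2) = \Theta(n^{6\epsilon}) \to \infty$, so the Poisson variable, suitably normalised, is asymptotically Gaussian; a short computation using $(\sqrt n(\theta-\tilde\theta)+\tau_n)^2 = \tau_n^2 + 2\tau_n\sqrt n(\theta-\tilde\theta) + n(\theta-\tilde\theta)^2$ together with $\tau_n^{-1}\cdot(\text{this}) = \tau_n + 2\sqrt n(\theta-\tilde\theta) + o(1)$ and $\tau_n^{-2}\cdot(\text{this}) = 1 + o(1)$ shows that $\overline Y_{n,\theta,\tilde\theta}$ has mean $\sqrt n(\theta-\tilde\theta) + o(1)$ and variance $\to (4\lambda_{\rm tot}(\tilde\theta))^{-1} = f_{\tilde\theta}^{-1}$ (using $\sum_\alpha \lambda_\alpha = \lambda_{\rm tot}$ from Theorem \ref{thm:trajs} and $4\sum_\alpha\lambda_\alpha = f_{\theta_0}$ from Corollary \ref{coro:fisher}). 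Since $f$ is assumed continuous at $\theta$ and $|\tilde\theta - \theta| \to 0$ on $A_n$, one gets $f_{\tilde\theta} \to f_\theta$. Hence conditionally $\sqrt n(\hat\theta_n - \theta) = Y_n - \sqrt n(\theta - \tilde\theta_n) \xrightarrow{\mathcal L} \mathcal N(0, f_\theta^{-1})$; and because this limit does not depend on $\tilde\theta$ within $A_n$, dominated convergence applied to the (bounded) characteristic functions, together with $\mathbb P_\theta(A_n) \to 1$, yields the unconditional convergence in law.

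**Main obstacle.** The genuine difficulty — and the authors flag it explicitly — is that hypothesis \eqref{eq:hypo} is precisely the piece that Proposition \ref{prop:grweps} only establishes for $|u| \leq n^{\epsilon'}$ with $\epsilon' < 1/6$ and only in the ratio-convergence form, not the weak-convergence form needed for characteristic functions, and moreover $\overline Y_{n,\theta,\tilde\theta}$ is built from $\lambda_{\rm tot}(\tilde\theta)$ rather than the individual $\lambda_\alpha(\tilde\theta)$, so one must also verify that replacing the vector of independent Poissons $(N_{\alpha,n})$ by a single Poisson of the summed intensity does not change the limiting characteristic function of the linear functional $Y_n = (f_{\theta_{\rm abs}}\tau_n)^{-1}\sum_\alpha N_{\alpha,n} - \tau_n/2$ — which is true since a sum of independent Poissons is Poisson with the summed rate, so this reduction is exact at the level of $\sum_\alpha N_{\alpha,n}$ and only the passage to growing $u_n+\tau_n$ is at issue. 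Therefore I would state \eqref{eq:hypo} as an explicit hypothesis (as the theorem does) and carry out only the deduction above; the honest statement is that a strengthening of Proposition \ref{prop:grweps} to a uniform weak-convergence statement over the regime $|u| \le \tau_n + n^\epsilon$ would close the gap, and the analogous i.i.d. result (Theorem 2 of \cite{GiGoGu}) is offered as evidence that no new phenomenon obstructs it.
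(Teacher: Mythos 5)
Your proposal is correct and follows essentially the same route as the paper: decompose the unconditional characteristic function over the concentration event $A_n$ and its (exponentially negligible) complement, invoke hypothesis \eqref{eq:hypo} to replace the conditional characteristic function of $Y_n$ by that of $\overline{Y}_{n,\theta,\tilde\theta}$, compute the latter explicitly from the Poisson characteristic function with diverging intensity to get the uniform Gaussian limit $\mathcal{N}(\sqrt{n}(\theta-\tilde\theta),\,f_\theta^{-1})$, and integrate. Your closing remarks on the gap between \eqref{eq:hypo} and Proposition \ref{prop:grweps}, including the observation that summing independent Poissons into a single Poisson of total rate is exact, also match the discussion the authors give after their proof.
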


The proof of the following theorem and the relationship between the extra hypothesis in Eq. \eqref{eq:hypo} and Proposition \ref{prop:grweps} can be found in Appendix \ref{app:finalestim}. In few words, the difference between the statement of Proposition \ref{prop:grweps} and the hypothesis in Eq. \eqref{eq:hypo} is the following: Eq. \eqref{eq:grweps} concerns the asymptotic behaviour of the probability of observing a finite number of patterns, while, in order to prove Theorem \ref{thm:opt}, one would need the asymptotic behaviour of the probability of observing a collections of patterns with a growing number of patters.

In the following section we illustrate our method with results of numerical simulations on a qubit model.

\section{Numerical experiments}
\label{sec:numerics}
In this section we illustrate the estimation protocol through numerical simulations, using a qubit model inspired by the previous work \cite{GiGoGu}.
The simulations are implemented in Python using the QuTiP package \cite{Qutip}.

The quantum Markov chain model consists of a two-dimensional system coupled to two-dimensional noise units by a unitary $U_\theta$ with unknown parameter $\theta \in \mathbb{R}$, where the noise units are all prepared in the same initial state $| 0\rangle$. Since the system interacts with a fresh noise unit at each step, it suffices to specify the action of  $U_\theta$ on the states $\ket{00}$ and $\ket{10}$, and we define
\begin{eqnarray*}
\label{eqn:sims.U}
    U_\theta : \ket{00} &\to& \cos(\theta)  \sqrt{1-\theta^2} \ket{00} \nonumber \\
    &&+ i \sin(\theta) \sqrt{1-\theta^2} \ket{10} + \theta \ket{11}, \nonumber \\
    U_\theta : \ket{10} &\to& i \sin(\theta)  \sqrt{1-\lambda} \ket{00} \nonumber \\
    &&+ \cos(\theta) \sqrt{1-\lambda} \ket{10} + \sqrt{\lambda} e^{i \phi} \ket{01}, \nonumber
\end{eqnarray*}
where $\lambda, \phi$ are known parameters. In  simulations we  used $\phi=\pi/4, \lambda=0.8$ and $\theta=0.2$ for the true values of the parameter.

In a first simulation study we verify the predictions made in Corollary \ref{coro:lconv}. We run the dynamics for a total of $n=6\times 10^5$ time steps, with absorber parameter 
$\theta_{\rm abs}= \theta_0$ (cf. section \ref{sec:CA}) and system parameter $\theta= \theta_0+ u/\sqrt{n}$ such that $\theta=0.2$ and the local parameter is $u =2$.

For each run we perform repeated measurements in the standard basis of the output units, to produce a measurement record 
$\omega =(\omega_1, \dots, \omega_n)$. For each such trajectory, the pattern counts $N_{\alpha, n}(\omega)$ are obtained by identifying patterns (sequences starting and ending with a 1) which are separated by at least $n^\gamma$ $0$s and no pattern contains more than $n^\gamma$ $0$s, where $\gamma>0$ is a small parameter. This can be done by combing through the sequence and identifying occurrences of a given such pattern padded by $n^\gamma$ $0$s to the left and right (taking care of the special case of the first and last patterns). Note that for any given $n$ this procedure will not count patterns with more than  $n^\gamma$ successive $0$s. Since the mean counts for each pattern $\alpha$ is $|\mu_\alpha|^2 u^2$ and $|\mu_\alpha|^2$ decays exponentially with $|\alpha|$, we find that patterns of such length are unlikely to occur for large $n$. The results of $N=2000$ independent repetitions of the experiment are illustrated in Figure \ref{fig:counts.no.initial} which shows a good match between the counts histograms corresponding to several patterns (in blue) and the theoretical Poisson distributions 
${\rm Poisson}(|\mu_\alpha|^2 u^2)$ (in orange) , as predicted by Corollary \ref{coro:lconv}.

\begin{figure}[h]
    \centering
    \includegraphics[width=\linewidth]{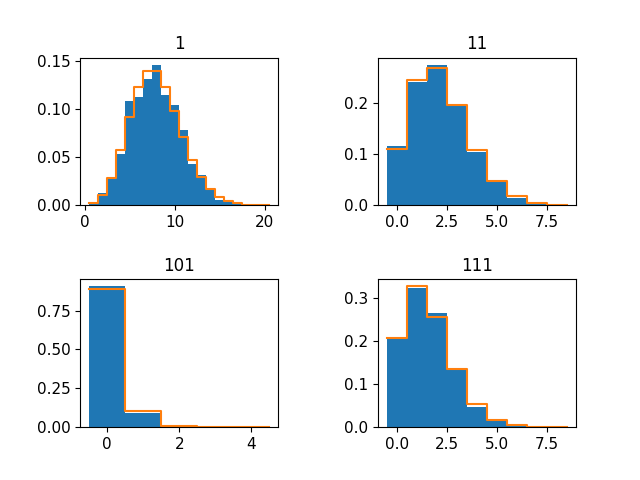}
    \caption{ (Blue) Counts histograms for patterns $\alpha=1, 11,101, 111$ from N=2000 trajectories.  (Orange line) The Poisson distribution with intensity given by $|\mu_\alpha|^2 u^2$ matches well the empirical counts distribution, as prescribed by Corollary \ref{coro:lconv}.}
    \label{fig:counts.no.initial}
\end{figure}

In a second simulation study we use system parameter
$\theta=0.2$ and set the absorber at $\theta_{\rm abs} = \theta-\delta_n$, with displacement $\delta_n= n^{-1/2}\tau_n$ for $n=6\times 10^5$ and 
$\tau_n = 7$. We perform the same measurement as above and extract the pattern counts $N_{\alpha,n}(\omega)$ for each trajectory $\omega$. We then use the pattern counts estimator \eqref{eq:theta.hat.final} to estimate $\theta$, taking 
$\tilde{\theta}_n=\theta$. This amounts to assuming that the first stage of the general estimation procedure outlined in section \ref{sec:estimator} gives a perfect estimator, which is then used in setting the absorber parameter in the second step. While this procedure cannot be used in a practical situation, the study has theoretical value in that it allows us to study the performance of the pattern counts estimator in \emph{its own right}, rather than in conjunction with the first step estimator. Figure \ref{fig:est.no.initial} shows that the final estimator 
$\hat{\theta}_n$ has Gaussian distribution with variance closely matching $1/(n f_\theta)$, thus achieving the QCRB in this idealised setup. This can also be seen by comparing the ``effective'' Fisher information 
$ F_{\rm eff}:= (n \overline{(\hat{\theta}_n-\theta)^2})^{-1}$ where the mean square error is estimated from the data, with the QFI rate $f_\theta$; the former is equal to $F_{\rm eff} = 13.8$ while the latter is $f_\theta = 13.5$.

\begin{figure}[h!]
    \centering
    \includegraphics[width=\linewidth]{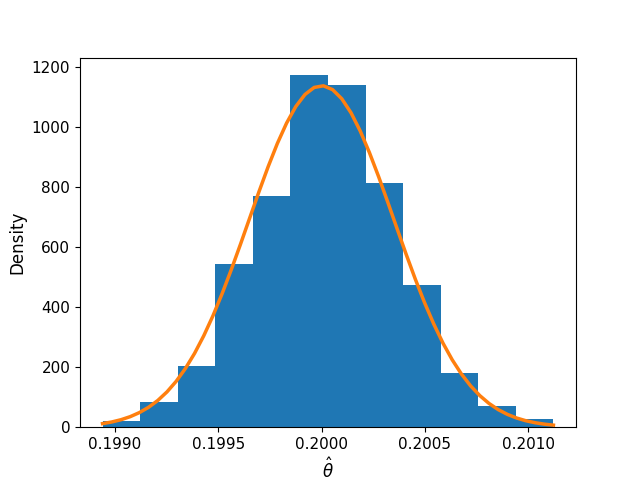}
    \caption{(Blue) Histogram of the final estimator  $\hat{\theta}_n$ from N=1000 trajectories with no first stage estimation ($\theta=\tilde{\theta}_n$). The effective Fisher information (inverse of rescaled estimated variance)   
    $F_{\rm eff}\approx13.8$ matches closely the QFI $f_\theta = 13.5$. (Orange line) For comparison we plot the density of the normal distribution with mean 
    $\bar{\theta}=0.2$ and variance 
    $\sigma^2 =  (n f_\theta)^{-1}$.
    } 
     \label{fig:est.no.initial}
\end{figure}

In the third simulations study we implement the full estimation procedure described in section \ref{sec:estimator} including the first stage estimator. In the first step we fix $\theta =0.2$, and run the Markov chain (without the absorber) for $\tilde{n}= 4\times 10^5$. We then perform sequential measurements in the standard basis on the output noise units to obtain a measurement trajectory 
from which we compute $\tilde{\theta}_n$ by equating the empirical mean (the average number of $1$ counts)
with the stationary expected value 
 $c_\tau: = {\rm Tr} (\rho^{\rm ss}_\tau K_{\tau,1}^* K_{\tau,1})$, where $K_{\tau,i}$ are the system's Kraus operators. The effective Fisher information of this estimator is 
 $F_{\rm eff} = 4.06$, significantly lower than the QFI $f_\theta =13.5$.

We then set the absorber to 
$\theta_{\rm abs} = \tilde{\theta}_n - n^{-1/2}\tau_n$ where $\tau_n = 25.5 $ and $n=6.6\times 10^6$ and we run the system and absorber chain for $n^\prime = n-\tilde{n}$ steps. The system and absorber  is initialised in the pure stationary state corresponding to $\tilde{\theta}_n$, but since the system and absorber dynamics is assumed to be primitive, any other initial state will 
result in equivalent asymptotic results. We then perform the counting measurement as in the previous simulation studies to obtain a trajectory $\omega\in \{0,1\}^{n\prime}$ and extract the pattern counts $N_{\alpha,n^\prime}(\omega)$.

From these average counts we compute the estimator \eqref{eq:theta.hat.final}, where the Fisher information $f$ is computed at $\tilde{\theta}_n$ and multipled by $n^\prime/n$ to account for the smaller number of samples used in the last step. The results are illustrated in Figure \ref{fig:est.full.protocol} which compares the histogram of $\hat{\theta}_n$ (in blue) with the density of a normal distribution with mean $\theta$ and variance $1/(n f_\theta)$. We find that there is a good fit with the normal distribution but less accurate that that of the second simulation study, cf. Figure \ref{fig:est.no.initial}. This is expected, since the final estimator is based on a two stage estimation process and does not use any prior information about $\theta$. A more accurate measure of the protocol's performance is given by the  effective Fisher information which works out as $F_{\rm eff} = 10.8$ compared to the QFI rate $f_\theta=13.5$, while the effective Fisher information of the first stage was only $4.06$. These simulation results are in agreement with the theoretical arguments put forward in section \ref{sec:estimator} which indicate that the two stage estimator attains the QCRB asymptotically.

\begin{figure}
    \centering
    \includegraphics[width=\linewidth]{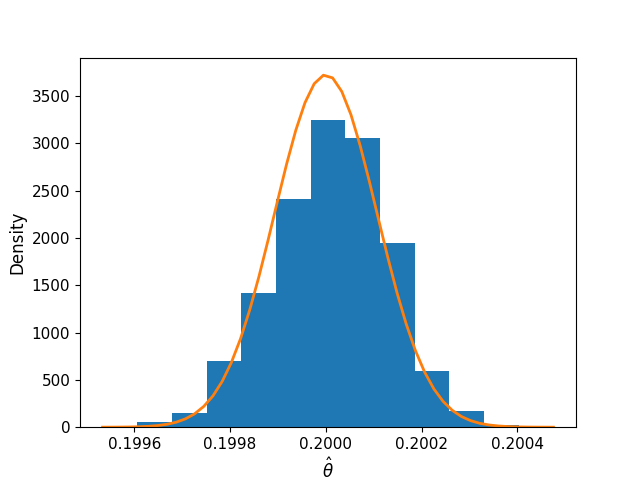}
    \caption{(Blue) Histogram of final estimator $\hat{\theta}_n$ from N=1070 trajectories; the effective Fisher information is $F_{\rm eff}=10.8$ compared to the QFI 
    $f_\theta =13.5$. (Orange line) The density of the normal distribution with mean $\bar{\theta}=0.2$ and variance $\sigma^2 = (nf_\theta)^{-1}$}.
     \label{fig:est.full.protocol}
\end{figure}

      \label{fig:final.estimator}
\section{Conclusions and Outlook}

In this paper we developed a computationally efficient strategy to estimate  dynamical parameters of a quantum Markov chain and provided strong theoretical evidence that the estimator achieves the quantum Cram\'{e}r-Rao bound in the large time limit. In addition, we established asymptotic results pertaining to the mathematical structure of the output state which are of more general interest.

The estimation strategy consisted of two estimation stages. In the first stage a rough  estimator is computed from outcomes of simple output measurements by using a fast but non-optimal procedure, e.g. equating the empirical counts with the expected value at the estimated parameter. In the second stage, we used a coherent quantum absorber \cite{StannigelRAblZoller} to post-process the output \cite{Godley2023,DayouCounting}. When tuned to the true value of the system parameter, the absorber ``reverts'' the evolution such that, in the stationary regime, the post-processed output is identical to the input ``vacuum'' product state. On the other hand, small  mismatches between system and absorber parameters lead to slight rotations away from this product state, which can be detected by simple sequential measurements on the output ``noise units''.
To achieve the perfect ``null'' setup it would seem natural to use the first stage estimator as absorber parameter, but as shown in \cite{GiGoGu}, this leads to non-identifiability issues and sub-optimal final estimators. Instead, we applied the displaced-null technique \cite{GiGoGu} which prescribes an extra parameter shift, calibrated to 
remove the non-identifiability issue while preserving the optimality of the sequential measurement.

The key theoretical contributions of this work are related to the understanding of the output state and the stochastic measurement process. We introduced the concept of translationally invariant modes (TIMs) of the output and showed how they generate a Bosonic algebra in the asymptotic limit. Each mode is labelled by a binary sequence called a ``pattern''
and its creation operator is an average of shifted blocks consisting of tensor products operators. 
We then showed that when the mismatch between system and absorber parameter scales at the estimation rate $n^{-1/2}$, the restriction of the output state to the TIMs becomes a multi-mode coherent state whose displacement depends linearly on the mismatch (quantum Gaussian shift model). Moreover, we showed that the TIMs carry all the quantum Fisher information of the output state and are therefore the relevant quantum statistics of the problem. While homodyne is the standard optimal measurement for such models, 
in the presence of the additional parameter displacement the 
modes amplitudes become large and counting measurements become effectively equivalent to homodyne. Surprisingly, we discovered that the sequential counting measurement acts as a joint measurement of all TIMs number operators. Due to the proximity to the vacuum state, typical trajectories consist of a relatively small number of patterns separated by long sequences of $0$s. We showed that for large times the patterns counts distribution converges to the Poisson distribution of the TIMs coherent state. These insights allowed us to devise a simple ``pattern counting'' estimator for estimating the unknown parameter.

Our discrete-time results open the way for fast and optimal continuous-time estimation strategies based on coherent absorber post-processing \cite{DayouCounting} and simple pattern counting estimation, as opposed to expensive maximum likelihood methods. Interesting and important topics for future investigations concerns the robustness of the pattern counting method with respect to various types of noises, improving the estimation accuracy for short times, extensions to multi-parameter models and the relationship between the general theory of quantum absorbers \cite{Tsang24} and quantum estimation.

\emph{Acknowledgements.} This work was supported by the EPSRC grant EP/T022140/1. MG thanks Juan Garrahan, Mankei Tsang, Dayou Yang, Martin Plenio and Susana Huelga for fruitful discussions. The Python code and  the simulation data used in this paper can be found at \url{https://www.doi.org/10.5281/zenodo.16755954} 

\appendix

\section
{Displaced-null measurements and asymptotic normality}
\label{sec.LAN&DNM}

In this section we review the concept of \emph{quantum local asymptotic normality} (QLAN) in its simplest form involving a one-parameter family of pure states, and use it to better understand the asymptotic theory of SLD and displaced-null measurements. For the general QLAN theory we refer to \cite{LAN1,LAN2,LAN3,LAN4,LAN5,LAN6} and \cite{Yamagata13,Fujiwara20,Fujiwara22}. For our purposes, QLAN expresses the fact that the statistical model $|\psi_\theta\rangle^{\otimes n}$ describing an ensemble of $n$ identically prepared systems can be approximated by a simpler quantum Gaussian model consisting of a \emph{coherent state} of a continuous variables (CV) system, whose mean is related to $\theta$ by a linear transformation. Below, we will show that the SLD measurement and the null measurement in the multi-copy model correspond to measuring canonical variables and respectively displaced number operators in the Gaussian picture.

More precisely, let  
$\theta_0$ be a fixed parameter value and write $\theta= \theta_0 + u/\sqrt{n}$ where $u$ is a \emph{local parameter} describing the deviation from $\theta_0$. The latter can be thought of as being the first stage estimator $\tilde{\theta}_n$, and will be used as such later on, but for the moment is considered to be fixed and known. 
To simplify the presentation we assume that $\langle \psi_{\theta_0} |\dot{\psi}_{\theta_0}\rangle= 0$, which can always be satisfied by appropriately choosing the phase of $|\psi_\theta\rangle$ around $\theta_0$. Let $\{|0\rangle, \dots , |d-1\rangle\} $ be an orthonormal basis such that 
$|0\rangle = |\psi_{\theta_0}\rangle$. 
Then in the first order of approximation we can write the state as 
\begin{eqnarray*}
|\psi_\theta\rangle&=&
|0\rangle  + \frac{u}{\sqrt{n}} |\dot\psi_{\theta_0}\rangle
+O(n^{-1})\\
&=&
|0\rangle  + \frac{u}{\sqrt{n}} 
\sum_{j=1}^{d-1} c_j |j\rangle + O(n^{-1}), \quad
c_j = \langle j |
\dot\psi_{\theta_0}
 \rangle .
\end{eqnarray*}
The SLD is given by 
\begin{equation}
\label{eq:SLD.pure}
\mathcal{L}_{\theta_0}  = 
2(|\psi_{\theta_0} \rangle \langle \dot\psi_{\theta_0} | + 
|\dot \psi_{\theta_0} \rangle \langle \psi_{\theta_0} |) 
\end{equation}
and the QFI is 
$F_{\theta_0}= 4\|\dot \psi_{\theta_0}\|^2 = 4\|{\bf c}\|^2$, where ${\bf c}= 
(c_1,\dots, c_{d-1})\in \mathbb{C}^{d-1}$.

We now construct a Gaussian model which approximates the multiple copies model in the neighbourhood of $\theta_0$. Let 
$\mathcal{F}_{d-1}=\mathcal{F}^{\otimes{(d-1)}}$ be the Fock space of $d-1$ CV modes with annihilation operators 
$(a_1,\dots ,a_{d-1})$, and let $|{\bf n}\rangle = \otimes_{j=1}^{d-1} |n_j\rangle$ be the Fock basis, where ${\bf n}= (n_1, \dots , n_{d-1})\in \mathbb{N}^{d-1}$. We further denote by 
$|{\bf z}\rangle = \otimes_{j=1}^{d-1} |z_i\rangle$ a coherent state with amplitude ${\bf z}= (z_1,\dots , z_{d-1})\in \mathbb{C}^{d-1}$ such that 
$\langle {\bf z} |a_j|{\bf z}\rangle =z_j$. On the Fock space we consider the quantum statistical model with parameter 
$u\in \mathbb{R}$ consisting of coherent states
\begin{equation}
\label{eq:Gaussian.shift}
|{\bf c}u\rangle = \otimes_{j=1}^{d-1} |c_j u\rangle. 
\end{equation}
We now show (in three different ways) that the multiple copies model  
$|\psi^n_u\rangle := |\psi_{\theta_0+u/\sqrt{n}}^{\otimes n}\rangle$, which capture the local properties of the original model around $\theta_0$ in terms of the rescaled parameter $u$, is approximated by the Gaussian model $|{\bf c}u\rangle$. Firstly, one can show that the collective variables 
\begin{eqnarray}
Q^n_j&: =& \frac{1}{\sqrt{2n}}
\sum_{i=1}^n{(|0\rangle\langle j| + |j\rangle\langle 0|)_{j}}
\nonumber\\
P^n_j&: =& \frac{1}{\sqrt{2n}}
\sum_{i=1}^n{( -i|0\rangle\langle j| +i |j\rangle\langle 0|)_{j}}
\label{eq:CLT}
\end{eqnarray}
 converge in distribution to the canonical variables $Q_j, P_j$ of the CV Gaussian model, with respect to the state $|{\bf c}u\rangle$. Secondly, the convergence holds on the geometric level, i.e. it can be expressed in terms of the overlaps 
$$
\lim_{n\to\infty}
\langle \psi^n_u| \psi^n_v\rangle =
\langle {\bf c}u | {\bf c}v\rangle.
$$
where $u,v$ are fixed local parameters. The reader will recognise that for $d=2$ these statements are statistical reformulations of results from coherent spin states \cite{Radcliffe}.

The third type of QLAN convergence is akin to the Le Cam strong convergence of statistical models in classical statistics \cite{Vaart1998}; it states that the multi-copy and Gaussian models can be mapped into each other operationally by means of quantum channels. Let 
$\mathcal{S}_{n}$ be the symmetric subspace of 
$\left(\mathbb{C}^d\right)^{\otimes n}$ and let $V_n: \mathcal{S}_{n}\to \mathcal{F}_{d-1}$ be the isometry 
$$
|{\bf n}; n\rangle \mapsto |{\bf n}\rangle
$$
where $|{\bf n}; n\rangle$ is the normalised projection onto ${\cal S}_n$ of the basis vector $\left (\otimes_{j=1}^{d-1} |j\rangle^{\otimes n_j} \right )\otimes|0\rangle^{n-|{\bf n}|}\in \left(\mathbb{C}^d\right )^{\otimes n}$. Then the following holds for any $0<\gamma<1$ \cite{LAN6}
$$
\lim_{n\to \infty}
\sup_{|u|\leq n^{1-\gamma}} \| V_n |\psi^n_u\rangle  - |{\bf c}u\rangle \| =0
$$
which means that the original model can be mapped isometrically (with no loss of information) into the Gaussian model with vanishing asymptotic error, uniformly over all the relevant local parameters.

From the estimation perspective, QLAN can be used to devise an asymptotically optimal, two steps measurement strategy for the multi-copy model. 
We first compute a preliminary estimator $\tilde{\theta}_n$ by using a small sub-sample and set $\theta_0 =\tilde{\theta}_n$. Subsequently, we map the ensemble state into the (approximately) Gaussian one by applying the isometry $V_n$, and then measure the SLD ${\bf a}{\bf c}^\dagger + {\bf c}{\bf a}^\dagger$ of the Gaussian shift model, which is the limit of $\frac{1}{\sqrt{n}} \sum_{i=1}^n\mathcal{L}^{(i)}_{\theta_0}$ where $\mathcal{L}_{\theta_0}$ is the SLD \eqref{eq:SLD.pure}. Joining the two estimation stages, we compute the final estimator as
$\hat{\theta}_n = \tilde{\theta}_n +\hat{u}_n/\sqrt{n}$ where 
$\hat{u}_n = X/F_{\tilde{\theta}_n}$ and $X$ is the result of the second stage measurement. Similarly to  \eqref{eq:two.step.SLD}, the estimator
$\hat{\theta}_n$ defined above is optimal in the sense that it achieves the QCRB asymptotically, cf. equation \eqref{eq:asymptotic.achievability.QCRB}.

We now move away from SLD measurements and consider the QLAN perspective on displaced-null measurements. An important point here is that the parameter $u$ of the Gaussian shift model $|{\bf  c}u\rangle$ is not completely arbitrary but can be considered to be bounded by $n^{\epsilon}$, where $n$ is the sample size of the qudit ensemble. Indeed, recall that 
the reference parameter $\theta_0$ should be thought of as a preliminary estimator $\tilde{\theta}_n$ obtained by measuring a sub-sample of $\tilde{n}= n^{1-\epsilon}$ qudits. 
Assuming this has standard concentration properties i.e. $|\theta -\tilde{\theta}_n|< n^{-1/2 +\epsilon}$ with high probability, and writing $\theta= \tilde{\theta}_n+u/\sqrt{n}$ we obtain $|u|\leq n^{\epsilon}$ \cite{GiGoGu}.

While the collective SLD observable in the qudit model maps onto a quadrature of the limit model, a measurement in the null basis 
$\{ |e_0 \rangle \equiv|\psi_{\theta_0} \rangle, \dots ,|e_{d-1}\rangle \}$ corresponds to a simultaneous measurement of the number operators $N_j = a_j^*a_j$, for all the CV modes $a_j$. More precisely, in the limit of large $n$ the joint distribution of the counts $\{N_{j,n}\}_{j=1}^{d-1}$ for outcomes corresponding to vectors $\{|e_j\rangle\}_{j=1}^{d-1}$, converges to the joint distribution of the number operators $\{N_j = a_j^* a_j\}_{j=1}^{d-1}$ with respect to the coherent state $|{\bf c}u\rangle$, which is the product of independent Poisson distributions 
$ {\rm Poisson} (|c_j|^2 u^2)$. Since the latter depends on $u^2$, the local parameter is non-identifiable so the measurement does not distinguish the parameters $\theta_{\pm}:= \theta_0\pm u/\sqrt{n}$. Consequently, any final estimator has an error of the same order as the statistical error of the initial estimator, and is therefore far from optimal. Let us now change the reference point to 
$\theta_0 = \tilde{\theta}_n - \delta_n$, with $\delta_n = \tau_n/\sqrt{n}$ and $\tau_n = n^{3\epsilon}$. From this vantage point the limit model is 
 $|{\bf c}(u+ \tau_n)\rangle$, and measuring in the displaced-null basis $\{ |e_0 \rangle \equiv|\psi_{\theta_0} \rangle, \dots |e_{d-1}\rangle \}$ is again asymptotically equivalent to measuring the number operators $\{N_j\}_{j=1}^{d-1}$, with the difference that the limiting distribution is the product of ${\rm Poisson}(|c_j|^2(u+\tau_n)^2)$. Since $|u|<n^{\epsilon}$ (with high probability) due to the preliminary estimation step, and $\tau_n\gg n^{\epsilon}$ this means that $u$ is uniquely determined by the measurement distribution.
 Moreover, for large $n$ the Poisson distribution can be approximated by the Gaussian $N(\lambda_i^n, \lambda_i^n )$ where 
 $\lambda_i^n= |c_i|^2(u+\tau_n)^2$. By expanding $(u+\tau_n)^2/\tau_n$ and neglecting $u^2/\tau_n= O(n^{-\epsilon})$ we obtain that the rescaled variable
 $$
Y_{i,n} = \frac{1}{|c_i|}
\left(\frac{N_{i,n}}{\tau_n} -\tau_n |c_i|^2\right)
 $$
converges in  distribution to $N(2u|c_i|, 1)$. A simple signal to noise analysis shows that the best estimator of $u$ is the linear combination
$$
\hat{u}_n= \frac{1}{2\|{\bf c}\|^2}\sum_{j=1}^{d-1} |c_j| Y_{j,n} = \frac{1}{2\tau_n \|c\|^2}\sum_{j=1}^{d-1}N_{j,n} - \frac{\tau_n}{2}
$$
Its limiting distribution is $N(u, F_\theta^{-1})$ where $F_\theta= 4\|{\bf c}\|^2$ is the quantum Fisher information of the original model. In particular
the final estimator $\hat{\theta}_n =\tilde{\theta}_n+ \hat{u}/\sqrt{n}$ achieves the QCRB in the sense of \eqref{eq:asymptotic.achievability.QCRB}.

\section{Proofs of Proposition \ref{prop:fock.states} 
and Corollary \ref{prop:creation.action}} 
\label{app:TIM}

\emph{Proof of Proposition \ref{prop:fock.states}.} Let $\bm{\alpha} = (\alpha^{(1)},\dots, \alpha^{(p)}), \bm{\beta}=(\beta^{(1)}, \dots ,\beta^{(q)})$ be the patterns with non-zero counts for ${\bf n}$ and respectively ${\bf m}$; let $(n_1, \dots, n_p)$ and   $(m_1,\dots, m_q)$ be their counts, and let $M$ be the maximum length of all patterns in $\bm{\alpha}$ and $\bm{\beta}$. 

Since pattern creation operators $A_{\alpha}^*(n)$ involve sums of $\sigma^\alpha_i$ for different positions $i$, the Fock state 
$|{\bf n} ;n\rangle$ is a superposition of vectors obtained by applying the following type of \emph{ordered} products to the vacuum
$$
\prod_{k=1}^{n_p} \sigma^{\alpha^{(p)}}_{i_{p,k}}\dots 
\prod_{k=1}^{n_1} \sigma^{\alpha^{(1)}}_{i_{1,k}}
$$
where $i_{j,k}$ is the index marking the location of the left end of the $k$-th pattern $\alpha^{(j)}$, with $j\in \{1,\dots, p\}$ and $k\in \{1,\dots , n_j\} $. In general, some of these operators may `overlap' (act on the same spin) and the computation of the superposition becomes cumbersome.
However, since the total length of the patterns is finite, for large n, the main contribution in the superposition comes from arrangements in which there are no overlapping patterns. Even more, we can restrict to arrangements where the patterns are separated by at least $M$ zeros, in which case each pattern  in the sequence of zeros and ones can be identified unambiguously. 

Therefore
\begin{eqnarray}
&&|{\bf n};n\rangle 
\nonumber \\
&&=
\frac{1}{\sqrt{n^{|{\bf n}|}}}
\frac{1}{\sqrt{{\bf n}!}}
\sum_{{\bf i} \in\mathcal{I}({\bf n};n)} \, \prod_{j=1}^p 
\, \prod_{k=1}^{n_j} 
\, \sigma^{\alpha^{(j)}}_{i_{j,k}} 
|\Omega_n  \rangle + o(1) 
\nonumber\\
&&=
\frac{1}{\sqrt{n^{|{\bf n}|}}}
\frac{1}{\sqrt{{\bf n}!}}
\sum_{{\bf i} \in\mathcal{I}({\bf n};n) } 
|\omega({\bf i}, {\bf n})\rangle
+ o(1)
\label{eq:Fock.state1}
\end{eqnarray}
where ${\bf n}!:=n_1!\dots n_p!$ and  $\mathcal{I}({\bf n};n)$ is the subset of locations ${\bf i}=\{ i_{j,k}\}$ leading to non-overlapping patterns with counts set ${\bf n}$, such that all patterns are at distance of at least $M$ from each other, and $\omega({\bf i}, {\bf n})\in \{0,1\}^n$ is the basis vector (trajectory) obtained by placing the all the patterns corresponding to the counts set ${\bf n}$ at locations prescribed by ${\bf i}$. More precisely ${\bf i}\in \mathcal{I}({\bf n};n)$ if for any two pairs $(j,k)$ and $(\tilde{j}, \tilde{k})$ with 
$i_{j,k}\leq i_{\tilde{j},\tilde{k}}$ one has 
    $i_{j,k}+ |\alpha^{(j)}|+M-1 <i_{\tilde{j}, \tilde{k}}$. The basis vector $|\omega({\bf i}, {\bf n})\rangle$ consist of zeros except patterns $\alpha^{(j)}$ written at positions $i_{j,k}$ for $j\in \{1,\dots ,p\}$ and $k\in \{1,\dots , n_j\}$.

    To show that the remainder term is $o(1)$ note that any term in the superposition $|{\bf n } ,n \rangle $ corresponding to a specific product of $\sigma$s is a vector of the standard basis, and any such vector 
    $|\omega\rangle$ has at most a fixed number $M\cdot |{\bf n}|$ of $1$s. Note also that the action of applying a pattern 
    $\sigma^{\alpha}_i$ to the vacuum cannot be reversed by applying subsequent $\sigma$s, since these contain only creation operators. This means that for a given 
    $|\omega\rangle$, the locations ${\bf i} =\{i_{j,k}\}$ of the $\sigma$s  producing this vector are limited to an area of size $2 M$ around each $1$ in the sequence, or in other words, the coefficient 
    of $|\omega\rangle$ is bounded by $(2 M)^{M|{\bf n}|}$. On the other hand, the number of basis vectors 
    $|\omega\rangle$ obtained by applying patterns such that at least two overlap or are at distance smaller than $M$ from each other is $o(n^{|{\bf n}|})$ since the number of possible locations for two such $\sigma$s is $O(n)$. Therefore, the remainder term in \eqref{eq:Fock.state1} is $o(1)$.

Using equation \eqref{eq:Fock.state1} we obtain
\begin{eqnarray*}
&&\langle{\bf m};n |{\bf n};n\rangle=\\
&&\frac{1}{\sqrt{n^{|{\bf n}|} \cdot n^{|{\bf m}|}\cdot {\bf n}! \cdot {\bf m}!}}
\sum_{{\bf i} , \tilde{\bf i} }
\langle \omega({\bf i}, {\bf n})|\omega(\tilde{\bf i}, {\bf m}) \rangle +o(1)
\end{eqnarray*}
where the sum runs over 
${\bf i}\in \mathcal{I}({\bf n};n)$ and 
$\tilde{\bf i}\in \mathcal{I}({\bf m};n)$. Since each ${\bf i}$ and $\tilde{\bf i}$ uniquely determines the patterns it contains, the basis vectors $|\omega({\bf i}, {\bf n})\rangle$  and 
$|\omega(\tilde{\bf i}, {\bf m}) \rangle$ have non-zero overlap (coincide) only if their sets of patterns coincide, 
i.e. ${\bf n}= {\bf m}$. Therefore, if ${\bf n}\neq {\bf m}$  hold then
$$
\lim_{n\to\infty}
\langle{\bf m};n |{\bf n};n\rangle =0.
$$

On the other hand,
if ${\bf m} = {\bf n}$ then
$$
\lim_{n\to\infty}
\langle{\bf n};n |{\bf n};n\rangle = 
\lim_{n\to\infty}
\frac{1}{n^{|{\bf n}|} {\bf n}!} ({\bf n}!)^2
\frac{|\mathcal{I}({\bf n},n)|}{{\bf n}!} =1
$$
where we took into account that each basis 
vector $|\omega ({\bf i}, {\bf n})\rangle$ appears ${\bf n}!$ times in the sum over ${\bf i}\in \mathcal{I}({\bf n};n)$, and that
$$
\lim_{n\to \infty} 
\frac{|\mathcal{I}({\bf n} , n)|}{n^{|\bf n|} } =1 .
$$
\qed

\emph{Proof of Corollary \ref{prop:creation.action}.} As shown in the proof of Proposition \ref{prop:fock.states} the "Fock state" $|{\bf n}; n\rangle$ can be approximated by a superposition of basis states in which the patterns in ${\bf n}$ are non-overlapping and are situated at least at a certain distance from each other. Moreover the remainder term contains $o(n^{|{\bf n}|})$ basis vectors with bounded coefficients. After applying $A^*_\beta(n)$, the multiplicity of each basis vector $|\omega\rangle$ is finite and the number of possible basis vectors is 
$o(n^{|{\bf n}|+1})$. Taking account of the factor $n^{-(|{\bf n}|+1)/2}$ we find that the action of $A^*_\beta(n)$ on the 
$o(1)$ remainder in \eqref{prop:fock.states} is still $o(1)$. On the other hand, the action on the main term in \eqref{prop:fock.states} is to add a pattern $\beta$ separated by the other patterns by ${\rm max}(M, |\beta|)$, with a negligible term coming from locations in which $\beta$ overlaps with or is too close to one of the existing patterns. The factor $\sqrt{n_\beta +1}$ comes from the definition of the "Fock state" $|{\bf n}+ \delta^{(\beta)};n\rangle$.

The action of $A_\beta(n)$ on one of the vectors $|\omega({\bf i}, {\bf n})\rangle$ is to produce a superposition of basis vectors in which the pattern 
$\beta$ has been removed from the set 
${\bf n}$ of patters, at all possible locations. This may include removing part of a existing pattern $\alpha^{(j)}$ which coincides with $\beta$. These two case will produce orthogonal vectors and can be evaluated separately. In the first case, the pattern $\beta$ is removed from one of the locations where such pattern existed. 

In the Fourier decomposition of 
$\sqrt{n_\beta}|{\bf n}-\bm{\delta}^\beta; n\rangle$, 
a basis vector 
$|\omega\rangle$ given by non-overlapping patterns has coefficient 
$$
c(\omega) = 
\sqrt{n_\beta}\frac{\bf n!}{n_\beta} \frac{\sqrt{n\cdot n_\beta}}{\sqrt{n^{|{\bf n}|}{\bf n}!}} =
{\bf n}! \frac{\sqrt{n}}{\sqrt{n^{|{\bf n}|}{\bf n}!}}
$$
Such a vector can be obtained in approximately $n$ ways by removing a pattern $\beta$ from a basis vector which appears in the decomposition of 
$|{\bf n}; n\rangle$. Therefore its coefficient in $A_{\beta} |{\bf n}, n\rangle$ is approximately
$$
\frac{{\bf n}!}{\sqrt{n}}\cdot n\cdot  \frac{1}{\sqrt{n^{|{\bf n}|} {\bf n}!}}
$$
where we took into account a factor $1/\sqrt{n}$ from the definition of $A_\beta$. We therefore obtain that the coefficients of the non-overlapping terms in $A_\beta |{\bf n}; n\rangle $ and 
$\sqrt{n_\beta} |{\bf n} -\bm{\delta}^\beta; n\rangle$ agree asymptotically. The fact that the $o(1)$ terms remain small after applying $A_\beta$ can be shown similarly to the above.

\qed

\section{Proofs of Theorem \ref{thm:limdistribution} 
and Lemma \ref{lem:alternative}} \label{app:thmlimdistr}

\begin{proof}[Proof of Theorem \ref{thm:limdistribution}]
In order to prove the theorem, we will make use of the method of moments; this can be done since the moment problem corresponding to the moments of Poisson and Gaussian random variables admits a unique solution (this can be seen for instance using Cram\'er condition, see \cite{Li17,Sc20}).

\underline{\textbf{Number operators.}} We will show the convergence of the moments of $N_\alpha(n)$ in the state $\rho^{\rm out}_{u,n}$ to those of a Poisson random variable of intensity $\lambda(u):=u^2 \lambda_\alpha$. We recall the the $r$-th moment of a Poisson random variable of intensity $\lambda(u)$ is equal to $\sum_{m=1}^r S(r,m) \lambda(u)^m$ where $S(r,m)$ are the Stirling numbers of the second type: a combinatorial interpretation of $S(r,m)$ is the number of partitions in $m$ non-empty subsets of a set of cardinality $r$. 

Let us focus on the expression of the $r$-th moment of $N_\alpha(n)$. For simplicity we 
denote $\langle X\rangle_u := \tilde{\rho}^{\rm out}_{u,n}(X)$. From \eqref{eq:number.op} we have
\begin{equation}\label{eq:rmom}
\langle N_\alpha(n)^r \rangle_{u}=\frac{1}{n^r} \sum_{\substack{i_1,\dots,i_r=1 \\ j_1,\dots,j_r=1}}^{n-|\alpha|+1}\langle \sigma^\alpha_{i_1}\sigma^{\alpha*}_{j_1}\cdots\sigma^\alpha_{i_r}\sigma^{\alpha*}_{j_r}  \rangle_{u}.
\end{equation}

\textbf{Splitting of the sum based on non-overlapping groups of $\sigma$s.} Let us consider a term $\sigma^\alpha_{i_1}\sigma^{\alpha*}_{j_1}\cdots\sigma^\alpha_{i_r}\sigma^{\alpha*}_{j_r} $ and represent each $\sigma^\alpha_{i}$ or $\sigma^{\alpha*}_{i}$  as a block of length $|\alpha|$ covering positions $\{i, i+1, \dots i+|\alpha|-1\}$ of the string $\{1,2,\dots n\}$.
Depending on the overlapping pattern of the blocks, the indices can be split (uniquely) in a number $s= s(i_1, j_1,\dots ,i_r, j_r)$ of groups ($1 \leq s \leq 2r$) such that blocks in different groups do not overlap, and each group cannot be split into further non-overlapping sub-groups. Among these groups we identify  
$g =g(i_1, j_1,\dots ,i_r, j_r) $ special groups characterised by the fact that they are made up of one or more pairs of blocks of the type $(\sigma_{z}^{\alpha*}\sigma^\alpha_{z})$ for some $z$. 
We call ${\cal P}_0$ the set of such groups. Note that not all groups associated to a product of $\sigma$s can be in $\mathcal{P}_0$ because the order of $\sigma_{z}^{\alpha*}$ and $\sigma_{z}^{\alpha}$ in the latter is opposite to that in which such terms appear in $N_\alpha(n)$.

For example, consider the pattern $\alpha=$"$11$" for $n=10$ and the term
$$
\sigma^{\alpha}_8\sigma_1^{\alpha *} \sigma^{\alpha}_6 \sigma^{\alpha*}_8 \sigma^{\alpha}_1 \sigma_2^{\alpha^*} \sigma_2^{\alpha}\sigma_5^{\alpha*} 
$$
This has $3$ non-overlapping groups of operators (that commute with each others): the first one
$$\sigma^{\alpha*}_1\sigma_1^{\alpha} \sigma^{\alpha*}_2  \sigma_2^{\alpha}
$$
belongs to ${\cal P}_0$, while the second and  third ones 
$$
\sigma^\alpha_6\sigma^{\alpha*}_5
\qquad \text{and}\qquad \sigma^\alpha_8 \sigma_8^{\alpha*}
$$
are not in ${\cal P}_0$. Therefore $s=3$ and $g=1$.

Let us now look at the expected value of a given product of $\sigma$s. We have
\begin{equation}
\label{eq:exp.cs&ts}
\langle \sigma^\alpha_{i_1}\sigma^{\alpha*}_{j_1}\cdots\sigma^\alpha_{i_r}\sigma^{\alpha*}_{j_r}  \rangle_{u}=
\rho^{\rm ss}_{u}\left ({\cal C}_{u,1} \tilde{\mathcal{T}}^{x_1}_{u}\cdots \tilde{\mathcal{T}}^{x_{s-1}}_{u}{\cal C}_{u,s} (\mathbf{1})\right )
\end{equation}
where $x_i$'s are the distances between the non-overlapping groups and ${\cal C}_{u,i}$ is the map corresponding to the $i$-th group, which is computed according to the following rule. In  every group we multiply the operators on a given position (we may have several $\sigma^+, \sigma^-$ or $\sigma^0$ on one position) and the result will be an element of the set $\mathcal{O}:= \{\sigma^+, \sigma^-, \sigma^0, \ket{0}\bra{0}, \ket{1}\bra{1}, {\bf 0}\}$. For example the first group above gives
$$
\sigma^{\alpha*}_1\sigma_1^{\alpha} \sigma^{\alpha*}_2  \sigma_2^{\alpha} 
=(\ket{0}\bra{0})_1(\ket{0}\bra{0})_2(\ket{0}\bra{0})_3
$$
while
$$
\sigma^\alpha_6\sigma^{\alpha*}_5
=\sigma^-_5(\ket{1}\bra{1})_6 \sigma^+_7 
$$
and 
$$
 \sigma^\alpha_8 \sigma_8^{\alpha*}
=(\ket{1}\bra{1})_8(\ket{1}\bra{1})_9.
$$
Suppose the result of this computation is 
$O^1_{i}\cdots O^k_{i+k-1}$, with $O^k\in \mathcal{O}$; then in the expectation, this translates into a  superoperator 
${\cal C}_{u} = {\cal C}_u[O^1]\circ \dots \circ {\cal C}_u[O^k]$ obtained by composing in the same order basic maps ${\cal C}_u[O]$ defined as follows
\begin{align*}
&{\cal C}_u [\sigma^+]:= {\cal A}_{u,1}, \quad {\cal C}_u [\sigma^0]:= {\cal A}_{u,0}, \quad {\cal C}_u [\sigma^-]:= {\cal A}_{u,-1},\\ &{\cal C}_u [\mathbf{0}]:= 0,
{\cal C}_u [\ket{j}\bra{j}]:= {\cal B}_{u,j}, ~  j\in\{0,1\}.
\end{align*}
where ${\cal A}_{u,i}$ are defined as in equation \eqref{eq:aop} by replacing $\tilde{K}_i$ with $\tilde{K}_{u,i}$
and
\[{\cal B}_{u,j}(x)=\begin{cases} \tilde{K}^*_{u,1} x\tilde{K}_{u,1} & j=1 \\
\tilde{K}^*_{u,0}x\tilde{K}_{u,0} & j=0 \\\end{cases}.\]
In what follows we will drop the label $u$ when $u=0$.


We can now compute the expectation of our example product of $\sigma$s as
\[\begin{split}   
&\langle \sigma^{\alpha}_8\sigma_1^{\alpha *} \sigma^{\alpha}_6 \sigma^{\alpha*}_8 \sigma^{\alpha}_1 \sigma_2^{\alpha^*} \sigma_2^{\alpha}\sigma_5^{\alpha*} \rangle_u=\\
&\rho^{\rm ss}_{u} \left ( {\cal B}_{u,0}^{3} \tilde{\mathcal{T}}_{u}{\cal A}_{u,-1}{\cal B}_{u,1}{\cal A}_{u,1}{\cal B}^2_{u,1}(\mathbf{1})\right ).
\end{split}\]

Let us define the space
\[
\mathbf{1}^\perp:=\{x \in B(\mathcal{H}_{sa}): \tilde{\rho}^{\rm ss}(x)=0\},
\]
and note that the following properties hold true:
\begin{itemize}
    \item ${\cal A}_0$ and ${\cal B}_0$ leave $\mathbf{1}^\perp$ invariant;
    \item the image of ${\cal A}_i$ and ${\cal B}_j$  for $i,j \neq 0$ is contained in $\mathbf{1}^\perp$;
    \item for every $m \geq 1$, the norm of ${\cal A}^m_0=\tilde{\mathcal{T}}^m$ restricted to $\mathbf{1}^\perp$ is less or equal than $c\lambda^m$ for some $c>0$ and $1>\lambda>0$. 
    
\end{itemize}

The reason why we singled out groups in ${\cal P}_0$ is given by the following key observation: such groups are \emph{the only ones} for which  the corresponding map ${\cal C}_{u,i}$ is only composed by ${\cal A}_{u,0}$s and ${\cal B}_{u,0}$s. Any other group will contain at least one $\mathcal{A}_{u,\pm 1}$ or $\mathcal{B}_{u,1}$ factor.  From this and the properties above it follows that if $\mathcal{C}$ is not in $\mathcal{P}_0$ then 
\begin{equation}
\label{eq:exponential.decay.TC}
\|\tilde{\mathcal{T}}^m \circ \mathcal{C}\|\leq c\lambda^m.
\end{equation}

We will often identify the maps with the corresponding patterns saying, for instance, that the map is in ${\cal P}_0$ when the corresponding group is.

We will prove the convergence of moments by expanding in Taylor series in $u$ and showing the convergence at each order. 
The moment of order $r$ of the Poisson distribution with intensity 
$u^2\lambda_\alpha$ is 
\begin{equation}\label{eq:Poisson.moment}
m_r = \sum_{k=1}^r S(r, k) u^{2k}\lambda_\alpha^{k}
\end{equation}
where $S(r, k)$ are the Stirling numbers of second kind; the $m$th derivative with respect to $u$ at $u=0$ is 
$S(r, m/2)m!\lambda_\alpha^{m/2}$ for $m$ even, and zero otherwise. 

Recall that each term in \eqref{eq:rmom} gives rise to 
a a certain set of groups of indices made up of overlapping blocks, and each group corresponds to a map  ${\cal C}_{u,i}$ such that the expectation is expressed as in equation in \eqref{eq:exp.cs&ts}. We will show that in limit of large $n$, the only terms which contribute to the $m$-th derivative are those coming from certain configurations with $s-g=m$ where $s$ is the number of groups and $g$ is the number of groups in $\mathcal{P}_0$.

\textbf{Taylor approximation for a given set $({\cal C}_{u,1}, \dots {\cal C}_{u,s})$ up to order $m=s-g$.}
 Let us consider the sum of all the terms in Eq. \eqref{eq:rmom} coming from all the correlations corresponding to a given sequence of maps $ \boldsymbol{\mathcal{C}}:=
 ({\cal C}_{u,1}, \dots ,{\cal C}_{u,s})$. This contribution is given by the product between a combinatorial factor (independent of $n$) counting how many products of $\sigma$s in \eqref{eq:rmom} lead to the same set of maps 
 $\boldsymbol{\mathcal{C}}$ and the following sum
\begin{equation} \label{eq:eq1}
\begin{split} &
\frac{1}{n^r}\sum_{x_0+\dots +x_{s} = n-K}  \tilde{\rho}^{\rm ss}_u
\left ({\cal C}_{u,1} \tilde{\mathcal{T}}^{x_1}_{u}\cdots \tilde{\mathcal{T}}^{x_{s-1}}_{u}{\cal C}_{u,s} (\mathbf{1})\right ),\\
\end{split}
\end{equation}
where $K$ is the total length of all the $s$ blocks in  $\boldsymbol{\mathcal{C}}$, which is smaller than $2|\alpha| r$. 
Note that the factors $\tilde{\mathcal{T}}^{x_0}_u$ and 
$\tilde{\mathcal{T}}^{x_s}_u$ have been suppressed due to stationarity but the indices $x_0, x_s$ are still present in the sum. We denote by $g$ the number of maps in $({\cal C}_{u,1}, \dots, {\cal C}_{u,s})$ that belong to ${\cal P}_0$. 

Let us now consider the Taylor expansion of the correlations in \eqref{eq:eq1}. The $0$-th order term is $0$ because at least one ${\cal C}_i$ is not in ${\cal P}_0$ and hence $\tilde{\rho}^{\rm ss}$ annihilates the result. This is because on one hand, $C_{i*}(\tilde{\rho}^{\rm ss})=\tilde{\rho}^{\rm ss}$ for all $C_{i}\in \mathcal{P}_0$, but on the other hand at least one  ${\cal C}_i$ is not in $\mathcal{P}_0$ and hence it contains a term $\mathcal{A}_{\pm1}$ or $\mathcal{B}_1$ for which $\mathcal{A}_{\pm 1 *}(\tilde{\rho}^{\rm ss})=\mathcal{B}_{1*} (\tilde{\rho}^{\rm ss}) =0$ since $\tilde{K}_1\ket{\chi^{\rm ss}} =0$. 

Before addressing in detail the derivatives of \eqref{eq:eq1} we make some remarks concerning the magnitude of the sum. Note that this contains $O(n^{s})$ terms which are uniformly bounded and the largest possible value of $s$ is $2r$ (all patterns are non-overlapping). On the other hand the sum is preceded by the factor $n^{-r}$ and any derivative further multiplies it by $n^{-1/2}$ since all operators depend on $u$ via $u/\sqrt{n}$. These arguments alone are not sufficient to deduce the convergence of (low order) derivatives, at least for configurations $\boldsymbol{\mathcal{C}}$ with large $s$. 

The key additional ingredient is the fact that for $\mathcal{C}_{i}\notin \mathcal{P}_0$, the factors $\tilde{\mathcal{T}}^{x_i} \mathcal{C}_{i}$ are exponentially decreasing, cf. equation \eqref{eq:exponential.decay.TC}. This will provide more conservative upper bounds for the derivatives of the sum \eqref{eq:eq1}, as detailed below. Before considering that, note that the exponential bound can also be used to get an alternative proof of the convergence to $0$ of
$$\frac{1}{n^r}\sum_{x_0+\dots +x_{s} = n-K}  {\cal C}_{u,1} \tilde{\mathcal{T}}^{x_1}_{u}\cdots \tilde{\mathcal{T}}^{x_{s-1}}_{u}{\cal C}_{u,s} (\mathbf{1}).$$
Indeed, by summing over $x_i$s for groups not in $\mathcal{P}_0$,  we get that the sum \eqref{eq:eq1} is $O(n^g/n^r)$ rather than $O(n^s/n^r)$. Since in any configuration $\boldsymbol{\mathcal{C}}$ , at least one group is not in $\mathcal{P}_0$, and each group in $\mathcal{P}_0$ has at least two of the original $2r$ blocks, we see that $g<r$ so that the whole sum is $O(n^{-1})$. The upshot for derivatives will be that in order to create contributions that do not decay, the derivatives have to be applied in an "efficient" way, namely to factors of the type $\mathcal{T}^{x_i}\circ \mathcal{C}_{i}$ with 
$\mathcal{C}_{i}$ not in $\mathcal{P}_0$. 
This will break the exponential decay and allow for the balancing of the terms in the derivative with the pre-factor $n^{-r}$.


Consider now the case of the first order derivative of the sum \eqref{eq:eq1}. This will split into a sum of sub-sums, one for each of the terms $\mathcal{C}_{u,i}$ or 
$\mathcal{T}^{x_i}_u$ that are differentiated. Each sub-sum will be shown to have a decaying contribution to the derivative.

\begin{itemize}
\item[i)] 
If a term $\mathcal{C}_{u,i}$ in $\mathcal{P}_0$ is differentiated, then the sub-sum is similar to the original sum with the difference that $\mathcal{C}_{u,i}$ is replaced by another bounded term $\dot{\mathcal{C}}_{i}$). The overall contribution is then of the order $n^{-r} \times n^{g} \times n^{-1/2}$ which decays since $g<r$.

\item[ii)] 
If a term $\tilde{\mathcal{T}}^{x_{i-1}}_u$ is differentiated, for which $\mathcal{C}_{u,i}$ is in $\mathcal{P}_0$ then the product $\tilde{\mathcal{T}}^{x_{i-1}}_u \mathcal{C}_{u,i}$ becomes the sum 
\begin{equation}
\label{eq:sum.diff.T}
S_i = \sum_{l=1}^{x_{i-1}} \tilde{\mathcal{T}}
^{x_{i-1}-l}\dot{\tilde{\mathcal{T}}}\tilde{\mathcal{T}}^{l-1}{\cal C}_{i}(\cdot)
\end{equation}
By ergodicity $\tilde{\mathcal{T}}^{l-1}(\mathcal{C}_i(\cdot))$ converges exponentially fast to $\tilde{\rho}^{\rm ss}(\mathcal{C}_i(\cdot))\mathbf{1}$ for large $l$ so 
$$
S_i = 
\sum_{l^\prime=1}^{x_{i-1}} \tilde{\mathcal{T}}^{l^\prime}(\dot{\tilde{\mathcal{T}}}(\mathbf{1})) \cdot \tilde{\rho}^{\rm ss}(\mathcal{C}_i(\cdot)) + O(1)
$$
By the same argument $\tilde{\mathcal{T}}^{l^\prime}\dot{\tilde{\mathcal{T}}}(\mathbf{1})$ converges to $\tilde{\rho}^{\rm ss}(\dot{\tilde{\mathcal{T}}}(\mathbf{1})) = 0$ for large values of $l^\prime$. The latter follows from differentiating $\tilde{\rho}^{\rm ss}_u (\tilde{\mathcal{T}}_u(\cdot)) = \tilde{\rho}^{\rm ss}_u(\cdot)$. Therefore $S_i$ is $O(1)$ which does not change the magnitude of the overall sum before differentiation, so this contribution decays as well, as argued in point i).

\item[iii)] 
If a term $\tilde{\mathcal{T}}^{x_{i-1}}_u$ is differentiated, for which $\mathcal{C}_{u,i}$ is not in $\mathcal{P}_0$, one obtains a sum $S_i$ as in equation \eqref{eq:sum.diff.T}. Since $\mathcal{C}_i$ leaves $\mathbf{1}^\perp$ invariant we have 
$\|\tilde{\mathcal{T}}^l \circ \mathcal{C}_i\| \leq c \lambda^l$ so the terms with large $l$ become negligible while for small $l$ (large $x_{i-1}-l$) 
$\tilde{\mathcal{T}}^{x_{i-1}-l}(\cdot)$ converges to 
$\tilde{\rho}^{\rm ss}(\cdot)\mathbf{1}$. Therefore, using 
$\sum_{l^\prime =0}^{\infty} \tilde{\mathcal{T}}^{l^\prime} \mathcal{C}_i = \tilde{\mathcal{R}} \mathcal{C}_i$ where $\mathcal{R}$ is the Moore-Penrose inverse of ${\rm Id}- \tilde{\mathcal{T}}$, we obtain
$$
S_i=\rho^{\rm ss}(\dot{\tilde{\mathcal{T}}}\tilde{\mathcal{R}}{\cal C}_{i}(\cdot)) \mathbf{1}+o(1).
$$
Hence differentiating $\tilde{\mathcal{T}}^{x_{i-1}}_u$ for which $\mathcal{C}_{u,i}$ is not in $\mathcal{P}_0$ removes the exponential decay associated to un-differentiated  term $\tilde{\mathcal{T}}^{x_{i-1}}_u\mathcal{C}_{u,i}$ so that the upper bound for this contribution to the derivative is 
$n^{g+1}n^{-1/2}n^{-r}$. Since $g<r$ we have 
$g+1 -1/2-r<0$ so the contribution to the derivative 
decays (even though it decays slower than the previous sub-sums, which will be important when considering higher derivatives).


\item[iv)]
If a term $\mathcal{C}_{u,i}$ not in $\mathcal{P}_0$ is differentiated then for large $x_{i-1}$
$$
 \tilde{\mathcal{T}}^{x_{i-1}}\dot{{\cal C}}_{i}(\cdot)=\tilde{\rho}^{\rm ss}(\dot{{\cal C}}_{i}(\cdot))\mathbf{1} + o(1)
$$
This means that such a derivative breaks the exponential decay of the product $ \tilde{\mathcal{T}}^{x_{i-1}}{\cal C}_{i}$ but the contribution to the sum is still bounded by  
$n^{g+1}n^{-1/2}n^{-r}$.
\end{itemize}

The upshot of the argument for the first order derivative is that differentiating a $\mathcal{C}_{u,i}$ in $\mathcal{P}_0$ or the factor $\mathcal{T}^{x_{i-1}}$ in front of it, does not increase the magnitude of the sum and the contribution to the overall sum decays. On the other hand, differentiating a $\mathcal{C}_{u,i}$ which is not in $\mathcal{P}_0$ or the factor $\mathcal{T}^{x_{i-1}}$ in front of it, breaks the overall exponential decay of the product and contributes with an additional factor $n^{1/2}$ compared to the un-differentiated term.

We now consider higher order derivatives. 
One can again consider the possible positions where derivatives are applied and evaluate their separate contributions to the derivative. Following the same argument as in point i) above, one can see that differentiating a term $\mathcal{C}_{u,i}$ in $\mathcal{P}_0$ once or multiple times does not bring any change compared to the term before differentiation. Similarly, as in point ii) one finds that differentiating $\mathcal{T}^{x_{i-1}}_{u}$ in front of $\mathcal{C}_{u,i}$ which is in $\mathcal{P}_0$, once or multiple times does not bring any change (this is due to the fact that $\frac{d^k}{du^k}\mathcal{T}_u^x(\mathbf{1})=0$ for every $k \geq 1$). Now we move our attention to terms $\mathcal{T}^{x_{i-1}}_{u}\mathcal{C}_{u,i}$ for which $\mathcal{C}_{u,i}$ is not in $\mathcal{P}_0$. As in cases iii) and iv) above, differentiating either one of the two factors will break the exponential decay and bring an extra overall multiplicative factor $n^{1/2}$ compared to the un-differentiated terms. However if any of the two factors is differentiated further (more than one derivative in the product $\mathcal{T}^{x_{i-1}}_{u}\mathcal{C}_{u,i}$) then the additional derivative does not change the bound except for the multiplicative factor $n^{-1/2}$ due to differentiation. This shows that in order to obtain contributions that do not vanish asymptotically, the derivatives have to be placed 
on the products $\mathcal{T}^{x_{i-1}}_{u}\mathcal{C}_{u,i}$ for which $\mathcal{C}_{u,i}$ is not in $\mathcal{P}_0$, with at most one derivative for each product.

Recall that we are considering derivatives up to order 
$m=s - g$ which is equal to the number of $\mathcal{C}_{u,i}$ not in $\mathcal{P}_0$. According to the argument above, the most favourable positions for the derivatives is on different terms $\mathcal{T}^{x_{i-1}}_{u}\mathcal{C}_{u,i}$. Therefore, for a derivative of order $k\leq m$ the entire derivative can be upper bounded as 
$$
n^{-r}\times n^{-k/2}\times n^{g} \times n^{k} =
n^{-r+g+k/2}
$$
Since $k\leq m = s-g$ the exponent is smaller than 
$t:= -r+g+ (s-g)/2$. We will show that $t\leq 0$ with equality if and only if $m= s-g$ and the configuration $\boldsymbol{\mathcal{C}}$ is such that all groups in $\mathcal{P}_0$ consists of 2 perfectly overlapping $\sigma$s and the groups not in $\mathcal{P}_0$ consists of single blocks. Indeed for given total number $2r$ of $\sigma$ 
blocks, $s+g$ is maximum if all groups in $\mathcal{P}_0$ have just two $\sigma$ blocks and all others have a single block, so $2g+ s-g = s+g \leq 2r$, which implies $t \leq 0$. In conclusion, for any configuration $\boldsymbol{\mathcal{C}}$, the derivatives below order $m=s-g$ decay asymptotically, and that of order $m$ does as well unless $\boldsymbol{\mathcal{C}}$ is of the special type described above. Derivatives of order above $m$ will be treated below. Note that for the special configurations, $m$ is even so all odd derivatives decay, as is expected from the fact that the intensity of the limit Poisson distribution is proportional to $u^2$. We will now compute the 
limit of the $m$th derivative for the special configuration. 

If $\mathcal{C}_{i}\in \mathcal{P}_0$ consists of two blocks of $\sigma$s then the corresponding product is
$$
\sigma^{\alpha *}\sigma^\alpha= O_{\alpha_1}\dots O_{\alpha_{|\alpha|}}
$$
where 
$\alpha=(\alpha_1, \dots, 
\alpha_{|\alpha|})$ and $O_0 := \mathbf{1}$,  
$O_1= |0\rangle\langle 0|$. Therefore 
$$
\mathcal{C}_i = C^{(\alpha_1)}\dots 
C^{(\alpha_{|\alpha|})}
$$
with $C^{(0)} = \mathcal{A}_0 = \mathcal{T}$ and $C^{(1)}= \mathcal{B}_0$.
Therefore when $x_1,\dots, x_s$ are large the following holds
$$
\mathcal{T}^{x_{i-1}}{\cal C}_{i}(\cdot) = 
\rho^{\rm ss}({\cal C}_i (\cdot)) \mathbf{1}
 + o(1) =  \rho^{\rm ss}(\cdot ) \mathbf{1} + o(1)
$$
since 
$\rho^{\rm ss}\circ \mathcal{A}_0 =\rho^{\rm ss}$ and $\rho^{\rm ss}\circ \mathcal{B}_0 =\rho^{\rm ss}$
This means that for large $x_1,\dots, x_s$ the derivative of order 
$m=2(m/2)$ factorises as 
\[
|\rho^{\rm ss}(\dot{\mathcal{T}}\mathcal{R}{\cal A}^{\alpha}(\mathbf{1})+\rho^{\rm ss}(\tilde{{\cal A}}^{\alpha}(\mathbf{1}))|^{2(m/2)}
\]
where $\tilde{\cal A}^{\alpha}$ is obtained from ${\cal A}^{\alpha}$ differentiating the first ${\cal A}_1$. This formula follows from the contributions obtained at points iii) and iv) above and the fact that the groups that are differentiated contain a single $\sigma$ block.

We remark that the term for which we computed the limit appears in the $m$-th derivative with a factor $m!$ in front.

Finally, we need to compute the combinatorial factor counting the number of terms in the expectation which produce the desired set $\boldsymbol{\mathcal{C}}$ consisting of $g= r-m/2$ groups of two $\sigma$s in $\mathcal{P}_0$ plus 
$m$ groups of single $\sigma$s (with $s=r+m/2$ total number of groups). 

We will show that this is exactly $S(r,m/2)$ (Stirling number of second type) i.e. the number of partitions of a set of $r$ elements into $m/2$ non-empty subsets. For this it is enough to show that the numbers of ways in which we can pair the $\sigma$s is in a bijection with the partitions in $m/2$ classes of $r$ elements. Consider the collection of $\sigma$s
\[
\sigma_{i_1}^\alpha \sigma_{j_1}^{\alpha*} \cdots \sigma_{i_r}^\alpha \sigma_{j_r}^{\alpha*},
\]
which, for our purposes, we can identify with the set of "dipoles"
\[
(\alpha, \alpha*)_1, \cdots, (\alpha, \alpha*)_r
\]
In order to create a ${\cal C}$ in ${\cal P}_0$, we need to pair a $\sigma_{j_z}^{\alpha*}$ with a $\sigma_{i_w}^{\alpha}$ such that $z<w$. This induces an equivalence relation on the collection of dipoles where we identify two dipoles $(\alpha, \alpha*)_z$, $(\alpha, \alpha*)_w$ if $j_z=i_w$ and we impose transitivity of this relation. By construction,  in each equivalence class 
$\{ (\alpha, \alpha*)_{z_1},(\alpha, \alpha*)_{z_k}\}$ we have $j_{z_l} = i_{z_{l+1}}$ so all positions of $\alpha*$s are equal to the position of $\alpha$ in the next dipole. In other words, an equivalence class uniquely determines a set of pairs of equal indices, and altogether the set of equivalence classes uniquely determine the splitting into $\mathcal{P}_0$ groups. Since $i_{z_1}$ and $j_{z_k}$ are the only indices that are not paired in the chosen equivalence class, the number of classes is $m/2$, i.e. half of the number of groups not in $\mathcal{P}_0$. The number of ways to group the $r$ dipoles in $m/2$ equivalence classes is the Stirling number of second type $S(r,m/2)$ which provided the combinatorial factor for our $m$th order derivative.

With this we conclude that the $m$th order derivative of \eqref{eq:rmom} converges to the corresponding derivative of \eqref{eq:Poisson.moment}.

\textbf{Reminder of the Taylor approximation (order $s-g+1$).} The last thing we need to check is that the remainder corresponding to the sum of all the terms in Eq. \eqref{eq:rmom} with $s$ non-overlapping blocks is negligible; the reminder is given by
\[
\small\begin{split} \frac{u^{m^\prime}}{m^\prime!n^{r+\frac{m^\prime}{2}}}\sum_{x_1+\dots +x_{s} = n-K} \left .\frac{d^{m^\prime}}{du^{m^\prime}}f_{x_1,\dots,x_s}(u)\right |_{u=\eta_{x_1,\dots,x_{s-1}}}
\end{split}
\]
for some $|\eta_{x_1,\dots,x_{s-1}}|\leq |u|/\sqrt{n}$, where $m^\prime=s-g+1$ and
$$f_{x_1,\dots,x_s}(u)=\rho^{\rm ss}_u\left ({\cal C}_{1,u} \mathcal{T}^{x_1}_{u}\cdots \mathcal{T}^{x_{s-1}}_{u}{\cal C}_{s,u} (\mathbf{1})\right ).$$
Notice that $r+m^\prime/2>s$: indeed,
$$
r+\frac{s-g+1}{2}>s \Leftrightarrow 2r+1>g+s.$$
Therefore, it is enough to show that
$$\left .\frac{d^{l+1}}{du^{l+1}}f_{x_1,\dots,x_s}(u)\right |_{u=\eta_{x_1,\dots,x_{s-1}}}, \quad 2 \leq l \leq 2r$$
are uniformly bounded: the only terms that requires some care are the derivatives of $\mathcal{T}^x_{u}$, which are of the type
\[
\sum_{0\leq l_1 \leq \cdots \leq l_k \leq x} \mathcal{T}_\eta^{x-l_k} \mathcal{T}_\eta^{(m_k)} \cdots \mathcal{T}_\eta^{(m_1)}\mathcal{T}_\eta^{l_1-1},
\]
where $\mathcal{T}^{(m)}_\eta$ stays for the $m$-th derivative of $\mathcal{T}_u$ evaluated at $\eta$. Using
\begin{itemize}
    \item the spectral decomposition of $\mathcal{T}_\eta(\cdot)=\rho^{\rm ss}_\eta(\cdot)\mathbf{1}+\mathcal{R}_\eta$, with $\rho^{\rm ss}_\eta(\mathcal{R}_\eta(\cdot))=0$ and $\|\mathcal{R}_\eta\| \leq \lambda<1$ (for $n$ big enough) and
    \item the fact that $\mathcal{T}^{(m)}(\mathbf{1})=0$ ($m \geq 1$),
\end{itemize}
we have that
\[
\begin{split}
    &\sum_{0\leq l_1 \leq \cdots \leq l_k \leq x} \|\mathcal{T}_\eta^{x-l_k} \mathcal{T}_\eta^{(m_k)} \cdots \mathcal{T}_\eta^{(m_1)}\mathcal{T}_\eta^{l_1-1}\|= \\
    &\sum_{0\leq l_1 \leq \cdots \leq l_k \leq x} \|\mathcal{T}_\eta^{x-l_k}\mathcal{T}_\eta^{(m_k)}\mathcal{R}_\eta^{l_k-l_{k-1}-1}\cdots \mathcal{T}_\eta^{(m_1)}\mathcal{R}_\eta^{l_1-1}\| \leq\\
    &C\sum_{0\leq l_1 \leq \cdots \leq l_k \leq x} \lambda^{l_k},
    \end{split}
\]
which is bounded.

\underline{\textbf{Quadratures.}} For the sake of keeping notation simple, we will show the proof in the case of $z=1$, but the same reasoning applies to the general case inserting $z$ and $\overline{z}$ where needed. We recall that $k=|\alpha|$.

First of all, notice that the mean of $Q_\alpha(n)$ converges to $u\mu_{\alpha,1}$: indeed, its first moment is given by
\[
\frac{1}{\sqrt{n}} \sum_{i=1}^{n-k+1}\tilde{\rho}^{\rm ss}(u/\sqrt{n})\left (   \sqrt{2}\Re({\cal A}^\alpha) (\mathbf{1})\right )  \rightarrow u \mu_{\alpha,1}.
\]

In order to simplify the proof, we consider the standardised random variable
\[\begin{split}
    &\sqrt{2}\left (Q_\alpha(n) - \frac{n-k+1}{n}u \mu_{\alpha, 1} \right )=\\
    &A^*_\alpha(n)+A_\alpha(n)-2\Re(\mu_{\alpha})u=\\
    &\frac{1}{\sqrt{n}} \sum_{i=1}^{n-k+1} \tilde{\sigma}^{\alpha}_i(u) + \tilde{\sigma}^{\alpha*}_i(u),
\end{split}\]
where
$$\tilde{\sigma}^\alpha(u)=\sigma^\alpha-\frac{u}{\sqrt{n}}\rho((\dot{\mathcal{T}}\mathcal{R}{\cal A}^{\alpha}+\tilde{{\cal A}}^{\alpha})(\mathbf{1}))\mathbf{1}.$$ In order to prove the statement, we need to show that the sequence of standardized quadratures converges in law to a standard Gaussian random variable.

For $r \geq 2$, the $r$-th moment of the standardised random process at time $n$ has the following expression:
\[
\sum_{j_1,\dots,j_r \in \{\alpha,\alpha*\}} \frac{1}{n^{r/2}} \sum_{i_1,\dots,i_r =1}^{n-k+1} \langle \tilde{\sigma}^{j_1}_{i_1}\cdots\tilde{\sigma}^{j_r}_{i_r}  \rangle_{u/\sqrt{n}}.\]

Consider the correlation corresponding to a choice of indices $i_1,\dots,i_r$ such that there are exactly $s$ non-overlapping groups and let $g$ be the number of groups with overlapping terms of the form $\tilde{\sigma}_{z}^{\alpha*}(u)\tilde{\sigma}^\alpha_{z}(u)$ (as before, we denote the set of such groups as ${\cal P}_0$). As before, these are the only overlapping groups that at $u=0$ will produce an operator which is not in $\mathbf{1}^\perp$.

First suppose that $s-g>0$. We can prevent the maps which do not belong to ${\cal P}_0$ to be annihilated by $\rho$ and to cause an exponential decay differentiating and repeating the same computations as in the case of number operators. If we consider the $m$-th term in the Taylor expansion up to $m=s-g$, we can see that it grows at most as $n$ to the power $g+m/2-r/2$; if we want the exponent to be bigger or equal than $0$, we need that $m\geq r-2g$; since $m\leq s-g$, one realises that $g+m/2-r/2$ can at most be equal to $0$ and this is true when $m=s-g$ and $s+g=r$, which means that every group in ${\cal P}_0$ is of the form $\tilde{\sigma}_{z}^{\alpha*}(u)\tilde{\sigma}^\alpha_{z}(u)$ and all the other groups are singletons. However, if a group is composed by a single element, since we centered the random process, the first derivative will not be enough to cancel the exponential decay. Therefore, if $s-g>0$, the corresponding terms will decrease to $0$.

On the other hand, if $r$ is even and $s=g=r/2$, the $0$-th order term is equal to $1$, hence the leading term comes from the case when $r$ is even and $g=r/2$. In this case one can see that the limit quantity is equal to $1$ times the way we can pair the $\tilde{\sigma}(u)$'s in groups of the type $\tilde{\sigma}_{z}^{\alpha*}(u)\tilde{\sigma}^\alpha_{z}(u)$ and this is given by $(r-1)!!$ (which is the number of partition into pairs of a set of $r$ elements).

The reminder can be controlled as in the case of number operators.
\end{proof}

\begin{proof}[Proof of Lemma \ref{lem:alternative}]
\textbf{Alternative expression for $\mu_\alpha$.} 
Using that $\tilde{K}_0\ket{\chi^{\rm ss}}=\ket{\chi^{\rm ss}}$ and $\tilde{K}_1\ket{\chi^{\rm ss}}=0$, one can write
\[\begin{split}
\dot{\tilde{\mathcal{T}}}_*(\tilde{\rho}^{\rm ss})=&\sum_{i=0}^{1}\dot{\tilde{K}}_i\ket{\chi^{\rm ss}}\bra{\chi^{\rm ss}}\tilde{K}_i^* + \tilde{K}_i\ket{\chi^{\rm ss}}\bra{\chi^{\rm ss}}\dot{\tilde{K}}^*_i\\
=&\ket{\dot{\tilde{K}}_0\chi^{\rm ss}}\bra{\chi^{\rm ss}} + \ket{\chi^{\rm ss}}\bra{\dot{\tilde{K}}_0\chi^{\rm ss}}\end{split}.
\]
Under the "gauge condition \eqref{eq:gauge} 
and using the explicit expression of $\tilde{K}_{\theta,i}$'s in Eq. \eqref{eq:tilde.K.0} we obtain $\langle \chi^{\rm ss}|\dot{\tilde{K}}_0 \chi^{\rm ss} \rangle=0$.


From 
$\tilde{K}_0^*\tilde{K}_0 + \tilde{K}_1^*\tilde{K}_1 = \mathbf{1}$ we obtain $\tilde{K}_0^*|\chi^{\rm ss} \rangle = |\chi^{\rm ss} \rangle$.
Therefore 
$$
\tilde{K}_0 = |\chi^{\rm ss}\rangle \langle \chi^{\rm ss}| + P^{\rm ss}_\perp \tilde{K}_0 P^{\rm ss}_\perp
$$ 
where $P^{\rm ss}_\perp = \mathbf{1}- |\chi^{\rm ss}\rangle \langle \chi^{\rm ss}|$, which implies
$$\sum_{k \geq 0}\tilde{K}_0^{k} \ket{\dot{\tilde{K}}_0 \chi^{\rm ss}}= \ket{(\mathbf{1}-\tilde{K}_0)^{-1}\dot{\tilde{K}}_0 \chi^{\rm ss}}$$
and
\begin{eqnarray}
\tilde{\mathcal{R}}_*
\dot{\tilde{\mathcal{T}}}_*
(\tilde{\rho}^{\rm ss})
&=&\ket{(\mathbf{1}-\tilde{K}_0)^{-1}\dot{\tilde{K}}_0 \chi^{\rm ss}}\bra{\chi^{\rm ss}}
\nonumber\\
&+&\ket{\chi^{\rm ss}}\bra{(\mathbf{1}-\tilde{K}_0)^{-1}\dot{\tilde{K}}_0 \chi^{\rm ss}}.
\label{eq:R.T.dot.rho.ss}
\end{eqnarray}
When we evaluate it against
${\cal A}^\alpha(\mathbf{1})$, the first term in the previous equation gets killed, while the second one produces the term
$$\langle \tilde{K}_{\alpha_{|\alpha|}} \cdots \tilde{K}_{\alpha_1}(\mathbf{1}-\tilde{K}_0)^{-1}\dot{\tilde{K}}_0 \chi^{\rm ss} | \chi^{\rm ss}\rangle.$$
The rest of the proof is just a trivial check.

\textbf{Expression for the total intensity.} Since $\lambda_\alpha \geq 0$, one has that $\sum_{\alpha}\lambda_\alpha =C \in [0,+\infty]$ and the limit is always the same irrespectively of the choice of partial sums. Notice that
\begin{eqnarray*}\lambda_{(1)}&=&|\langle (\tilde{K}_1(\mathbf{1}-\tilde{K}_0)^{-1} \dot{\tilde{K}}_0 + \dot{\tilde{K}}_1)\chi^{\rm ss}|\chi^{\rm ss} \rangle|^2\\
&=&\Tr(\ket{\chi^{\rm ss}}\bra{\chi^{\rm ss}}Y),
\end{eqnarray*}
where
\begin{eqnarray} 
Y&:=&\ket{(\tilde{K}_{1}(\mathbf{1}-\tilde{K}_0)^{-1}\dot{\tilde{K}}_0 +\dot{\tilde{K}}_1)\chi^{\rm ss}}
\nonumber\\
&&\bra{(\tilde{K}_{1}(\mathbf{1}-\tilde{K}_0)^{-1}\dot{\tilde{K}}_0 +\dot{\tilde{K}}_1)\chi^{\rm ss}}.
\label{eq:Y}
\end{eqnarray}

For any $\alpha$ such that $|\alpha|\geq 2$, $\lambda_\alpha$ is equal to:
\[\begin{split}
&| \langle \tilde{K}_{\alpha_{|\alpha|}} \cdots \tilde{K}_{\alpha_2} (\tilde{K}_{\alpha_1}(\mathbf{1}-\tilde{K}_0)^{-1}\dot{\tilde{K}}_0 +\dot{\tilde{K}}_1)\chi^{\rm ss} | \chi^{\rm ss}\rangle|^2=\\
&| \langle \tilde{K}_{1} \tilde{K}_{\alpha_{|\alpha|-1}}\cdots \tilde{K}_{\alpha_2} (\tilde{K}_{1}(\mathbf{1}-\tilde{K}_0)^{-1}\dot{\tilde{K}}_0 +\dot{\tilde{K}}_1)\chi^{\rm ss} | \chi^{\rm ss}\rangle|^2=\\
&\Tr(X \tilde{K}_{\alpha_{|\alpha|-1}}\cdots \tilde{K}_{\alpha_2}Y\tilde{K}^*_{\alpha_2}\cdots \tilde{K}^*_{\alpha_{|\alpha|-1}} ),
\end{split}
\]
where $X:=\ket{\tilde{K}_1^*\chi^{\rm ss}}\bra{\tilde{K}_1^*\chi^{\rm ss}}$ and $Y$ is the same as in Eq. \eqref{eq:Y}.

Therefore
$$
\sum_{ 2 \leq |\alpha| \leq N}\lambda_\alpha=\Tr\left (\sum_{k=0}^{N-1}\tilde{\mathcal{T}}^k(X)Y \right )$$
and
$$
\lim_{N\rightarrow +\infty}\sum_{ 2 \leq |\alpha| \leq N}\lambda_\alpha=\sum_{ 2 \leq |\alpha|}\lambda_\alpha=\Tr(\tilde{\mathcal{R}}(X)Y),$$
therefore
$$
\sum_{\alpha}\lambda_\alpha =\Tr(({\rm Id}+\tilde{\mathcal{R}}{\cal B}_1)(\ket{\chi^{\rm ss}}\bra{\chi^{\rm ss}})Y).$$
Let us massage a little the expression we obtained:
\[
\begin{split}
  &\Tr(({\rm Id}+\tilde{\mathcal{R}}{\cal B}_1)(\ket{\chi^{\rm ss}}\bra{\chi^{\rm ss}})Y)=\\
  &\langle \chi^{\rm ss}|Y\chi^{\rm ss} \rangle + \langle \chi^{\rm ss}|{\cal B}_{1*}\tilde{\mathcal{R}}_*(Y-\Tr(Y)\ket{\chi^{\rm ss}}\bra{\chi^{\rm ss}}) \chi^{\rm ss}\rangle=\\
  &\langle \chi^{\rm ss}|Y\chi^{\rm ss} \rangle + \langle \chi^{\rm ss}|\tilde{\mathcal{T}}_*\tilde{\mathcal{R}}_*(Y-\Tr(Y)\ket{\chi^{\rm ss}}\bra{\chi^{\rm ss}})\chi^{\rm ss} \rangle-\\
  &\langle \chi^{\rm ss}|\mathcal{R}_*(Y-\Tr(Y)\ket{\chi^{\rm ss}}\bra{\chi^{\rm ss}})\chi^{\rm ss} \rangle=\\
  &\langle \chi^{\rm ss}|Y\chi^{\rm ss} \rangle -\langle \chi^{\rm ss}|(Y-\Tr(Y)\ket{\chi^{\rm ss}}\bra{\chi^{\rm ss}}) \chi^{\rm ss}\rangle=\Tr(Y).\\ 
\end{split}
\]
We used the fact that ${\cal B}_0(\ket{\chi^{\rm ss}}\bra{\chi^{\rm ss}})=\ket{\chi^{\rm ss}}\bra{\chi^{\rm ss}}$ and that $\tilde{\mathcal{T}}_*\tilde{\mathcal{R}}_*=\tilde{\mathcal{R}}_*-{\rm Id}$. Finally,
$$\Tr(Y)=\|(\tilde{K}_1(\mathbf{1}-\tilde{K}_0)^{-1}\dot{\tilde{K}}_0+\dot{\tilde{K}}_1)\chi^{\rm ss}\|^2$$
\end{proof}

\section{Proof of Corollary \ref{coro:fisher}} \label{app:fisher}

\begin{proof}[Proof of Corollary \ref{coro:fisher}]

We will use the expression of $\sum_{\alpha}\lambda_{\alpha}$ given by
$$-\Re(\langle \chi^{\rm ss}, 2\dot{\tilde{K}}_0(\mathbf{1}-\tilde{K}_0)^{-1}\dot{\tilde{K}}_0+\Ddot{\tilde{K}}_0 \chi^{\rm ss} \rangle )$$
as states in equation \eqref{eq:lambdatot} in Theorem \ref{thm:trajs}. The proof of this identity can be found in Appendix \ref{app:thmtrjs}; the expression we use here has the advantage that it immediately shows that $\sum_{\alpha}\lambda_\alpha$ does not change for different choices of the postprocessing. Before proceeding, we recall the expression of all the terms appearing in the previous equation using quantities of the dynamics of the system alone.

\begin{itemize}
\item $\ket{\chi^{\rm ss}}=\sum_{i=1}^{d} \sqrt{\lambda_i} \ket{\varphi_i}_S \otimes \ket{\varphi_i}_A$, where $\rho^{\rm ss}=\sum_{i=1}^{d}\lambda_i \ket{\varphi_i}\bra{\varphi_i}$ is the spectral resolution of the stationary state of the system dynamics;
\item 
The Kraus operator $\tilde{K}_0$ is uniquely determined as follows (left tensor is the system)
\begin{equation}
\label{eq:tilde.K.0}
\tilde{K}_0=\sum_{k=0}^{1} \sum_{i,j=1}^{d} \sqrt{\frac{\lambda_j}{\lambda_i}} \langle \varphi_j,K_k^*\varphi_i \rangle K_k \otimes \ket{\varphi_i} \bra{\varphi_j}. 
\end{equation}
\item 
The derivative of $\dot{\tilde{K}}_0$ is (where we keep in mind than only the system unitary depends on $\theta$)
$$\dot{\tilde{K}}_0=\sum_{k=0}^{1} \sum_{i,j=1}^{d} \sqrt{\frac{\lambda_j}{\lambda_i}} \langle \varphi_j,K_k^*\varphi_i \rangle \dot{K}_k \otimes \ket{\varphi_i} \bra{\varphi_j}, $$
\item 
The second derivative is
$$\Ddot{\tilde{K}}_0=\sum_{k=0}^{1} \sum_{i,j=1}^{d} \sqrt{\frac{\lambda_j}{\lambda_i}} \langle \varphi_j,K_k^*\varphi_i \rangle \Ddot{K}_k \otimes \ket{\varphi_i} \bra{\varphi_j}. $$
\end{itemize}
Note that
\[
\begin{split}
    \langle \chi^{\rm ss}, \Ddot{\tilde{K}}_0 \chi^{\rm ss} \rangle&= \sum_{k=0}^{1}\sum_{i,j=1}^{d}\lambda_j \langle \varphi_i, \Ddot{K}_k \varphi_j \rangle_S \langle \varphi_j,K_k^* \varphi_i \rangle_S\\
    &=\sum_{k=0}^{1} \Tr(\Ddot{K}_k \rho^{\rm ss}K^*_k).
\end{split}
\]
Hence, using that $\sum_{k=0}^{1}\Ddot{K}_k^*K_k +K_k^*\Ddot{K}_k +2\dot{K}^*_k\dot{K}_k=0$, one has
\[
-\Re(\langle \chi^{\rm ss}, \Ddot{\tilde{K}}_0 \chi^{\rm ss} \rangle)=\sum_{k=0}\Tr(\rho^{\rm ss}\dot{K}^*_k\dot{K}_k).
\]
Let us consider the rest of the total intensity: using that $(\mathbf{1}-\tilde{K}_0)^{-1}\dot{\tilde{K}}_0\ket{\chi^{\rm ss}}=\sum_{l=0}^{+\infty}\tilde{K}_0^{l}\dot{\tilde{K}}_0\ket{\chi^{\rm ss}}$, one gets

\begin{eqnarray*}
    &&\langle \chi^{\rm ss}, \dot{\tilde{K}}_0(\mathbf{1}-\tilde{K}_0)^{-1}\dot{\tilde{K}}_0 \chi^{\rm ss} \rangle\\
    &&=\sum_{i,j=1}^{d}\sum_{a,b=0}^{1}\sum_{l=0}^{+\infty} \sum_{k_1,\dots k_l=0}\lambda_j \langle \varphi_i,\dot{K}_a K_{k_l} \cdots K_{k_1}\dot{K}_b \varphi_j \rangle\cdot\\
    &&\cdot \langle \varphi_j,K_{b}^* K^*_{k_1} \cdots K_{k_l}^* K_a^* \varphi_i \rangle\\
    &&
    =\sum_{a,b=0}^{1}\sum_{l=0}^{+\infty} 
    \sum_{k_1,\dots k_l=0}^{1}\Tr(\dot{K}_a K_{k_l} \cdots K_{k_1}\dot{K}_b \rho^{\rm ss}K_{b}^* K^*_{k_1} \cdots K_{k_l}^* K_a^* )\\
    &&=\Tr \left ( \sum_{a=0}^{1}K^*_a \dot{K}_a \mathcal{R}\left ( \sum_{b=0}^{1} \dot{K}_b \rho^{\rm ss} K_b^*\right )\right ).
\end{eqnarray*}
Therefore,
\begin{eqnarray*}
 &&-2\Re(\langle \chi^{\rm ss}, \dot{\tilde{K}}_0(\mathbf{1}-\tilde{K}_0)^{-1}\dot{\tilde{K}}_0 \chi^{\rm ss} \rangle)\\
 &&=2 \Tr \left (\Im \left (\sum_{a=0}^{1}K^*_a\dot{K}_a \right ) \mathcal{R} \left (\Im \left ( \sum_{b=0}^{1}\dot{K}_b \rho^{\rm ss} K_b^*\right ) \right ) \right ).
\end{eqnarray*}
\end{proof}
\section{Proof of Theorem \ref{thm:trajs} and Proposition \ref{prop:grweps}} \label{app:thmtrjs}

\begin{proof}[Proof of Theorem \ref{thm:trajs} and Proposition \ref{prop:grweps}]
First of all, let us show that $\lambda_{\rm tot}$ given by Eq. \eqref{eq:lambdatot} is equal to $\sum_{\alpha} \lambda_\alpha$, whose expression is given by equation \eqref{eq:sum.lambda.alpha}. We have
\[\begin{split}
&\|(\tilde{K}_1(\mathbf{1}-\tilde{K}_0)^{-1}\dot{\tilde{K}}_0+\dot{\tilde{K}}_1)\chi^{\rm ss}\|^2=\\
&\langle (\mathbf{1}-\tilde{K}_0)^{-1}\dot{\tilde{K}}_0\chi^{\rm ss},\tilde{K}_1^*\tilde{K}_1 (\mathbf{1}-\tilde{K}_0)^{-1}\dot{\tilde{K}}_0\chi^{\rm ss} \rangle\\
&+\langle (\mathbf{1}-\tilde{K}_0)^{-1}\dot{\tilde{K}}_0\chi^{\rm ss},\tilde{K}_1^*\dot{\tilde{K}}_1 \chi^{\rm ss} \rangle\\
&+\langle \chi^{\rm ss},\dot{\tilde{K}}^*_1\tilde{K}_1 (\mathbf{1}-\tilde{K}_0)^{-1}\dot{\tilde{K}}_0 \chi^{\rm ss}\rangle\\
&+\langle \chi^{\rm ss},\dot{\tilde{K}}_1^*\dot{\tilde{K}}_1 \chi^{\rm ss} \rangle.
\end{split}
\]
Note that
\[\begin{split}
&\langle \chi^{\rm ss},\dot{\tilde{K}}_1^*\dot{\tilde{K}}_1 \chi^{\rm ss} \rangle=\frac{1}{2}\langle \chi^{\rm ss}, \Ddot{\cal B}_1(\mathbf{1}) \chi^{\rm ss} \rangle=-\frac{1}{2}\langle \chi^{\rm ss}, \Ddot{\cal B}_0(\mathbf{1}) \chi^{\rm ss} \rangle\\
&=-\Re(\langle \chi^{\rm ss},\Ddot{\tilde{K}}_0 \chi^{\rm ss} \rangle )-\|\dot{\tilde{K}}_0 \chi^{\rm ss}\|^2.
\end{split}\]
Moreover, since $\tilde{K}_0^*\tilde{K}_0 +\tilde{K}_1^*\tilde{K}_1 = \mathbf{1} $,
\[
\begin{split}
  &\langle (\mathbf{1}-\tilde{K}_0)^{-1}\dot{\tilde{K}}_0\chi^{\rm ss},\tilde{K}_1^*\tilde{K}_1 (\mathbf{1}-\tilde{K}_0)^{-1}\dot{\tilde{K}}_0\chi^{\rm ss} \rangle\\
  &=\|(\mathbf{1}-\tilde{K}_0)^{-1}\dot{\tilde{K}}_0\chi^{\rm ss}\|^2-\|\tilde{K}_0(\mathbf{1}-\tilde{K}_0)^{-1}\dot{\tilde{K}}_0\chi^{\rm ss}\|^2.
\end{split}
\]
As for the other terms, one has
\[
\begin{split}
  &\langle (\mathbf{1}-\tilde{K}_0)^{-1}\dot{\tilde{K}}_0\chi^{\rm ss},\tilde{K}_1^*\dot{\tilde{K}}_1 \chi^{\rm ss} \rangle\\
  &=\langle (\mathbf{1}-\tilde{K}_0)^{-1}\dot{\tilde{K}}_0\chi^{\rm ss},\dot{\cal B}_1(\mathbf{1}) \chi^{\rm ss} \rangle\\
  &=-\langle (\mathbf{1}-\tilde{K}_0)^{-1}\dot{\tilde{K}}_0\chi^{\rm ss},\dot{\cal B}_0(\mathbf{1}) \chi^{\rm ss} \rangle\\
  &=-\langle \dot{\tilde{K}}_0(\mathbf{1}-\tilde{K}_0)^{-1}\dot{\tilde{K}}_0\chi^{\rm ss},\chi^{\rm ss} \rangle \\
  &-\langle \tilde{K}_0(\mathbf{1}-\tilde{K}_0)^{-1}\dot{\tilde{K}}_0\chi^{\rm ss} , \dot{\tilde{K}}_0 \chi^{\rm ss}\rangle
\end{split}
\]
and analogously
\[
\begin{split}
&\langle \chi^{\rm ss},\dot{\tilde{K}}^*_1\tilde{K}_1 (\mathbf{1}-\tilde{K}_0)^{-1}\dot{\tilde{K}}_0\chi^{\rm ss}\rangle\\
&=-\langle \chi^{\rm ss},\dot{\tilde{K}}_0(\mathbf{1}-\tilde{K}_0)^{-1}\dot{\tilde{K}}_0\chi^{\rm ss} \rangle \\
&-\langle \dot{\tilde{K}}_0 \chi^{\rm ss},\tilde{K}_0(\mathbf{1}-\tilde{K}_0)^{-1}\dot{\tilde{K}}_0\chi^{\rm ss}  \rangle.
\end{split}
\]
Putting everything together, one gets
\[
\begin{split}
 &\sum_{\alpha}\lambda_{\alpha}=\|(\tilde{K}_1(\mathbf{1}-\tilde{K}_0)^{-1}\dot{\tilde{K}}_0+\dot{\tilde{K}}_1)\chi^{\rm ss}\|^2\\
&=\|(\mathbf{1}-\tilde{K}_0)^{-1}\dot{\tilde{K}}_0\chi^{\rm ss}\|^2-\|\tilde{K}_0(\mathbf{1}-\tilde{K}_0)^{-1}\dot{\tilde{K}}_0\chi^{\rm ss}\|^2\\
&-2\Re(\langle \dot{\tilde{K}}_0(\mathbf{1}-\tilde{K}_0)^{-1}\dot{\tilde{K}}_0\chi^{\rm ss},\chi^{\rm ss} \rangle)\\
&-2\Re(\langle \tilde{K}_0(\mathbf{1}-\tilde{K}_0)^{-1}\dot{\tilde{K}}_0\chi^{\rm ss} , \dot{\tilde{K}}_0 \chi^{\rm ss} \rangle)\\
&-\Re(\langle \chi^{\rm ss},\Ddot{\tilde{K}}_0 \chi^{\rm ss} \rangle )-\|\dot{\tilde{K}}_0 \chi^{\rm ss} \|^2\\
&=\lambda_{\rm tot}+\|(\mathbf{1}-\tilde{K}_0)^{-1}\dot{\tilde{K}}_0\chi^{\rm ss}\|^2\\
&-\|(\tilde{K}_0(\mathbf{1}-\tilde{K}_0)^{-1}+{\mathbf{1}})\dot{\tilde{K}}_0\chi^{\rm ss}\|^2=\lambda_{\rm tot}.
\end{split}
\]

\bigskip Let us now prove the first part of the theorem. What we are actually going to show is that for $n \rightarrow +\infty$ and $|u| \leq n^{\epsilon^\prime}$, one has
$$
e^{\lambda_{tot}u^2}\nu_{u,n}\left (B_{\mathbf{m}}(n)\right )\asymp\prod_{i=1}^{k} \frac{(\lambda_{\alpha^{(i)}}u^2)^{m_{\alpha^{(i)}}}}{m_{\alpha^{(i)}}!}.
$$
Let us consider a fixed \textit{ordered} sequence of excitation patterns $\alpha^{(1)},\dots,\alpha^{(k)}$ (here we do not require them to be distinct). For an observation time $n$ big enough, the probability of observing such a sequence of patterns separated one from the other by more than $n^\gamma$ consecutive $0$s is given by
\begin{eqnarray}
\sum_{\substack{x_1 +\dots +x_{k+1} = n-K\\x_2,\dots,x_{k} \geq n^{\gamma}}}&&\tilde{\rho}^{\rm ss}_u \left ({\cal B}^{x_1}_{u,0}{\cal B}_{u,\alpha^{(1)}} {\cal B}^{x_2}_{u,0}\cdots \right.
\nonumber\\
&&\left .\cdots{\cal B}^{x_k}_{u,0}{\cal B}_{u,\alpha^{(k)}}{\cal B}^{x_{k+1}}_{u,0}(\mathbf{1}))\right ).
\label{eq.proba.b}
\end{eqnarray}

where $K=\sum_{i=1}^{k}|\alpha_i|$ and ${\cal B}_{u,\alpha}(x):={\cal B}_{u,\alpha_{1}}\cdots {\cal B}_{u,\alpha_{|\alpha|}}(x).$

The rest of the proof follows a similar line as the proof of Theorem \ref{thm:limdistribution}: we study the Taylor expansion of the series and identify the leading terms in the limit $n \rightarrow +\infty$. We will often use the spectral decomposition of ${\cal B}_0$, i.e.
$$
{\cal B}_0(x)=\tilde{\rho}^{\rm ss}(x) \ket{\chi^{\rm ss}}\bra{\chi^{\rm ss}} + 
{\cal E}_0(x)
$$
such that
\begin{itemize}
    \item for any $k \geq 1$, for  some constants $C>0$, $0 <\lambda<1$, one has $\|{\cal E}_0^k\| \leq C\lambda^k $,
    \item $\tilde{\rho}^{\rm ss}({\cal E}_0(\cdot))=0$ and 
    ${\cal E}_0(\ket{\chi^{\rm ss}}\bra{\chi^{\rm ss}})=0$.
\end{itemize}
Notice that, in general, $\tilde{\rho}^{\rm ss}(u)$ and the eigenvector of ${\cal B}_{u,0*}$ corresponding to the spectral radius only coincide for $u=0$.

We will also use the following identities
\begin{itemize}
\item 
$\tilde{\rho}^{\rm ss} (\dot{\tilde{\mathcal{T}}}\tilde{\mathcal{R}}{\cal B}_{0}^x{\cal B}_{\alpha}(\cdot )) =0$ follows from \eqref{eq:R.T.dot.rho.ss} and the fact that $\tilde{K}_1|\chi^{\rm ss}\rangle=0$.
\item 
$\tilde{\rho}^{\rm ss}(\dot{\cal B}_\alpha(\cdot))=0$ follows from $\tilde{K}_1|\chi^{\rm ss}\rangle=0$
\item 
$ \tilde{\rho}^{\rm ss}(\dot{\mathcal{B}}_0(\cdot)) =0$ follows from the "gauge condition" \eqref{eq:gauge} 
and using the explicit expression \eqref{eq:tilde.K.0}.
\end{itemize}

\textbf{Significant terms in the Taylor approximation (up to the $2k$-th term).}
We will show that up to derivatives of order $2k$ the only contribution in the Taylor expansion of \eqref{eq.proba.b} which does not decay with $n$ is that coming from the part of the order $2k$ derivative in which one takes the second order derivative to each of the blocks ${\cal B}^{x_i}_{u,0}{\cal B}_{u,\alpha^{(i)}}$, for $i=1,\dots k$. This follows from the observations below:


1) the first derivative of \eqref{eq.proba.b} at $u=0$ is zero since 
$\tilde{\rho}^{\rm ss}({\cal B}_{\alpha}(\cdot ))=0$ and the first derivative of $\tilde{\rho}^{\rm ss}_u({\cal B}_{u,0}^x {\cal B}_{u,\alpha}(\cdot))$ at $u=0$ is equal to
\[\tilde{\rho}^{\rm ss} (\dot{\tilde{\mathcal{T}}}\tilde{\mathcal{R}}{\cal B}_{0}^x{\cal B}_{\alpha}(\cdot ))+\sum_{l=0}^{x-1}\tilde{\rho}^{\rm ss}(\dot{\cal B}_0{\cal B}_0^l {\cal B}_{\alpha}(\cdot))+\tilde{\rho}^{\rm ss}(\dot{\cal B}_\alpha(\cdot))
\]
and the three terms have been shown to be equal to zero above.


2) Each block of the type 
${\cal B}^x_{u,0} {\cal B}_{u,\alpha}$ needs to be differentiated twice. Indeed if it is not differentiated at all, then it will bring an exponential decaying contribution since 
$$\sum_{n^\gamma \leq x \leq n}\|{\cal B}^x_{0} {\cal B}_{\alpha}\| \leq C \sum_{n^\gamma \leq x \leq n}\lambda^x=\frac{C(\lambda^{n^\gamma}-\lambda^{n+1})}{1-\lambda} \rightarrow 0 ;$$
This follows from the fact that since $\alpha$ contains at least one $1$, we have ${\tilde{\rho}}^{\rm ss}({\cal B}_{\alpha} (\cdot)) =0$ which means that ${\cal B}_\alpha(\cdot)$ belongs to the subspace on which 
$\mathcal{B}_0$ acts as the strict contraction $\mathcal{R}_0$. 

Alternatively, if we take a first order derivative we get a vanishing contribution since 
$$
\sum_{n^\gamma \leq x \leq n}\|{\cal B}_{0}^x \dot{{\cal B}}_{\alpha}\| \leq C\sum_{n^\gamma \leq x \leq n}\lambda^x
$$ 
which decays as the term in the previous equation and
\[
\begin{split}&\sum_{n^\gamma \leq k < x \leq n} \|{\cal B}_0^{x-k} \dot{\cal B}_0 {\cal B}_0^{k-1}{\cal B}_{\alpha}\|  \\
&\leq C\sum_{n^\gamma \leq k < x \leq n} \lambda^{x} \asymp Cn^{\gamma}\lambda^{n^\gamma} \rightarrow_{n \rightarrow +\infty} 0.
\end{split}
\]
On the other hand, by differentiating a block of the form ${\cal B}^x_{u,0}{\cal B}_{u,\alpha}$ twice, one stops the exponential decay and obtains a linear growth in $n$; however, we need to remember that every time we use a derivative, everything gets multiplied by $u/\sqrt{n}$, which is of the order $n^{-1/2+\epsilon^\prime}$. Therefore, if we consider the terms in the Taylor expansion up to order $2k$, the only one that does not decay to $0$ is the one with $2k$ derivatives where we use two of them on the block $\tilde{\rho}^{\rm ss}({\cal B}^{x_1}_{u,0}{\cal B}_{u,\alpha^{(1)}}(\cdot))$ at the beginning and other two on each following block of the form ${\cal B}_{u,0}^{x_i}{\cal B}_{u,\alpha^{(i)}}(\cdot)$. Any other term where we have less derivatives involved or where we spend them in a different way is either $0$ or decays at least as (for $k \geq 2$)
$$(n^{\frac{1}{2}+\gamma}\lambda^{n^\gamma})n^{2k\epsilon^\prime}$$
uniformly in 
$u$ for $|u|\leq n^{\epsilon^\prime} $; indeed, for any derivative used in a different way (they cannot be more than $2(k-1)$ because two of them need to be used for the first block), one gains a growth of $n^{1/2}$, but suffers a decay of at least $n^{\gamma}\lambda^{n^\gamma}$.


We now focus on the leading (order $2k$) term of the Taylor expansion. The second derivative of a block of the type 
$\mathcal{B}^x_{u,0}\mathcal{B}_{u,\alpha}$ looks like
\begin{eqnarray*}
&&\mathcal{B}^x_{0} 
\ddot{\mathcal{B}}_\alpha + 
\sum_{k=1}^x 
\mathcal{B}^{k-1}_{0}
\ddot{\mathcal{B}}_{0} 
\mathcal{B}^{x-k}
\mathcal{B}_\alpha\\
&&+
2\sum_{k=1}^{x} 
\mathcal{B}^{k-1}_{0} \dot{\mathcal{B}}_{0} 
\mathcal{B}^{x-k}_{0}\dot{\mathcal{B}}_\alpha \\
&&+
2\sum_{1\leq k< s \leq x} \mathcal{B}^{k-1}_{0}\dot{\mathcal{B}}_{0} 
 \mathcal{B}^{s-k-1}_{0}\dot{\mathcal{B}}_{0} \mathcal{B}^{x-s}_0 \mathcal{B}_\alpha
\end{eqnarray*}
For large $n$ this becomes
\begin{eqnarray*}
&&
\tilde{\rho}^{\rm ss}\left[
(
\ddot{\mathcal{B}}_\alpha   + 
2 \dot{\mathcal{B}}_{0} 
\mathcal{R}_{0}\dot{\mathcal{B}}_\alpha )(\cdot)\right]
 |\chi^{\rm ss}\rangle \langle \chi^{\rm ss}|
 \\
 &&
+\tilde{\rho}^{\rm ss}\left[(\ddot{\mathcal{B}}_0 + 2 \dot{\mathcal{B}}_{0} 
 \mathcal{R}_{0}\dot{\mathcal{B}}_{0} )
 \mathcal{R}_0 \mathcal{B}_\alpha )(\cdot)\right]
 |\chi^{\rm ss}\rangle \langle \chi^{\rm ss}|
\end{eqnarray*}
where 
$
\mathcal{R}_0
$ 
is the Moore-Penrose inverse of ${\rm Id} - \mathcal{B}_0$, and we have used the spectral decomposition of $\mathcal{B}_0$. Moreover, the rightmost term 
$B_0^{x_{k+1}}(\mathbf{1})$ (which is not differentiated) converges to 
$|\chi^{\rm ss}\rangle \langle \chi^{\rm ss}|$. This means that the full $2k$ derivative of \eqref{eq.proba.b} becomes
$$
\frac{u^{2k}}{k!}\prod_{i=1}^{k}\frac{1}{2} 
\tilde{\rho}^{\rm ss}
\left[{\cal B}^{(2)}_{\alpha^{(i)}} (\ket{\chi^{\rm ss}}\bra{\chi^{\rm ss}}) \right]
$$
where 
\begin{eqnarray*}
{\cal B}^{(2)}_{\alpha} :=
\Ddot{\cal B}_{\alpha}+2 \dot{\cal B}_0 {\cal R}_0\dot{\cal B}_\alpha 
+(\Ddot{{\cal B}}_{0}+
    2\dot{\cal B}_0 {\cal R}_0 \dot{\cal B}_0){\cal R}_0 {\cal B}_{\alpha^{}}.
\end{eqnarray*}


The number of terms with two derivatives in each block is equal to $(2k)!/2^k$ (it is the same as the number of terms of the form $2^k=\left (\frac{d^2}{dx}^2(x^2) \right )^k$ in the $2k$-th derivative of $(x^{2})^k$): $(2k)!$ simplifies with the one coming from the Taylor expansion, while the factor $2^{-k}$ can be distributed to each factor. The expression of the factors and the upper bound on the error can be obtained differentiating and using the spectral decomposition of
 ${\cal B}_0$; the $1/k!$ in front comes from the sum: indeed one can see
 that
 $$\sum_{\substack{x_1 +\dots +x_{k+1} = n-K\\x_2,\dots,x_{k} \geq n^{\gamma}}} \asymp \sum_{x_1 +\dots +x_{k+1} = n} \asymp \frac{n^k}{k!},$$
 which is the number of ways one can choose $k$ numbers out of $n$.


The first term of the form
$$
\frac{1}{2}\tilde{\rho}^{\rm ss}(\Ddot{\cal B}_{\alpha}(\ket{\chi^{\rm ss}}\bra{\chi^{\rm ss}}))= |\langle \chi^{\rm ss}|\tilde{K}_{\alpha_{|\alpha|}}\cdots \tilde{K}_{\alpha_2}\dot{\tilde{K}}_1 \chi^{\rm ss} \rangle|^2.
$$
This follows from the fact that $K_1|\chi^{\rm ss} \rangle =0$, so the derivatives need to be applied the the first $K_1$ terms of $\mathcal{B}_{\alpha}$.
For the second term, one gets
\[\begin{split}
&\tilde{\rho}^{\rm ss}(\dot{\cal B}_0 {\cal R}_0\dot{\cal B}_{\alpha}(\ket{\chi^{\rm ss}}\bra{\chi^{\rm ss}}))=\\
&2\Re\left (\langle \tilde{K}_{\alpha_{|\alpha|}}\cdots \tilde{K}_{\alpha_2}\dot{\tilde{K}}_1 \chi^{\rm ss}| \chi^{\rm ss} \rangle\times\right.\\
&\left.
 \langle \chi^{\rm ss}|\tilde{K}_{\alpha_{|\alpha|}} \cdots \tilde{K}_{\alpha_1}(\mathbf{1}-\tilde{K}_0)^{-1}\dot{\tilde{K}}_0 \chi^{\rm ss} \rangle\right).\\
\end{split}\]
Finally, for the last term, below we  will show that
\begin{equation}
\begin{split}
& \frac{1}{2}\tilde{\rho}^{\rm ss}((\Ddot{{\cal B}}_{0}+2\dot{\cal B}_0 {\cal R}_0 \dot{\cal B}_0){\cal R}_0 {\cal B}_{\alpha}(\ket{\chi^{\rm ss}}\bra{\chi^{\rm ss}}))=\\
&|\langle\chi^{\rm ss}| \tilde{K}_{\alpha_{|\alpha|}} \cdots \tilde{K}_{\alpha_1}(\mathbf{1}-\tilde{K}_0)^{-1}\dot{\tilde{K}}_0 \chi^{\rm ss} \rangle|^2.
\label{eq:third.term}
\end{split}
\end{equation}
Note that the sum of the three terms is equal to $|\mu_\alpha|^2$ 
where $\mu_\alpha$ is given in equation \eqref{eq:mu_alpha2} in Lemma \ref{lem:alternative}.

In conclusion, we obtained that for large $n$, the probability of any sequence of $n$ outcomes showing the ordered sequence of excitation patterns given by $\alpha^{(1)}, \dots, \alpha^{(k)}$ is asymptotically equivalent to
\begin{equation}\label{eq:leading}
\frac{1}{k!}\prod_{i=1}^{k} (\lambda_{\alpha^{(i)}}u^2)
\end{equation}
plus a reminder coming from neglecting the terms of order bigger than $2k$ in the Taylor expansion. If we are able to show that the reminder is negligible compared to the term in Eq. \eqref{eq:leading}, then we can prove the statement in Eq. \eqref{eq:countlim}. Indeed, suppose that the sequence we are analysing belongs to $B_{\mathbf{m}}$; then we can partition $B_{\mathbf{m}}$ into
$$\frac{k!}{\prod_{i=1}^k m_{\alpha^{(i)}}!}$$
disjoint subsets containing the excitation patters in $(\alpha^{(1)},m_{\alpha^{(1)}}), \dots, (\alpha^{(k)},m_{\alpha^{(k)}})$ in a fixed order and whose probability asymptotically behaves as (as we just showed)
$$
\frac{1}{k!}\prod_{i=1}^{k} (u^2\lambda_{\alpha^{(i)}})^{m_{\alpha^{(i)}}}.$$

We now prove \eqref{eq:third.term}.
Notice that, since
\[\begin{split}
&|\langle\chi^{\rm ss}| \tilde{K}_{\alpha_{|\alpha|}} \cdots \tilde{K}_{\alpha_1}(\mathbf{1}-\tilde{K}_0)^{-1}\dot{\tilde{K}}_0 \chi^{\rm ss} \rangle|^2=\\
&\langle (\mathbf{1}-\tilde{K}_0)^{-1}\dot{\tilde{K}}_0 \chi^{\rm ss} |{\cal B}_\alpha(\ket{\chi^{\rm ss}}\bra{\chi^{\rm ss}} )|(\mathbf{1}-\tilde{K}_0)^{-1}\dot{\tilde{K}}_0 \chi^{\rm ss}\rangle,
\end{split}\]
we need to prove that
\[\begin{split}
&{\cal R}_{0*}(\Ddot{{\cal B}}_{0*}+2\dot{\cal B}_{0*} {\cal R}_{0*} \dot{\cal B}_{0*})(\tilde{\rho}^{\rm ss}))=\\
&2\ket{(\mathbf{1}-\tilde{K}_0)^{-1}\dot{\tilde{K}}_0 \chi^{\rm ss}}\bra{(\mathbf{1}-\tilde{K}_0)^{-1}\dot{\tilde{K}}_0 \chi^{\rm ss}}+{\rm rem},
\end{split}
\]
where by ${\cal R}_{0*}$ we mean the Moore-Penrose inverse and "${\rm rem }$" is a term which is gives $0$ when evaluated against ${\cal B}_\alpha(\ket{\chi^{\rm ss}}\bra{\chi^{\rm ss}}).$ 
Equivalently,
\[\begin{split}
&(\Ddot{{\cal B}}_{0*}+2\dot{\cal B}_{0*} {\cal R}_{0*} \dot{\cal B}_{0*})(\tilde{\rho}^{\rm ss})=\\
&2({\rm Id}-{\cal B}_{0*})(\ket{(\mathbf{1}-\tilde{K}_0)^{-1}\dot{\tilde{K}}_0 \chi^{\rm ss}}\bra{(\mathbf{1}-\tilde{K}_0)^{-1}\dot{\tilde{K}}_0 \chi^{\rm ss}})+{\rm rem}^\prime,
\end{split}
\]
where ${\cal R}_{0*}({\rm rem}^\prime)={\rm rem}$. 

By explicit computations, one can see that
\[
\begin{split}
  &({\rm Id}-{\cal B}_{0*})(\ket{(\mathbf{1}-\tilde{K}_0)^{-1}\dot{\tilde{K}}_0 \chi^{\rm ss}}\bra{(\mathbf{1}-\tilde{K}_0)^{-1}\dot{\tilde{K}}_0 \chi^{\rm ss}})=\\
  &\ket{(\mathbf{1}-\tilde{K}_0)^{-1}\dot{\tilde{K}}_0 \chi^{\rm ss}}\bra{(\mathbf{1}-\tilde{K}_0)^{-1}\dot{\tilde{K}}_0 \chi^{\rm ss}}-\\
  &\ket{\tilde{K}_0(\mathbf{1}-\tilde{K}_0)^{-1}\dot{\tilde{K}}_0 \chi^{\rm ss}}\bra{\tilde{K}_0(\mathbf{1}-\tilde{K}_0)^{-1}\dot{\tilde{K}}_0 \chi^{\rm ss}}.\\
\end{split}
\]
Using that $\tilde{K}_0(\mathbf{1}-\tilde{K}_0)^{-1}=(\mathbf{1}-\tilde{K}_0)^{-1}-\mathbf{1}$, one gets
\[
\begin{split}
  &({\rm Id}-{\cal B}_{0*})(\ket{(\mathbf{1}-\tilde{K}_0)^{-1}\dot{\tilde{K}}_0 \chi^{\rm ss}}\bra{(\mathbf{1}-\tilde{K}_0)^{-1}\dot{\tilde{K}}_0 \chi^{\rm ss}})=\\
  &\ket{(\mathbf{1}-\tilde{K}_0)^{-1}\dot{\tilde{K}}_0 \chi^{\rm ss}}\bra{\dot{\tilde{K}}_0 \chi^{\rm ss}}+
\ket{\dot{\tilde{K}}_0 \chi^{\rm ss}}\bra{(\mathbf{1}-\tilde{K}_0)^{-1}\dot{\tilde{K}}_0 \chi^{\rm ss}}-\\
  &\ket{\dot{\tilde{K}}_0 \chi^{\rm ss}}\bra{\dot{\tilde{K}}_0 \chi^{\rm ss}}.\\
\end{split}
\]
On the other hand,
\begin{eqnarray*}
   &&(\Ddot{{\cal B}}_{0*}+2\dot{\cal B}_{0*} {\cal R}_{0*} \dot{\cal B}_{0*})(\rho)\\
   &&=
   \ket{\Ddot{\tilde{K}}_0 \chi^{\rm ss}}\bra{\chi^{\rm ss}}+2\ket{\dot{\tilde{K}}_0\chi^{\rm ss}}\bra{\dot{\tilde{K}}_0\chi^{\rm ss}}+\ket{\chi^{\rm ss}}\bra{\Ddot{\tilde{K}}_0 \chi^{\rm ss}}\\ &&+2\ket{\dot{\tilde{K}}_0(\mathbf{1}-\tilde{K}_0)^{-1}\dot{\tilde{K}}_0 \chi^{\rm ss}}\bra{\chi^{\rm ss}}\\&&+2\ket{\chi^{\rm ss}}\bra{\dot{\tilde{K}}_0(\mathbf{1}-\tilde{K}_0)^{-1}\dot{\tilde{K}}_0 \chi^{\rm ss}}\\
   &&+2\ket{\tilde{K}_0(\mathbf{1}-\tilde{K}_0)^{-1}\dot{\tilde{K}}_0 \chi^{\rm ss}}\bra{\dot{\tilde{K}}_0\chi^{\rm ss}}\\
   &&+2\ket{\dot{\tilde{K}}_0\chi^{\rm ss}}\bra{\dot{\tilde{K}}_0(\mathbf{1}-\tilde{K}_0)^{-1}\dot{\tilde{K}}_0 \chi^{\rm ss}}\\
   &&=\ket{(\Ddot{\tilde{K}}_0 +2\dot{\tilde{K}}_0(\mathbf{1}-\tilde{K}_0)^{-1}\dot{\tilde{K}}_0)\chi^{\rm ss}}\bra{\chi^{\rm ss}}\\
   &&+\ket{\chi^{\rm ss}}\bra{(\Ddot{\tilde{K}}_0 +2\dot{\tilde{K}}_0(\mathbf{1}-\tilde{K}_0)^{-1}\dot{\tilde{K}}_0)\chi^{\rm ss}}\\
   &&+2\ket{(\mathbf{1}-\tilde{K}_0)^{-1}\dot{\tilde{K}}_0 \chi^{\rm ss}}\bra{\dot{\tilde{K}}_0. \chi^{\rm ss}}\\
   &&+
2\ket{\dot{\tilde{K}}_0 \chi^{\rm ss}}\bra{(\mathbf{1}-\tilde{K}_0)^{-1}\dot{\tilde{K}}_0 \chi^{\rm ss}}-2\ket{\dot{\tilde{K}}_0 \chi^{\rm ss}}\bra{\dot{\tilde{K}}_0 \chi^{\rm ss}}.\\
\end{eqnarray*}
Note that the last two lines are exactly equal to $2({\rm Id}-{\cal B}_{0*})(\ket{(\mathbf{1}-\tilde{K}_0)^{-1}\dot{\tilde{K}}_0 \chi^{\rm ss}}\bra{(\mathbf{1}-\tilde{K}_0)^{-1}\dot{\tilde{K}}_0 \chi^{\rm ss}})$. Let us look at the remaining part:
\[
\begin{split}
{\rm rem}^\prime=&\ket{(\Ddot{\tilde{K}}_0 +2\dot{\tilde{K}}_0(\mathbf{1}-\tilde{K}_0)^{-1}\dot{\tilde{K}}_0)\chi^{\rm ss}}\bra{\chi^{\rm ss}}+\\
&\ket{\chi^{\rm ss}}\bra{(\Ddot{\tilde{K}}_0 +2\dot{\tilde{K}}_0(\mathbf{1}-\tilde{K}_0)^{-1}\dot{\tilde{K}}_0)\chi^{\rm ss}}.
\end{split}
\]
one can easily see that
\[
\begin{split}
&{\cal R}_{0*}({\rm rem}^\prime)=\\
&\ket{({\mathbf{1}-\tilde{K}_0})^{-1}(\Ddot{\tilde{K}}_0 +2\dot{\tilde{K}}_0(\mathbf{1}-\tilde{K}_0)^{-1}\dot{\tilde{K}}_0)\chi^{\rm ss}}\bra{\chi^{\rm ss}}\\
&+\ket{\chi^{\rm ss}}\bra{{(\mathbf{1}-\tilde{K}_0})^{-1}(\Ddot{\tilde{K}}_0 +2\dot{\tilde{K}}_0(\mathbf{1}-\tilde{K}_0)^{-1}\dot{\tilde{K}}_0)\chi^{\rm ss}}.
\end{split}
\]
In the previous Eq. we used $(\mathbf{1}-K_0)^{-1}$ for the Moore-Penrose inverse: in general, $\ket{(\Ddot{\tilde{K}}_0 +2\dot{\tilde{K}}_0(\mathbf{1}-\tilde{K}_0)^{-1}\dot{\tilde{K}}_0)\chi^{\rm ss}}$ is not orthogonal to $\ket{\chi}^{\rm ss}$.
It is now clear that 
$
\Tr({\cal R}_{0*}({\rm rem}^\prime){\cal B}_{\alpha}(\ket{\chi^{\rm ss}}\bra{\chi^{\rm ss}})=0
$
which proves \eqref{eq:third.term}.

\textbf{Remainder.} Now, we need to take care of the reminder: it is enough to show that the following expression is $o(n^{2k\epsilon})$:
{\small 
\begin{equation} \label{eq:reminder}\begin{split}
\frac{u^{2k+1}}{n^{k+1/2}}\sum_{\substack{x_1 +\dots +x_{k+1} = n-K\\x_2,\dots,x_{k} \geq n^{\gamma}}}&\frac{d^{2k+1}}{u^{2k+1}}\tilde{\rho}^{\rm ss}\left (\frac{u}{\sqrt{n}} \right )\left (\tilde{{\cal B}}^{x_1}_{0,u}\tilde{{\cal B}}_{\alpha_1,u} \tilde{{\cal B}}^{x_2}_{0,u}\cdots \right.\\
&\left .\left .\cdots\tilde{{\cal B}}^{x_k}_{0,u}\tilde{{\cal B}}_{\alpha_k,u}\tilde{{\cal B}}^{x_{k+1}}_{0,u}(\mathbf{1}))\right ) \right |_{u=\eta}
\end{split}
\end{equation}
}
for any $|\eta|\leq u/\sqrt{n}$, where
$$
\tilde{{\cal B}}_{\alpha,u}=e^{\lambda_{tot} |\alpha|u^2/n}{\cal B}_{\alpha,u}
$$
for any string $\alpha$.

Let us first point out some properties of the maps 
$\tilde{{\cal B}}_{\alpha,u}$:
\begin{itemize}
\item for $u$ small enough,
$$\tilde{{\cal B}}_{0,u}(\cdot)=a(u) l(u)(\cdot)r(u) + \tilde{{\cal E}}_{0,u}(\cdot) $$
where
\begin{itemize}
    \item $a(u)=1+O(u^{3}/n^{3/2})$, 
    \item$l(u)(r(u))\equiv 1$, 
    \item $l(u)(\tilde{{\cal E}}_{0,u}(\cdot))=0$,
    \item $\tilde{{\cal E}}_{0,u}(r(u))=0$ and
    \item $\|\tilde{{\cal E}}^k_{0,u}\|\leq C\lambda^k$ for some $C \geq 0$ and $0<\lambda <1$.
\end{itemize}
The order of the reminder in the expression $a(u)$ is due to the fact that $\dot{\lambda}$ is equal to the first derivative at $0$ of the spectral radius of ${\cal B}_{0,u}$ which is equal to $0$ because it attains a maximum there; the fact that $\Ddot{\lambda}=0$ as well is due to the multiplicative factor in front of ${\cal B}_{0,u}$ in the definition of $\tilde{{\cal B}}_{0,u}$.

\item $\tilde{\rho}^{\rm ss}(u/\sqrt{n})(\tilde{{\cal E}}_{0,u}(\cdot))=O(u/\sqrt{n})$.
\item For any excitation pattern $\alpha$ one has
\[
\begin{split}
    &l(u)(\tilde{{\cal B}}_{\alpha,u}(\cdot))=O(u^2/n),\\
    &l(u)(\dot{\tilde{{\cal B}}}_{\alpha,u}(\cdot))=O(u/\sqrt{n}).
\end{split}
\]
Indeed, the first derivative of $l(u)(\tilde{{\cal B}}_{\alpha,u}(\cdot))$ at $0$ is given by
$$
\tilde{\rho}^{\rm ss}(\dot{{\cal B}}_{\alpha}(\cdot))+ \dot{l}({\cal B}_\alpha(\cdot)).$$
That the first addend is $0$ has been shown earlier, while for the second one, it is clear using $\dot{l}=\tilde{\rho}^{\rm ss}(\dot{{\cal B}}_{\alpha} ({\rm Id}-{\cal B}_0)^{-1}(\cdot))$.
\item Moreover,
\begin{equation}
\label{eq:decayprop}\begin{split}
    &l(u)(\dot{\tilde{{\cal B}}}_{0,u}(r(u)))=O(u^2/n) \text{ and}\\
    &l(u)((\Ddot{\tilde{{\cal B}}}_{0,u}+2\dot{\tilde{{\cal B}}}_{0,u}\tilde{{\cal E}}_{0,u}\dot{\tilde{{\cal B}}}_{0,u}(r(u)))=O(u/\sqrt{n}).
\end{split}
\end{equation}
This can be seen differentiating
$$l(u)\tilde{\cal B}_{0,u}(x(u))=a(u)$$
and evaluating at $0$.
\item Finally,
$$ l(u)(\dot{\tilde{{\cal B}}}_{0,u} \tilde{{\cal E}}_{0,u}\tilde{{\cal B}}_{\alpha,u}(\cdot))=O(u/\sqrt{n}), $$
since 
$\tilde{\rho}^{\rm ss}(\dot{{\cal B}}_{0})(\cdot)=0$.
\end{itemize}

Let us now study the growth of the derivatives of $\tilde{\cal B}_{0,u}^{x}\tilde{\cal B}_{\alpha,u}(\cdot)$ for $0 \leq x  \leq n\rightarrow +\infty$:
\begin{enumerate}
\item $0^{\rm th}$ order:
\[\begin{split}
    \tilde{\cal B}_{0,u}^{x}\tilde{\cal B}_{\alpha,u}(\cdot)&=a(u)l(u)(\tilde{\cal B}_{\alpha,u}(\cdot))x(u)+O(\lambda^x) \\
    &=O \left (\frac{u^2}{n}+\lambda^x \right );
\end{split}\]
\item $1^{\rm st}$ order:
\[\begin{split}
    &\sum_{1 \leq l \leq x}\tilde{{\cal B}}_{0,u}^{x-l}\dot{\tilde{{\cal B}}}_{0,u}\tilde{{\cal B}}_{0,u}^{l-1}\tilde{{\cal B}}_{\alpha,u}(\cdot)+\tilde{\cal B}_{0,u}^{x}\dot{\tilde{{\cal B}}}_{\alpha,u}(\cdot)=\\
    &xa(u)^2l(u)(\dot{\tilde{{\cal B}}}_{0,u}(x(u)))l(u)(\tilde{{\cal B}}_{\alpha,u}(\cdot))+\\
    &a(u)\tilde{{\cal R}}_{0,u}\dot{\tilde{{\cal B}}}_{0,u}(x(u)))l(u)(\tilde{{\cal B}}_{\alpha,u}(\cdot))+\\
    &a(u)l(u)(\dot{\tilde{{\cal B}}}_{0,u}\tilde{{\cal R}}_{0,u}\tilde{{\cal B}}_{\alpha,u}(\cdot))x(u) + O\left (\frac{u}{\sqrt{n}} +x\lambda^x\right )
    \\&=O\left ( \frac{u}{\sqrt{n}}+x\lambda^x\right ),
    \end{split}\]
    where we used that $\epsilon <1/6$;
    \end{enumerate}
    We remark that in the case where the block $\tilde{\cal B}_{0,u}^{x}\tilde{\cal B}_{\alpha,u}(\cdot)$ is the first one, due to the action of $\tilde{\rho}^{\rm ss}(u/\sqrt{n})$, the $0^{\rm th}$-order term becomes $O(u^2/n+u\lambda^x/\sqrt{n})$ and the $1^{\rm st}$-order one becomes $O(u/\sqrt{n})$, while if the block is not the first one, $\lambda^x$ decays exponentially fast in $n$ since $x \geq n^\gamma$. 
    \begin{enumerate}
    \setcounter{enumi}{2}
    \item $2^{\rm nd}$ order:
    \[
    \begin{split}
       &\sum_{1 \leq l \leq x}\tilde{{\cal B}}_{0,u}^{x-l}\Ddot{\tilde{{\cal B}}}_{0,u}\tilde{{\cal B}}_{0,u}^{l-1}\tilde{{\cal B}}_{\alpha,u}(\cdot)+\\
       &2\sum_{1 \leq l<k  \leq x}\tilde{{\cal B}}_{0,u}^{x-k}\dot{\tilde{{\cal B}}}_{0,u}\tilde{{\cal B}}_{0,u}^{k-l-1}\dot{\tilde{{\cal B}}}_{0,u}\tilde{{\cal B}}_{0,u}^{l-1}\tilde{{\cal B}}_{\alpha,u}(\cdot)+\\
       &2\sum_{1 \leq l \leq x}\tilde{{\cal B}}_{0,u}^{x-l}\dot{\tilde{{\cal B}}}_{0,u}\tilde{{\cal B}}_{0,u}^{l-1}\dot{\tilde{{\cal B}}}_{\alpha,u}(\cdot)+\\
       &\tilde{\cal B}_{0,u}^{x}\Ddot{\tilde{{\cal B}}}_{\alpha,u}(\cdot)=\\
       &2x^2a(u)^3(l(u)(\dot{\tilde{{\cal B}}}_{0,u}(x(u)))^2l(u)(\tilde{{\cal B}}_{\alpha,u}(\cdot))+\\
       &xa(u)^2l(u)((\Ddot{\tilde{{\cal B}}}_{0,u}+2\dot{\tilde{{\cal B}}}_{0,u}\tilde{{\cal R}}_{0,u}\dot{\tilde{{\cal B}}}_{0,u})(x(u)))l(u)(\tilde{{\cal B}}_{\alpha,u}(\cdot))+\\
       &2xa(u)^2l(u)(\dot{\tilde{{\cal B}}}_{0,u}(x(u))l(u)(\dot{\tilde{{\cal B}}}_{\alpha,u}(\cdot))+O(1)=O(1);
    \end{split}
    \]
    \item $3^{\rm rd}$ order:
\[\begin{split}
    &\sum_{1 \leq l \leq x}\tilde{{\cal B}}_{0,u}^{x-l}\dddot{\tilde{{\cal B}}}_{0,u}\tilde{{\cal B}}_{0,u}^{l-1}\tilde{{\cal B}}_{\alpha,u}(\cdot)+\\
    &3\sum_{1 \leq l<k  \leq x}\tilde{{\cal B}}_{0,u}^{x-k}\Ddot{\tilde{{\cal B}}}_{0,u}\tilde{{\cal B}}_{0,u}^{k-l-1}\dot{\tilde{{\cal B}}}_{0,u}\tilde{{\cal B}}_{0,u}^{l-1}\tilde{{\cal B}}_{\alpha,u}(\cdot)+\\
    &3\sum_{1 \leq l<k  \leq x}\tilde{{\cal B}}_{0,u}^{x-k}\dot{\tilde{{\cal B}}}_{0,u}\tilde{{\cal B}}_{0,u}^{k-l-1}\Ddot{\tilde{{\cal B}}}_{0,u}\tilde{{\cal B}}_{0,u}^{l-1}\tilde{{\cal B}}_{\alpha,u}(\cdot)+\\
    &6\sum_{1 \leq l<k <m \leq x}\tilde{{\cal B}}_{0,u}^{x-m}\dot{\tilde{{\cal B}}}_{0,u}\tilde{{\cal B}}_{0,u}^{m-k-1}\dot{\tilde{{\cal B}}}_{0,u}\tilde{{\cal B}}_{0,u}^{k-l-1}\dot{\tilde{{\cal B}}}_{0,u}\tilde{{\cal B}}_{0,u}^{l-1}\tilde{{\cal B}}_{\alpha,u}(\cdot)+\\
    &3\sum_{1 \leq l \leq x}\tilde{{\cal B}}_{0,u}^{x-l}\Ddot{\tilde{{\cal B}}}_{0,u}\tilde{{\cal B}}_{0,u}^{l-1}\dot{\tilde{{\cal B}}}_{\alpha,u}(\cdot)+\\
    &6\sum_{1 \leq l<k  \leq x}\tilde{{\cal B}}_{0,u}^{x-k}\dot{\tilde{{\cal B}}}_{0,u}\tilde{{\cal B}}_{0,u}^{k-l-1}\dot{\tilde{{\cal B}}}_{0,u}\tilde{{\cal B}}_{0,u}^{l-1}\dot{\tilde{{\cal B}}}_{\alpha,u}(\cdot)+\\
       &3\sum_{1 \leq l \leq x}\tilde{{\cal B}}_{0,u}^{x-l}\dot{\tilde{{\cal B}}}_{0,u}\tilde{{\cal B}}_{0,u}^{l-1}\Ddot{\tilde{{\cal B}}}_{\alpha,u}(\cdot)+\\
       &\tilde{\cal B}_{0,u}^{x}\dddot{\tilde{{\cal B}}}_{\alpha,u}(\cdot)=O(n^{2\epsilon});
       \end{split}\]
       \item $m^{\rm th}$ order for $m \geq 4$: first, notice that
       \[
       \begin{split}
           &\frac{d^m}{du^m}\left (\tilde{\cal B}_{0,u}^{x}\tilde{\cal B}_{\alpha,u}(\cdot)\right )=\\
           &\sum_{l=0}^{m}\binom{m}{l}\frac{d^l}{du^l}(\tilde{\cal B}_{0,u}^{x}) \frac{d^{(m-l)}}{du^{(m-l)}}(\tilde{\cal B}_{\alpha,u}(\cdot)).
       \end{split}
       \]
The term $d^l(\tilde{\cal B}_{0,u}^{x})/du^l$ is a sum over all possible ways of distributing the derivatives among the factors $\tilde{\cal B}_{0,u}$.

Then one can use the spectral decomposition of the $\tilde{\cal B}_{0,u}$ which have not been differentiated and glue together differentiated terms using ${\cal R}_{0,u}$ or turning them into products using the projection $l(u)(\cdot)x(u)$ as we did in the previous items. From Eq. \eqref{eq:decayprop}, one can see that the blocks with a single derivative of a single term $\tilde{\cal B}_{0,u}$ bring a growth of the order $n^{2\epsilon^\prime}$, the terms with two derivatives ($\Ddot{\tilde{{\cal B}}}_{0,u}+2\dot{\tilde{{\cal B}}}_{0,u}\tilde{{\cal R}}_{0,u}\dot{\tilde{{\cal B}}}_{0,u}$) bring a growth of the order $n^{1/2+\epsilon^\prime}$ while all other blocks (with at least $3$ elements), cause a growth of the order at most $n$. Notice that the \textit{highest growth is attained by making as many groups of three derivatives as possible} and this will be used in obtaining the following estimate.

The term $d^{(m-l)}(\tilde{\cal B}_{\alpha,u})/du^{(m-l)}$ does not cause any reduction in the growth if $m-l \geq 2$, while causes a decay equal to $n^{-1/2+\epsilon^\prime}$ if $m=l+1$ and $n^{-1+2\epsilon^\prime}$ if $m=l$.

Therefore,
\begin{equation} \label{eq:growth}
    \begin{split}
&\left \| \frac{d^m}{du^m}\left (\tilde{\cal B}_{0,u}^{x}\tilde{\cal B}_{\alpha,u}(\cdot)\right )\right \| \lesssim \\
&n^{\max\{f(m-2), f(m-1)-\frac{1}{2}+\epsilon^\prime, f(m)-1+2\epsilon^\prime\}},
\end{split}\end{equation}
\end{enumerate}
where
\[
f(m)=a(m) +\left (\frac{1}{2}+\epsilon^\prime \right )b(m)+2\epsilon^\prime (m-(3a(m)+2b(m)))
\]
\[a(m):=\left \lfloor \frac{m}{3}\right \rfloor, \quad b(m):=\left \lfloor \frac{1}{2}\left (m-3\left \lfloor \frac{m}{3}\right \rfloor \right )\right \rfloor.
\]
$a(m)$ is the maximum number of groups with three derivatives and $b(m)$ is the maximum number of groups with two derivatives that we can make with the derivatives left. Notice that if $\epsilon^\prime <1/6$, then $f(m) \leq m/3$: indeed,
\[\begin{split}
    &a(m) +\left (\frac{1}{2}+\epsilon^\prime \right )b(m)+2\epsilon^\prime (m-(3a(m)+2b(m)))=\\
    &2\epsilon^\prime m + (1-6\epsilon^\prime)a(m)+ \left ( \frac{1}{2}-3\epsilon^\prime \right ) b(m) \leq \\
    &2\epsilon^\prime m + (1-6\epsilon^\prime)a(m)+ \frac{1}{2}\left ( \frac{1}{2}-3\epsilon^\prime \right )(m-3a(m)) \leq \\
    &\frac{1}{2}\left (\left ( \frac{1}{2}+\epsilon^\prime \right )m +\left ( \frac{1}{2}-3\epsilon^\prime \right )a(m) \right )\leq \\
    &\frac{1}{2}\left (\left ( \frac{1}{2}+\epsilon^\prime \right )m +\left ( \frac{1}{6}-\epsilon^\prime \right )m \right )\leq \frac{m}{3}.
\end{split}\]
Therefore, the growth of the term in Eq. \eqref{eq:growth} is upper bounded by
\begin{equation}\label{eq:exder}
n^{\max\{\frac{m-2}{3}, \frac{m-1}{3}-\frac{1}{2}+\epsilon^\prime, \frac{m}{3}-1+2\epsilon^\prime\}}.\end{equation}
This implies that the first two derivatives spent on a block of the form $\tilde{\cal B}_{0,u}^x\tilde{\cal B}_{\alpha,u}$ cause a growth equal to $n^{1/2-\epsilon^\prime}$, while the other $m-2$ only brings a growth at most equal to the term in Eq. \eqref{eq:exder}. Notice that it is more convenient to spend them in a way that every block of the form $\tilde{\cal B}_{0,u}^x\tilde{\cal B}_{\alpha,u}$ has two derivatives, since one has that
\[
\left (\frac{1}{2}-\epsilon^\prime \right )(m-2) \geq \frac{m-2}{3} \Leftrightarrow m \geq 2, 
\]
moreover
\[
\left (\frac{1}{2}-\epsilon^\prime \right )(m-2) \geq \frac{m-1}{3}-\frac{1}{2}+\epsilon^\prime \Leftrightarrow m \geq 1
\]
and
\[
\left (\frac{1}{2}-\epsilon^\prime \right )(m-2) \geq \frac{m}{3}-1+2\epsilon^\prime \Leftrightarrow m \geq 0.
\]
Therefore the growth of
\[\begin{split}
\sum_{\substack{x_1 +\dots +x_{k+1} = n-K\\x_2,\dots,x_{k} \geq n^{\gamma}}}&\frac{d^{2k+1}}{u^{2k+1}}\tilde{\rho}^{\rm ss}\left (\frac{u}{\sqrt{n}} \right )\left (\tilde{{\cal B}}^{x_1}_{0,u}\tilde{{\cal B}}_{\alpha_1,u} \tilde{{\cal B}}^{x_2}_{0,u}\cdots \right.\\
&\left .\left .\cdots\tilde{{\cal B}}^{x_k}_{0,u}\tilde{{\cal B}}_{\alpha_k,u}\tilde{{\cal B}}^{x_{k+1}}_{0,u}(\mathbf{1}))\right ) \right |_{u=\eta},
\end{split}\]
is of the order of $O(n^{2\epsilon^\prime})$ uniformly in $u$ and it is attained by the term where we spend at least two derivatives in each block of the form $\tilde{\cal B}_{0,u}^x\tilde{\cal B}_{\alpha,u}$ and the last derivative in any of such blocks. One can see that it is not convenient to use any derivative in the final term $\tilde{\cal B}_{0,u}^x(\mathbf{1})$, since it is better to contrast the decay induced by the other blocks. To conclude, the term in Eq. \eqref{eq:reminder} grows at most as $n^{(2k+3)\epsilon^\prime-1/2}=o(n^{2k\epsilon^\prime})$. We proved Proposition \ref{prop:grweps} and notice that Theorem \ref{thm:trajs} follows considering $\epsilon^\prime=0$.
\end{proof}

\section{Proof of Corollary \ref{coro:lconv}} \label{sec:lconvapp}
\begin{proof}
    Let us define the measurable space given by the set $\Omega=\mathbb{N}^{\mathbb{N}}$ together with the $\sigma$-field ${\cal F}$ generated by cylindrical sets; we can consider on $(\Omega, {\cal F})$ the law $\overline{\nu}_u$ of $\{N_\alpha:\alpha \in {\cal P}\}$ and the law of $\{N_\alpha(n): \alpha \in {\cal P}\}$, which, with a slight abuse of notation, we still denote by $\nu_{u,n}$ as well. We know that for every finite set $A$ of patterns, one has
    \begin{equation} \label{eq:convergence}
        \lim_{n \rightarrow +\infty} \sup_{|u|<C} |\nu_{u,n}(A)-\overline{\nu}_u(A)|=0.
    \end{equation}
    Notice that, for every $\epsilon >0$, there exists a set $A_\epsilon$ of finitely many patter such that
    $$\inf_{|u|<C} \overline{\nu}_u(A_\epsilon) > 1-\epsilon.$$
    Therefore, using Eq. \eqref{eq:convergence}, one has that there exists $N_\epsilon$ such that $\forall n \geq N_\epsilon$,
    $$\inf_{|u|<C} \nu_{u,n}(A_\epsilon) > 1-\epsilon, \text{ and } \sup_{|u|<C} |\nu_{u,n}(A_\epsilon)-\overline{\nu}_u(A_\epsilon)|<\epsilon.$$ 
    Therefore, given a bounded function $f:\mathbb{N}^{\cal P} \rightarrow \mathbb{R}$, for every $\epsilon>0$, $\forall n \geq N_\epsilon$, one has
    \[\begin{split}
    &\sup_{|u| < C}|\mathbb{E}_{\nu_{u,n}}[f]-\mathbb{E}_{\overline{\nu}_{u}}[f]| \leq\\
    &\|f\|_\infty\left (\sup_{|u|<C} |\nu_{u,n}(A_\epsilon)-\overline{\nu}_u(A_\epsilon)| +\right .\\
    &\left .\sup_{|u|<C} \nu_{u,n}(A_\epsilon^C)+\overline{\nu}_{u}(A_\epsilon^C) \right ) \leq \\
    &3\|f\|_\infty \epsilon.
    \end{split}\]
    For the arbitrariness of $\epsilon$, we proved the first statement about the weak convergence.

    Consider the random process $N_{\rm tot}(n)$ that counts all the occurrences of $1$'s in the output up to time $n$ and notice that for every $\alpha \in {\cal P}$, $N_\alpha(n) \leq N_{\rm tot}(n)$. Therefore for every $p \geq 1$ and $m \in \mathbb{N}$, one has
    $$\mathbb{E}_{\nu_{u,n}}[N_\alpha(n)^p \mathbf{1}_{\{N_\alpha(n) >m\}}] \leq \mathbb{E}_{\nu_{u,n}}[N_{\rm tot}(n)^p \mathbf{1}_{\{N_{\rm tot}(n) >m\}}].$$
    If we show that the moments of every order of $N_{\rm tot}(n)$ converge to some finite limit and that it converges in law to some limit random variable $X_u$, we obtain the second statement as well. Indeed, let us call $C(p)$ the limit of the $p$-moment of $N_{\rm tot}(n)$; notice that for every $p \geq 1$, $m \in \mathbb{N}$
    \[\begin{split}
        &\limsup_{n \rightarrow +\infty}\mathbb{E}_{\nu_{u,n}}[N_\alpha(n)^p \mathbf{1}_{\{N_\alpha(n) >m\}}] \leq\\
        &\limsup_{n \rightarrow +\infty}\mathbb{E}_{\nu_{u,n}}[N_{\rm tot}(n)^p \mathbf{1}_{\{N_{\rm tot}(n) >m\}}] \leq \\
        &\lim_{n \rightarrow +\infty}\mathbb{E}_{\nu_{u,n}}[N_{\rm tot}(n)^{pq}]^{1/q} \nu_{u,n}(N_{\rm tot}(n) >m)^{1/q^\prime} =\\
        &C(pq)^{1/q} \mathbb{P}(X_u >m)^{1/q^\prime}.
    \end{split}\]
    Notice that in the last inequality we made use of H\"older inequality for some pair of conjugate indices $(q, q^\prime)$. Therefore, if we fix $p\geq 1$, for every $\epsilon>0$, one can choose $m_\epsilon$ such that
    \begin{align*}
    &\limsup_{n \rightarrow +\infty}\mathbb{E}_{\nu_{u,n}}[N_\alpha(n)^p \mathbf{1}_{\{N_\alpha(n) >m_\epsilon\}}] \leq \epsilon/2,\\
   & \mathbb{E}_{\overline{\nu}_{u}}[N_\alpha^p \mathbf{1}_{\{N_\alpha >m_\epsilon\}}] \leq \epsilon/2
    \end{align*}
    and one gets
    \[\begin{split}
    &\limsup_{n \rightarrow +\infty}|\mathbb{E}_{\nu_{u,n}}[N_\alpha(n)^p]-\mathbb{E}_{\overline{\nu}_{u}}[N_\alpha^p]| \leq \\
    & \lim_{n \rightarrow +\infty}|\mathbb{E}_{\nu_{u,n}}[N_\alpha(n)^p\mathbf{1}_{\{N_\alpha(n) \leq m_\epsilon\}}]-\mathbb{E}_{\overline{\nu}_{u}}[N_\alpha^p \mathbf{1}_{\{N_\alpha \leq m_\epsilon\}}]|+\\
    & \limsup_{n \rightarrow +\infty}\mathbb{E}_{\nu_{u,n}}[N_\alpha(n)^p \mathbf{1}_{\{N_\alpha(n) >m_\epsilon\}}] +\\
    &\mathbb{E}_{\overline{\nu}_{u}}[N_\alpha^p \mathbf{1}_{\{N_\alpha >m_\epsilon\}}] \leq \epsilon.
    \end{split}\]
Since this holds for every $\epsilon>0$, we proved the statement.

We need to show that, under $\nu_{u,n}$, $N_{\rm tot}(n)$ converges in law to a random variable $X_u$ with finite moments of every order and that we have convergence of the moments as well. One can see that the Laplace transform of $N_{\rm tot}(n)$ can be expressed as
$$\mathbb{E}_{\nu_{u,n}}[e^{zN_{\rm tot}(n)}]=\tilde{\rho}^{\rm ss}(u/\sqrt{n})(\tilde{\mathcal{T}}_{u,z,n}^{n}(\mathbf{1})) \quad z \in \CC,$$
where
\[\begin{split} \tilde{\mathcal{T}}_{u,z,n}(\cdot)=&\tilde{K}^*_0(u/\sqrt{n})\cdot \tilde{K}_0(u/\sqrt{n})+\\
&e^{z}\tilde{K}^*_1(u/\sqrt{n}) \cdot \tilde{K}^*_1(u/\sqrt{n}).\end{split}\]
Notice that $\tilde{\mathcal{T}}_{0,z}:=\tilde{\mathcal{T}}_{0,z,n}$ is independent from $n$ and is an analytic perturbation of $\tilde{\mathcal{T}}$. If we pick $z$ small enough in modulus, perturbation theory ensures that $\tilde{\mathcal{T}}_{0,z}$ has $1$ as eigenvalue with maximum modulus with $\ket{\chi^{\rm ss}}\bra{\chi^{\rm ss}}$ as left eigenvector. Let $x_z$ be the corresponding right eigenvector such that 
${\rm Tr}(\rho^{\rm ss} x_z) = \bra{\chi^{\rm ss}}x_z \ket{\chi^{\rm ss}}=1$, and let
$$
x_z=\begin{pmatrix} 1 & a \\ b & c \end{pmatrix}
$$
be its block matrix form with respect to 
the decomposition of $\mathcal{H}_{sa}$ into 
$\mathbb{C}|\chi^{\rm ss}\rangle$ and its orthogonal complement. Then one can prove that 
$a=b=0$ by using the fact that the Kraus operators are of the form
$$K_0=\begin{pmatrix} 1 & 0 \\ 0 & \beta\end{pmatrix}, \quad K_1=\begin{pmatrix} 0 & \gamma \\ 0 & \delta\end{pmatrix}$$ 
for some blocks $\beta, \gamma, \delta $ 
such that $|\beta|^2+|\gamma|^2+|\delta|^2=\mathbf{1}$. Let us first fix $z$ small enough, then for $n$ big enough, $\tilde{\mathcal{T}}_{u,z,n}$ has a unique eigenvalue $\lambda_z(u,n)$ of maximum modulus with corresponding left and right eigenvectors $l_z(u,n)$, $x_z(u,n)$ and one has that
\[\begin{split}&\mathbb{E}_{\nu_{u,n}}[e^{zN_{\rm tot}(n)}] =\\
&\lambda_z(u,n)^n l_z(u,n)(\mathbf{1}) \tilde{\rho}^{\rm ss}(u/\sqrt{n})(x_z(u))+o(n)\end{split}\]
where $o(n)$ is uniform in $u$ (and $z$ in compact sets). Notice that both $l_z(u,n)(\mathbf{1})$ and $\tilde{\rho}^{\rm ss}(u/\sqrt{n}) (x_z(u))$ converge to $1$ for $n \rightarrow +\infty$; regarding the behaviour of $\lambda_z(u,n)$ consider the Taylor expansion up to second order in $u$ around $0$:
$$\lambda_z(u,n)=1+\frac{u}{\sqrt{n}}\lambda^{(1)}_z+\frac{u^2}{n}\lambda^{(2)}_z + o(n^{-1}).$$
We can choose $l_z(u,n)$ and $x_z(u,n)$ such that $\Tr(l_z(u,n)x_z(u,n))\equiv 1$, therefore differentiating $\Tr(l_z(u,n)\tilde{\mathcal{T}}_{z,u,n}(x_z(u,n)))=\lambda_z(u,n)$ at $0$ one gets
$$\lambda^{(1)}_z=\Tr(l_z \partial_u\tilde{\mathcal{T}}_{z|u=0} x_z)$$
which can be easily seen to be $0$. Summing up, we proved that
$$\lim_{n\rightarrow +\infty}\mathbb{E}_{\nu_{u,n}}[e^{zN_{\rm tot}(n)}]=e^{\lambda_z^{(2)}}$$
uniformly in $u$ and in $z$ in compact small neighborhoods of $0$. Since for every $u$, $f_n(z):=\mathbb{E}_{\nu_{u,n}}[e^{zN_{\rm tot}(n)}]$ are analytic functions around $0$, then we can deduce that $e^{\lambda_z^{(2)}}$ is analytic as well and that we have uniform convergence (on compact small neighborhoods of $0$) of all the derivatives, which consists exactly in the convergence of moments of all orders and we are done.
\end{proof}

\section{Achievability of the QCRB under additional assumptions 
} \label{app:finalestim}

In this section we present a proof of Theorem \ref{thm:opt} and we comment on the gap between the hypothesis that we need to assume and what we proved in Proposition \ref{prop:grweps}.




\begin{proof}[Proof of Theorem \ref{thm:opt}]
By hypothesis, we know that $\theta$ belongs to the (random) confidence interval $$I_n = (\tilde{\theta}_n-n^{-1/2 +\epsilon}, ~ \tilde{\theta}_n+n^{-1/2 +\epsilon})$$
with high probability. In order to prove the statement, it suffices to show that 
$$
|(\mathbb{E}_\theta[e^{ia \sqrt{n} (\hat{\theta}_n-\theta)}|\tilde{\theta}_n]-e^{-\frac{a^2}{2f_\theta}})|\chi_{\{\theta \in I_n\}}(\tilde{\theta}_n)
$$ 
can be upper bounded uniformly in $\tilde{\theta}_n$ by a sequence converging to $0$. Indeed, notice that for every $a \in \mathbb{R}$
\[\begin{split}
\mathbb{E}_\theta[e^{ia \sqrt{n} (\hat{\theta}_n-\theta)}]&=e^{-\frac{a^2}{2f_\theta}}\mathbb{P}_\theta(\theta \in I_n) \\
&+\int_{\theta \in I_n}p_\theta(d\tilde{\theta}_n)(\mathbb{E}_\theta[e^{ia \sqrt{n} (\hat{\theta}_n-\theta)}|\tilde{\theta}_n]-e^{-\frac{a^2}{2f_\theta}})\\
&+\int_{\theta \notin I_n}p_\theta(d\tilde{\theta}_n)\mathbb{E}_\theta[e^{ia \sqrt{n} (\hat{\theta}_n-\theta)}|\tilde{\theta}_n].\\
\end{split}
\]
Since $\mathbb{P}_\theta(\theta \notin I_n)$ goes to zero, the first term goes to $e^{-\frac{a^2}{2f_\theta}}$ and the third one vanishes. If we show that the second term vanishes then we obtain the convergence in distribution of $\sqrt{n} (\hat{\theta}_n-\theta)$ as in the statement.

First of all notice that for every $a \in \mathbb{R}$, for every $\tilde{\theta} \in \Theta$
\[\begin{split}
&\mathbb{E}[e^{ia(\overline{Y}_{n,\theta, \tilde{\theta}}-\sqrt{n}(\theta-\tilde{\theta}))}]=\\
&e^{\lambda_{\rm tot}(\tilde{\theta})(\sqrt{n}(\theta-\tilde{\theta})+\tau_n)^2 (e^{ia/(2\lambda_{\rm tot}(\tilde{\theta}) \tau_n)} -1)-ia \left (\frac{\tau_n}{2}+\sqrt{n}(\tilde{\theta}-\theta) \right )}.
\end{split}\]
Therefore for every $a \in \RR$ one has
$$\lim_{n \rightarrow +\infty} \sup_{\tilde{\theta}:|\tilde{\theta}-\theta|<n^{-1/2+\epsilon}}\left |\mathbb{E}[e^{ia(\overline{Y}_{n,\theta, \tilde{\theta}}-\sqrt{n}(\theta-\tilde{\theta}))}]-e^{-\frac{a^2}{f_\theta}}\right |=0$$

We will denote by $u=\sqrt{n}(\theta-\tilde{\theta}_n)$. The definition of $\hat{\theta}_n$ implies that
\[\sqrt{n}(\hat{\theta}_n-\theta)=Y_n-u.\]
Therefore we obtain that
\[\begin{split}
    &\sup_{\tilde{\theta}:|\tilde{\theta}-\theta|<n^{-1/2+\epsilon}}\left |\mathbb{E}_{\theta}[e^{ia(Y_n-u)}|\tilde{\theta}_n=\tilde{\theta}]-e^{-\frac{a^2}{2f_\theta}}\right | \leq \\
    &\sup_{\tilde{\theta}: |\tilde{\theta}-\theta|<n^{-1/2+\epsilon}}|\mathbb{E}_\theta[e^{iaY_n}|\tilde{\theta}_n=\tilde{\theta}]-\mathbb{E}[e^{ia\overline{Y}_{n,\theta, \tilde{\theta}}}]|+\\
    &\sup_{\tilde{\theta}:|\tilde{\theta}-\theta|<n^{-1/2+\epsilon}}\left |\mathbb{E}[e^{ia(\overline{Y}_{n,\theta, \tilde{\theta}}-\sqrt{n}(\theta-\tilde{\theta}))}]-e^{-\frac{a^2}{f_\theta}}\right |.
\end{split}\]
Hence,
$$\lim_{n \rightarrow +\infty} \sup_{\tilde{\theta}:|\tilde{\theta}-\theta|<n^{-1/2+\epsilon}}\left |\mathbb{E}_{\theta}[e^{ia(Y_n-u)}|\tilde{\theta}_n=\tilde{\theta}]-e^{-\frac{a^2}{2f_\theta}}\right |=0$$
and we are done.
\end{proof}
Let us briefly comment on the relationship between the additional hypothesis we introduced (Eq. \eqref{eq:hypo}) and the result in Proposition \ref{prop:grweps}; let us consider the following family of stochastic processes: for every $n \in \mathbb{N}$, $\tilde{\theta}$, $\theta \in \Theta$, consider the collection of independent random variables
$$\overline{N}_{n,\theta, \tilde{\theta},\alpha} \sim \text{Poisson}(\lambda_\alpha(\tilde{\theta}) (\sqrt{n}(\theta-\tilde{\theta})+\tau_n)^2), \quad \alpha \in \mathcal{P}$$
and their law $\overline{\nu}_{n, \theta, \tilde{\theta}}$ on $\mathbb{N}^{\cal P}.$ (together with the $\sigma$-field of cylindrical sets). Notice that
$\sum_{\alpha \in {\cal P}}\overline{N}_{n, \theta,\tilde{\theta}, \alpha}$ converges in mean square and has the same law as $\overline{N}_{n, \theta,\tilde{\theta}}.$

Inspecting the proof, one can notice that the convergence in the statement of Proposition \ref{prop:grweps} holds uniformly in a small neighborhood of the reference parameter $\theta_0$, therefore we can restate it in the following way: for $\epsilon$ small enough and for every finite collections of excitation patterns counts $\mathbf{m}$ one has
$$\lim_{n \rightarrow +\infty}\sup_{\tilde{\theta}:|\tilde{\theta}-\theta|<n^{-1/2+\epsilon}} \left | \frac{\nu_{\sqrt{n}(\theta-\tilde{\theta}),n}(B_{\mathbf{m}}(n))}{\overline{\nu}_{\theta, \tilde{\theta},n}(\mathbf{m})}-1\right |=0.$$
If we were able to show that the previous result still holds integrating with respect to $\overline{\nu}_{\theta, \tilde{\theta},n}$, i.e.
\begin{equation}\label{eq:L1conv}
\begin{split}
    &\lim_{n \rightarrow +\infty}\sup_{\tilde{\theta}:|\tilde{\theta}-\theta|<n^{-1/2+\epsilon}} \sum_{\mathbf{m}}\overline{\nu}_{\theta, \tilde{\theta},n}(\mathbf{m})\left | \frac{\nu_{\sqrt{n}(\theta-\tilde{\theta}),n}(B_{\mathbf{m}}(n))}{\overline{\nu}_{\theta, \tilde{\theta},n}(\mathbf{m})}-1\right |=\\
&\lim_{n \rightarrow +\infty}\sup_{\tilde{\theta}:|\tilde{\theta}-\theta|<n^{-1/2+\epsilon}} \sum_{\mathbf{m}}\left | \nu_{\sqrt{n}(\theta-\tilde{\theta}),n}(B_{\mathbf{m}}(n))-\overline{\nu}_{\theta, \tilde{\theta},n}(\mathbf{m})\right |=0,
\end{split}\end{equation}
this would imply the condition in Eq. \eqref{eq:hypo} (it can be seen using the fact that $\sum_{\alpha \in {\cal P}}\overline{N}_{n, \theta,\tilde{\theta}, \alpha}$ and $\overline{N}_{n, \theta,\tilde{\theta}}$ have the same law). Unfortunately, we are not able to prove this.

The last remark we make is that Eq. \eqref{eq:L1conv} cannot be true unless
$$\lim_{n \rightarrow +\infty} \sup_{\tilde{\theta}:|\tilde{\theta}-\theta|<n^{-1/2+\epsilon}} \overline{\nu}_{\theta, \tilde{\theta},n}(G(n)^C)=0,$$
where $G(n)$ is the set of all $\mathbf{m}$'s such that $\nu_{\sqrt{n}(\theta-\tilde{\theta}),n}(B_{\mathbf{m}}(n))>0$. We can prove that this is indeed the case.
\begin{lemma}
If $\epsilon$ is small enough, then
$$\lim_{n \rightarrow +\infty} \sup_{\tilde{\theta}:|\tilde{\theta}-\theta|<n^{-1/2+\epsilon}} \overline{\nu}_{\theta, \tilde{\theta},n}(G(n)^C)=0.$$
\end{lemma}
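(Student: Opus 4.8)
The plan is to show that, under $\overline{\nu}_{\theta,\tilde\theta,n}$, a configuration $\mathbf m$ of pattern counts is, with overwhelming probability, small enough to be packed into a string of length $n$ together with the mandatory runs of $n^\gamma$ zeros, and supported only on patterns that can actually occur; for such $\mathbf m$ one has $B_{\mathbf m}(n)\neq\emptyset$ and, I will argue, $B_{\mathbf m}(n)$ contains a trajectory of positive $\nu_{\sqrt n(\theta-\tilde\theta),n}$-probability, so $\mathbf m\in G(n)$.

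First I would record two uniform facts about the intensities $\lambda_\alpha(\tilde\theta)$. Because the Kraus operators, the stationary state and the spectral gap of $\tilde{\mathcal T}_{\tilde\theta}$ depend analytically on $\tilde\theta$ and the dynamics is primitive near $\theta$ (Hypothesis \ref{hyp:2}), the identity behind the proof of Lemma \ref{lem:alternative}, namely $\sum_{|\alpha|=l}\lambda_\alpha(\tilde\theta)=\Tr\!\big((\tilde{\mathcal T}^{l-2}_{\tilde\theta}(X_{\tilde\theta})-\langle\chi^{\rm ss}_{\tilde\theta}|X_{\tilde\theta}\chi^{\rm ss}_{\tilde\theta}\rangle\mathbf 1)\,Y_{\tilde\theta}\big)$ together with $\langle\chi^{\rm ss}_{\tilde\theta}|X_{\tilde\theta}\chi^{\rm ss}_{\tilde\theta}\rangle=|\langle\tilde K_{\tilde\theta,1}\chi^{\rm ss}_{\tilde\theta}|\chi^{\rm ss}_{\tilde\theta}\rangle|^2=0$, yields a geometric bound $\sum_{|\alpha|\ge L}\lambda_\alpha(\tilde\theta)\le C\lambda^{L}$ with $C<\infty$ and $\lambda<1$ that may be chosen independent of $\tilde\theta$ for $|\tilde\theta-\theta|<n^{-1/2+\epsilon}$; in particular $\sum_\alpha e^{s|\alpha|}\lambda_\alpha(\tilde\theta)\le C_s<\infty$ for small $s>0$, uniformly. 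Moreover $\lambda_{\rm tot}(\tilde\theta)=\sum_\alpha\lambda_\alpha(\tilde\theta)$ is continuous and strictly positive at $\theta$ (since $f_\theta=4\lambda_{\rm tot}(\theta)>0$), so, writing $u=\sqrt n(\theta-\tilde\theta)\in(-n^\epsilon,n^\epsilon)$ and recalling $\tau_n=n^{3\epsilon}$, the total intensity $\Lambda_n(\tilde\theta):=\lambda_{\rm tot}(\tilde\theta)(u+\tau_n)^2$ satisfies $c\,n^{6\epsilon}\le\Lambda_n(\tilde\theta)\le C\,n^{6\epsilon}$ uniformly.

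Next I would apply concentration. Under $\overline{\nu}_{\theta,\tilde\theta,n}$ the total pattern length $T_n:=\sum_\alpha|\alpha|\,\overline N_{n,\theta,\tilde\theta,\alpha}$ dominates the total count $S_n:=\sum_\alpha\overline N_{n,\theta,\tilde\theta,\alpha}$ and has moment generating function $\mathbb E[e^{s'T_n}]=\prod_\alpha\exp\!\big(\lambda_\alpha(\tilde\theta)(u+\tau_n)^2(e^{s'|\alpha|}-1)\big)\le\exp\!\big(C_s(u+\tau_n)^2\big)$ for $0<s'<s$. The resulting Chernoff bound gives, for any small $\delta>0$, $\sup_{|\tilde\theta-\theta|<n^{-1/2+\epsilon}}\overline{\nu}_{\theta,\tilde\theta,n}(T_n>n^{6\epsilon+\delta})\to0$ super-polynomially fast. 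On the complementary event the patterns of $\mathbf m$ and the $n^\gamma$-gaps between them occupy total length at most $T_n+(S_n-1)n^\gamma\le T_n(1+n^\gamma)\le 2\,n^{6\epsilon+\delta+\gamma}$, which is $<n$ for $n$ large provided $\epsilon$ and $\delta$ are small enough that $6\epsilon+\delta+\gamma<1$; hence $B_{\mathbf m}(n)\neq\emptyset$. Also, any configuration charging a pattern with $\lambda_\alpha(\tilde\theta)=0$ has $\overline{\nu}_{\theta,\tilde\theta,n}$-measure zero, so we may assume $\mathrm{supp}(\mathbf m)\subseteq\mathcal P_+(\tilde\theta):=\{\alpha:\lambda_\alpha(\tilde\theta)>0\}$.

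The last step, which I expect to be the main obstacle, is to upgrade ``$B_{\mathbf m}(n)\neq\emptyset$'' to ``$\nu_{\sqrt n(\theta-\tilde\theta),n}(B_{\mathbf m}(n))>0$''. For each $\alpha\in\mathcal P_+(\tilde\theta)$, Theorem \ref{thm:trajs} shows $\nu_{u,n}(B_{\bm{\delta}^{(\alpha)}}(n))\to\lambda_\alpha(\tilde\theta)u^2e^{-\lambda_{\rm tot}(\tilde\theta)u^2}>0$ for fixed $u\ne0$, so each such $\alpha$ is carried by a positive-probability trajectory; concatenating such blocks, separated by long runs of $0$s and padded with $0$s, and using primitivity of the system--absorber dynamics so that the intervening powers of $\tilde{\mathcal T}_{\tilde\theta}$ act strictly positively on the relevant operators, one obtains a trajectory in $B_{\mathbf m}(n)$ of positive probability. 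Making this genuinely uniform in $u$ and $\tilde\theta$ is the delicate point: one must rule out the finitely many exceptional values of $u$ at which $\nu_{u,n}(B_{\mathbf m}(n))$ could vanish even though its $n\to\infty$ limit does not, which follows because $u\mapsto\nu_{u,n}(B_{\mathbf m}(n))$ is real-analytic and, for $n$ large, not identically zero. Combining the contributions then yields $\overline{\nu}_{\theta,\tilde\theta,n}(G(n)^C)\le\overline{\nu}_{\theta,\tilde\theta,n}(T_n>n^{6\epsilon+\delta})+\overline{\nu}_{\theta,\tilde\theta,n}(\mathrm{supp}(\mathbf m)\not\subseteq\mathcal P_+(\tilde\theta))\to0$, uniformly in $\tilde\theta$.
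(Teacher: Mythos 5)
Your concentration estimate is essentially the paper's argument, repackaged. The paper splits the bad event into two pieces: ``some charged pattern is longer than $\eta\log_2 n$'' (controlled via the geometric decay $\sum_{|\alpha|\ge L}\lambda_\alpha(\tilde\theta)\lesssim \lambda^{L}$, which is exactly the bound you extract from $\sum_{|\alpha|=l}\lambda_\alpha=\Tr\big(\tilde{\mathcal T}^{\,l-2}_{\tilde\theta}(X_{\tilde\theta})Y_{\tilde\theta}\big)$ together with $\langle\chi^{\rm ss}_{\tilde\theta}|X_{\tilde\theta}\chi^{\rm ss}_{\tilde\theta}\rangle=0$ and the uniform spectral gap) and ``too many short patterns to pack into length $n$'' (a Poisson tail bound on the total count). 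You instead run a single Chernoff bound on the weighted length $T_n=\sum_\alpha|\alpha|\,\overline N_{n,\theta,\tilde\theta,\alpha}$ using the uniform exponential moment $\sum_\alpha e^{s|\alpha|}\lambda_\alpha(\tilde\theta)<\infty$; this is equivalent and marginally cleaner. One small repair: besides $6\epsilon+\delta+\gamma<1$ you should also impose $6\epsilon+\delta<\gamma$, so that on $\{T_n\le n^{6\epsilon+\delta}\}$ no charged pattern can contain $n^\gamma$ consecutive zeros --- the paper's description of $G(n)$ requires this structural condition in addition to the packing condition, and your stated constraints do not imply it.

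The genuine divergence is your last step, and that is where the gap lies. The paper never proves positivity: it simply \emph{identifies} $G(n)$ with the set of $\mathbf m$ satisfying the two combinatorial conditions and moves on, so by the paper's standards you are done once $\mathbf m$ is structurally feasible. Your attempt to actually establish $\nu_{\sqrt n(\theta-\tilde\theta),n}(B_{\mathbf m}(n))>0$ is more honest but does not close. Real-analyticity gives, for each fixed $n$ and $\mathbf m$, a finite exceptional set $E_{n,\mathbf m}$ of local parameters at which the probability vanishes; the lemma, however, takes a supremum over $\tilde\theta$, so for each \emph{fixed} $\tilde\theta$ you must control the total $\overline\nu_{\theta,\tilde\theta,n}$-mass of $\{\mathbf m:\sqrt n(\theta-\tilde\theta)\in E_{n,\mathbf m}\}$, and the per-$\mathbf m$ finiteness of $E_{n,\mathbf m}$ says nothing about how many $\mathbf m$ are simultaneously exceptional at a given value. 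Likewise, Theorem \ref{thm:trajs} and Proposition \ref{prop:grweps} give positivity only for $n\ge n_0(\mathbf m)$ with $n_0$ depending on $\mathbf m$, which cannot be invoked uniformly over the growing family of feasible configurations. To make this step rigorous you would need a direct, non-asymptotic lower bound --- e.g.\ exhibiting one trajectory in $B_{\mathbf m}(n)$ whose Kraus product does not annihilate the support of $\tilde\rho^{\rm ss}_u$ --- rather than a limit theorem; alternatively, do as the paper does and take the combinatorial characterisation of $G(n)$ as the working definition.
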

\begin{proof}Notice that $G(n)$ is the set of all those patterns counts ${\bf m}=(m_{\alpha^{(1)}}, \dots, m_{\alpha^{(k)}})$ such that the following conditions are satisfied
\begin{enumerate}
\item $\alpha^{(i)}$ does not contain more than $n^\gamma$ consecutive $0$s for every $i=1,\dots, k$;
\item $\sum_{i=1}^{k}m_{\alpha^{(i)}}|\alpha^{(i)}| + (k-1)n^{\gamma} \leq n$.
\end{enumerate}
We recall that $|\alpha|$ is the length of the pattern $\alpha.$

\bigskip Let us consider a positive number $\eta<1-\gamma$ and notice that one has $\tilde{G}(n) \subseteq G(n)$, where $\tilde{G}(n)$ is the set of all those patterns counts ${\bf m}=(m_{\alpha^{(1)}}, \dots, m_{\alpha^{(k)}})$ such that
\begin{enumerate}
\item $|\alpha^{(i)}| \leq \eta \log_2(n)$ for every $i=1,\dots, k$ and
\item $\sum_{i=1}^{k}m_{\alpha^{(i)}}|\alpha^{(i)}|+(k-1)n^\gamma \leq n$.
\end{enumerate}
We denote by $A(n)$ the set of patterns satisfying $1.$ and $B(n)$ the set of patterns satisfying $2.$, hence $\tilde{G}(n)=A(n) \cap B(n)$. In order to prove the statement, it suffices to show that
$$
\lim_{n \rightarrow +\infty} \sup_{\tilde{\theta}:|\tilde{\theta}-\theta|<n^{-1/2+\epsilon}} \overline{\nu}_{\theta, \tilde{\theta},n}(\tilde{G}(n)^C)=0.$$
Notice that $\tilde{G}(n)^C=A(n)^C \sqcup (B(n)^C\cap A(n))$; let us first show that
$$\lim_{n \rightarrow +\infty} \inf_{\tilde{\theta}:|\tilde{\theta}-\theta|<n^{-1/2+\epsilon}} \overline{\nu}_{\theta, \tilde{\theta},n}(A(n)) \rightarrow 1.$$
Notice that, 
\[\begin{split}
    &\lim_{n \rightarrow +\infty} \inf_{\tilde{\theta}:|\tilde{\theta}-\theta|<n^{-1/2+\epsilon}}\overline{\nu}_{\theta, \tilde{\theta},n}(A(n))=\\
    &\lim_{n \rightarrow +\infty} \inf_{\tilde{\theta}:|\tilde{\theta}-\theta|<n^{-1/2+\epsilon}}e^{-\left (\sum_{|\alpha| > \eta \log_2(n)} \lambda_\alpha \right ) (\sqrt{n}(\theta-\tilde{\theta})-\tau_n)^2}=1,
    \end{split}\]
since, for every $\tilde{\theta}$ such that $|\theta-\tilde{\theta}|<n^{-1/2+\epsilon}$, one has
$$\left (\sum_{|\alpha| > \eta \log_2(n)} \lambda_\alpha \right ) (\sqrt{n}(\theta-\tilde{\theta})-\tau_n)^2 \lesssim n^{\eta \log_2(\lambda) + 3\epsilon}\rightarrow 0$$
if $\epsilon <-\eta \log_2(\lambda)/3$.
Let us now study the probability of $B(n)^C \cap A(n)$: first notice that
\begin{eqnarray*}
&&B(n)^C \cap A(n) \subseteq \\
&&\{ {\bf m} :|\alpha^{(i)}| \leq \eta\log_2(n), \, \sum_{i=1}^{k}m_{\alpha^{(i)}}|\alpha^{(i)}|+2^{\eta \log_2(n)}n^\gamma > n\},
\end{eqnarray*}
because $2^{\eta \log_2(n)}$ upper bounds the cardinality of all the patterns of length smaller or equal than $\eta \log_2(n)$. Therefore
\[\begin{split}
&\overline{\nu}_{\theta, \tilde{\theta},n}(B(n)^C \cap A(n)) \leq \overline{\nu}_{\theta, \tilde{\theta},n}(C(n))
\\
&\leq \overline{\nu}_{\theta, \tilde{\theta},n}\left ( \left\{ {\bf m}:\sum_{|\alpha| \leq \eta\log_2(n)}m_{\alpha} > \frac{n-n^{\eta+ \gamma}}{\eta \log(n)} \right\}\right ).
\end{split}\]
where $C(n)$ is the set of pattern counts such that $|\alpha^{(i)}| \leq \eta\log_2(n)$ and $\sum_{i=1}^{k}m_{\alpha^{(i)}}\eta \log_2(n)+n^{\eta +\gamma}n > n$.
The last term amounts to the probability that a Poisson random variable of parameter $\sum_{|\alpha| \leq \eta \log_2(n)}\lambda_\alpha (\sqrt{n}(\theta-\tilde{\theta})-\tau_n)^2 \lesssim n^{3\epsilon}$ is bigger than something that grows as
$
n/(\eta \log(n)).
$
Such a probability goes to $0$ uniformly in $\tilde{\theta}$ if $3\epsilon<1$ and we are done.

\end{proof}

\vfill
\vspace{20mm}
\mbox{}


%

\end{document}